\documentclass[9pt, notitlepage]{article}

\usepackage[letterpaper,margin=1.00in]{geometry}
\usepackage{amsmath, amssymb, amsthm, amsfonts}
\usepackage{thmtools,thm-restate}
\usepackage{bbm}

\usepackage{ amssymb }
\usepackage{pifont}
\usepackage{color, colortbl}

\usepackage{amsthm,amsmath,xcolor,graphicx}
\usepackage{thm-restate}
\usepackage{bbm}
\usepackage{xcolor}
\usepackage{appendix}
\usepackage{mathbbol}
\usepackage{comment}
\usepackage[shortlabels]{enumitem}
\usepackage{comment}
\usepackage{titling}

%\usepackage[utf8]{inputenc}
%\usepackage[T1]{fontenc}
\iffalse
\fi

\usepackage[textsize=tiny,backgroundcolor=orange!20]{todonotes}

\definecolor{darkgreen}{rgb}{0,0.5,0}
\usepackage{hyperref}
\hypersetup{
    unicode=false,          % non-Latin characters in Acrobat’s bookmarks
    colorlinks=true,        % false: boxed links; true: colored links
    linkcolor=blue,          % color of internal links (change box color with linkbordercolor)
    citecolor=darkgreen,        % color of links to bibliography
    filecolor=magenta,      % color of file links
    urlcolor=cyan           % color of external links
}
\usepackage{etoolbox}
\usepackage{xspace}
 \usepackage{algorithm}
\usepackage[noend]{algpseudocode}
\usepackage[labelfont=bf]{caption}
\usepackage[noabbrev,capitalize,nameinlink]{cleveref}
\crefname{equation}{}{} %remove ``Equation''    
 %Oxford comma
\AtBeginEnvironment{appendices}{\crefalias{section}{appendix}} %appendices

\usepackage[color,final]{showkeys} %add in 'final' into parameter to remove showkeys

% showkeys font
\colorlet{refkey}{orange!20}
\colorlet{labelkey}{blue!30}

% ------   Theorem Styles -------
\numberwithin{equation}{section}
\newtheorem{theorem}{Theorem}[section]

\newtheorem{lemma}[theorem]{Lemma}
\newtheorem{claim}[theorem]{Claim}

\newtheorem{observation}[theorem]{Observation}
\crefname{claim}{Claim}{Claims}
\Crefname{remark}{Remark}{Remarks}
\Crefname{observation}{Observation}{Observations}

\newtheorem*{question*}{Question}

\theoremstyle{definition}
\newtheorem{definition}[theorem]{Definition}

\newtheorem*{definition*}{Definition}

\theoremstyle{remark}
\newtheorem*{remark}{Remark}

% ----- Delimiters ----

\newcommand{\poly}{\mathrm{poly}}

\newcommand{\fC}{\mathcal{C}}
\newcommand{\fH}{\mathcal{H}}

\newcommand{\eps}{\varepsilon}

\newcommand{\E}{\mathbb{E}}

\newcommand{\diam}{\textrm{diam}}

\newcommand{\fR}{\mathcal{R}}
\newcommand{\fB}{\mathcal{B}}

\newcommand{\fA}{\mathcal{A}}

\newcommand{\polylog}{\operatorname{\text{{\rm polylog}}}}

\newcommand{\congest}{$\mathsf{CONGEST}\,$}
\newcommand{\local}{$\mathsf{LOCAL}\,$}
\newcommand{\pram}{$\mathsf{PRAM}\,$}

\newcommand{\CONGEST}{$\mathsf{CONGEST}$\xspace}

\newcommand{\LOCAL}{$\mathsf{LOCAL}$\xspace}

\newcommand{\inv}{\mathrm{inv}}

\DeclareMathOperator{\Prob}{\mathrm{Pr}}
% \newcommand{\R}{\mathbb{R}}
% \newcommand{\N}{\mathbb{N}}

% expected value

%\author{(Anonymized)}

\newcommand{\hopdiam}{\mathrm{HopDiameter}}

\newcommand{\Tldc}{T}
\newcommand{\Dldc}{D}

\newcommand{\xiavg}{\xi_\text{AVG}}

\newcommand{\diamwldc}{\Dldc}

\setlength{\emergencystretch}{3em}

\title{Deterministic Distributed Sparse and Ultra-Sparse Spanners \\ and Connectivity Certificates
}
\author{
  Marcel Bezdrighin\\
  \small{ETH Zurich} \\
  \small{mbezdrighin@student.ethz.ch}
  \and
  Michael Elkin\\
  \small{Ben-Gurion University}\\
  \small{elkinm@cs.bgu.ac.il}
  \and
  Mohsen Ghaffari \\
  \small{ETH Zurich}\\
  \small{ghaffari@inf.ethz.ch}
  \and
  Christoph Grunau \\
  \small{ETH Zurich}\\
  \small{cgrunau@inf.ethz.ch}
  \and
  Bernhard Haeupler \\
  \small{ETH Zurich and CMU}\\
  \small{bernhard.haeupler@inf.ethz.ch}
  \and
  Saeed Ilchi  \\
  \small{ETH Zurich} \\
  \small{saeed.ilchi@inf.ethz.ch}
  \and
  Václav Rozhoň \\
\small{ETH Zurich} \\
\small{rozhonv@inf.ethz.ch}
}

\date{}

\begin{document}
%\twocolumn
\maketitle
\begin{abstract}
This paper presents efficient distributed algorithms for a number of fundamental problems in the area of graph sparsification:

\begin{itemize}
\item We provide the first deterministic distributed algorithm that computes an ultra-sparse spanner in $\polylog(n)$ rounds in weighted graphs. Concretely, our algorithm outputs a spanning subgraph with only $n+o(n)$ edges in which the pairwise distances are stretched by a factor of at most $O(\log n \;\cdot\; 2^{O(\log^* n)})$. 

\item We provide a  $\polylog(n)$-round deterministic distributed algorithm that computes a spanner with stretch $(2k-1)$ and $O(nk + n^{1 + \frac{1}{k}} \log k)$ edges in unweighted graphs and with $O(n^{1 + \frac{1}{k}} k)$ edges in weighted graphs. 

\item We present the first $\polylog(n)$ round randomized distributed algorithm that computes a sparse connectivity certificate. For an $n$-node graph $G$, a certificate for connectivity $k$ is a spanning subgraph $H$ that is $k$-edge-connected if and only if $G$ is $k$-edge-connected, and this subgraph $H$ is called sparse if it has $O(nk)$ edges. Our algorithm achieves a sparsity of $(1 + o(1))nk$ edges, which is within a $2(1 + o(1))$ factor of the best possible. 
\end{itemize}

\end{abstract}

% \newpage
% \tableofcontents
% \bigskip
% \thispagestyle{empty}

% \clearpage

\section{Introduction}

This paper studies \emph{graph sparsification} problems, particularly connectivity-preserving and distance-preserving problems. Concretely, given a network $G=(V, E)$, we are interested in computing a spanning subgraph $H=(V, E')$, where the subset $E'\subset E$ has as few as possible edges, while $H$ preserves some particular connectivity or distance properties of $G$, that we will soon elaborate on. Such sparsifications have various applications. The prototypical usage is that, instead of operating on the entire network $G$, we now operate on the sparser network $H$ and this would save various costs. For instance, this cost might be the monetary price paid to the network provider, the total amount of communication and thus the energy consumed, or the overall computational complexity. All of these can be proportional to the number of edges in use, and thus operating on a sparser network saves costs. 

The two families of properties that we focus on in this paper are  \emph{distances} and \emph{connectivity}. The former concept is usually called \emph{spanner}, while the latter is often studied under titles such as \emph{sparse $k$-connectivity certificates} and \emph{approximation algorithms for $k$-edge-connected subgraphs}. Below, we discuss two categories separately, in each case first reviewing the definitions and the state of the art and then stating our contributions. Before that, let us briefly recall the standard message-passing model of distributed computation, which we use throughout.

\paragraph{Distributed Model.} We work with the standard synchronous message-passing model of distributed computation, often called the \CONGEST model~\cite{peleg2000distributed}. Here, the network is abstracted as an $n$-node graph $G=(V, E)$ where each node represents one computer/processor in the network. Each processor/node has a unique $O(\log n)$-bit identifier. Communications take place in synchronous rounds, where per round each node can send one $O(\log n)$-bit message to each of its neighbors. We comment that if the model is relaxed to allow unbounded message sizes, then it is referred to as the \LOCAL model. Throughout the paper, we will work with the \CONGEST model but we will discuss in the related work some results in the relaxed \LOCAL model.
Generally, the input and output are represented in a distributed fashion. At the start of the algorithm, the processors/nodes do not know the topology of the network $G$; each of them knows just its own neighbors and is able to communicate with each of those neighbors once per round. At the end of the algorithm, each node should know its own part of the output, e.g., when discussing sparsification problems, each node should know which of its edges is the computed subgraph $H$ of $G$.

\subsection{Spanners: Background}
\label{subsec:spanners-background}
Graph spanners were introduced by Peleg and Sch{\"{a}}ffer \cite{peleg1989graph}. Since then, spanners and efficient distributed, parallel, and sequential constructions for them have been studied extensively~\cite{althofer1993sparse,AwerbuchBCP93,cohen93,AingworthCIM99,DorHZ00,elkin20041eps,elkin2005computing,thorup2006spanners,baswana2007simple,Pettie09,pettie2010distributed,BaswanaKMP10,chechik2013new,AbboudB15,elkin2018efficient,grossman2017improved,ghaffari2018derandomizing,censor2018distributed,ParterY18,elkin2019near}. 

% \smallskip
\begin{definition}[$\alpha$-spanner]
For a graph $G = (V,E)$, a subgraph $H=(V,E')$ is a \emph{spanner with stretch} $\alpha>1$ ---or simply an $\alpha$-\textit{spanner}--- if and only if $d_{H}(u,v)\leq \alpha\cdot d_G(u,v)$ for all $u,v \in V$. We assume that $G$ is undirected, and we are interested in minimizing $|E'|$ in terms of integers $n$ and $\alpha$.
\end{definition}

Of particular interest in this paper are \emph{sparse spanners}, which have $O(n)$ edges, and \emph{ultra-sparse spanners}, which have $n + o(n)$ edges. More concretely, an ultra-sparse spanner of an $n$-vertex graph has at most $n + n/t$ edges for some $t>1$. That is, it is made of $n-1$ edges---as necessary for a spanning tree---and only $n/t +1 $ extra edges. As $t$ gets larger, the structure gets closer and closer to a spanning tree, which is the minimal subgraph to keep the graph connected.

% \smallskip
\paragraph{Motivation and Applications.} Spanners have found numerous applications, including in packet routing, constructing synchronizers,  and algorithmics of various graph problems. See e.g.,~\cite{AwerbuchBCP93,AwerbuchP92,cohen93,thorup2005approximate,ThorupZ01routing,GavoillePPR04,LenzenL18} and the recent survey of Ahmed et al.~\cite{ahmed2020graph}. 

Again, of particular interest in this paper are sparse and ultra-sparse spanners. A direct application of sparse and ultra-sparse spanners is that we can consider them as a sparse skeleton, i.e., a subgraph that retains connectivity, with an asymptotically optimal number of edges. Furthermore, ultra-sparse spanners have found important applications in the context of solving symmetrically diagonally-dominant (SDD) linear systems, primarily as ultra-sparse spanners can be seen as one spanning tree and only few extra edges. The applications include 
estimating effective resistances, and computing spectral sparsifiers or ultra-sparsifiers for cuts and flows~\cite{SS11,BGKMPT14}. 
Since cuts, flows, and distances are easy in trees and since ultra-sparse spanners are trees together with a very small number of extra edges, these ultra-sparse structures have emerged as a powerful tool for (recursively) reducing the complexity of numerous fundamental optimization and graph problems. These include maximum flow~\cite{peng2016approximate}, min-cost flow, lossy flow problems, several variants of min-cut problems~\cite{DS08}, as well as approximate shortest paths and transshipment problems~\cite{Li20}.  

% \smallskip
\paragraph{State of the art, small stretch spanners. }
We first review the state of the art for spanners in the general case of $k$ and then focus on the particularly interesting regime of sparse and ultra-sparse spanners. 
% \mohsen{Double check if we are missing weighted spanner results, and structure weighted unweighted spanners}

Alth{\"o}fer et al.~\cite{althofer1993sparse} provided a simple greedy algorithm for constructing $(2k-1)$-spanner with $n^{1 + \frac{1}{k}}$ edges, in unweighted and weighted graphs. This size bound is tight conditioned on the Erd\H{o}s girth conjecture: There is a family of graphs with girth $2k+2$ and $\Omega(n^{1+1/k})$ edges~\cite{erdos1963extremal}. 
Note that by setting $k = \omega({\log n}) $, one obtains ultra-sparse spanners with stretch $\omega(\log n)$.

In the distributed setting, we discuss prior results in two parts of randomized and deterministic algorithms. Baswana and Sen~\cite{baswana2007simple,pettie2010distributed} presented an $O(k)$ rounds randomized algorithm for $(2k-1)$-stretch unweighted spanners with expected size $O(nk + n^{1 + \frac{1}{k}} \log k)$.\footnote{In the original paper~\cite{baswana2007simple}, it is claimed that the spanner has $O(nk + n^{1 + \frac{1}{k}})$ edges. Pettie~\cite{pettie2010distributed} noticed a gap in their argument and provides a bound but with an extra $\log k$ factor. The original claim  remains unproven.} For weighted graphs, their output has $O(n^{1 + \frac{1}{k}} k)$ edges. Inspired by the work of Miller et al.~\cite{miller2015improved}, Elkin and Neiman~\cite{elkin2018efficient} gave a randomized $(2k-1)$-spanner for unweighted graphs that runs in $O(k)$ rounds and matches the centralized bound of $O(n^{1 + \frac{1}{k}})$ with a constant probability; concretely, for any $\eps>0$, it has size $O(n^{1 + \frac{1}{k}}/\eps)$ with probability at least $1-\eps$. 
%Elkin-Neiman: 

For deterministic algorithms, the work of Barenboim et al.~\cite{barenboim2018fast} gave an unweighted spanner with size $O(n^{1 + \frac{1}{k}})$ but with the weaker stretch $O(\log^{k-1} n)$ in $(\log^{k-1} n)$ rounds. Derbel et al.~\cite{derbel2010sublinear} compute $(2k-1)$-stretch unweighted spanners with $O(n^{1+1/k})$ edges but with a rather high round complexity of $O(n^{1 - \frac{1}{k}})$. Grossman and Parter~\cite{grossman2017improved} attain the same stretch with size $O(k n^{1+1/k})$ and improved round complexity of $O(2^k n^{\frac{1}{2} - \frac{1}{k}})$. The first $\polylog(n)$ rounds algorithm for $(2k-1)$-spanners in unweighted graphs was devised by Ghaffari and Kuhn~\cite{ghaffari2018derandomizing}. Their output has size $O(nk + k n^{1 + \frac{1}{k}} \log n)$. If we allow unbounded message sizes, there is a deterministic distributed algorithm by Derbel et al.~\cite{derbel2008locality} with $O(kn^{1 + \frac{1}{k}})$ edges and in $O(k)$ rounds in the \LOCAL model.

% \smallskip
\paragraph{State of the art, sparse and ultra-sparse spanners.} The problem of devising efficient distributed and parallel algorithms for computing ultra-sparse spanners in unweighted graphs was extensively studied in~\cite{derbel2010sublinear,pettie2010distributed,ram2011distributed,barenboim2018fast,elkin2018efficient}. Pettie \cite{pettie2010distributed} presented a distributed randomized algorithm for computing a $O\left(2^{\log^* n} \log n\right)$-spanner with $O(n)$ edges in $O\left(2^{\log^* n} \log n\right)$ rounds. As discussed before, the randomized algorithm of Elkin and Neimann \cite{elkin2018efficient} provides a $(2k-1)$-spanner with $O(n^{1 + \frac{1}{k}})$ edges. With $k = \Theta(\log n)$, this automatically gives an $O(n)$-size sparse spanner. Indeed, their algorithm can compute, with a constant probability, a spanner with $O(t\log n)$-stretch and $n+n/t$ edges, in $O(t\log n)$ rounds. For a discussion on the known centralized and parallel approaches to ultra-sparse spanners, see \Cref{app:otherRelated}. The state of the art parallel algorithm is that of  Li~\cite{Li20}, and our results improve on it as we mention later.

Unfortunately, the above distributed algorithms do not provide ultra-sparse spanners in weighted graphs. 
We note that for many of the modern applications of ultra-sparse spanners in algorithmic problems mentioned above we need ultra-sparse spanners for weighted graphs. 

We comment that there is a standard and simple reduction from weighted graphs to unweighted graphs, but this reduction does not provide ultra-sparse or even sparse spanners. The reduction works as follows: to compute a spanner for a weighted graph $G$, we round the weight of each edge to a multiple of $(1+\eps)$, and then compute an unweighted $\alpha$-spanner for the edges of each of the $O(\frac{\log (U+1)}{\eps})$ weight classes separately, where $U$ is the aspect ratio of the edge weights. The union of these spanners forms a spanner with stretch $(1+\eps) \alpha$ for $G$. Besides the $(1+\eps)$-factor loss in stretch, this reduction loses an $O(\frac{\log (U+1)}{\eps})$-factor in the number of edges. For ultra-spanners, this reduction completely destroys the ultra-sparsity as the union of even just two ultra-spanners is no longer ultra-sparse. Also, in the standard cases of weighted graphs, we usually assume $\log U = O(\log n)$, and thus this reduction does not give even a sparse spanner with $O(n)$ edges.

% \smallskip
\paragraph{Lower bounds for ultra-sparse spanners} It is known that for every $n$ and $t$, there exists some graph for which any spanner with at most $n + n/t$ edges has stretch at least $\Omega(t \log n)$. Furthermore, computing any spanner with $n + n/t$ edges requires $\Omega(t \cdot \log n)$ rounds of distributed computation~\cite{elkin2007near,derbel2009local}. Indeed, these lower bounds hold even for unweighted graphs, randomized computations, and the \local model. The \local model is much more permissive than the \congest model considered in this paper in that it allows nodes to exchange messages of unbounded size and perform arbitrarily complex local computations.

% \bigskip

\subsection{Spanners: Our Contribution} 
% \mohsen{Add forward pointers for where each result appears}

As a first-order summary of prior work, to the best of our knowledge, there are no known deterministic distributed algorithms for ultra-sparse or even sparse spanners with $O(n)$ edges, even if we restrict ourselves to unweighted graphs. Furthermore, even for spanners of higher density and small stretch $2k-1$ for $k \leq \log n$, the number of edges achieved by the state of the art deterministic algorithm~\cite{ghaffari2018derandomizing} is higher than the corresponding randomized algorithms~\cite{baswana2007simple}. 

Our contributions resolve this situation: 
\begin{enumerate}
    \item We present deterministic algorithms that compute ultra-sparse spanners with $n+o(n)$ edges and close to $O(\log n)$ stretch, even in weighted graphs. We do this by showing a reduction from the ultra-sparse case to sparse spanners, and by derandomizing a randomized sparse spanner construction of Pettie~\cite{pettie2010distributed}, and extending it to weighted graphs (see \cref{table:ultrasparse}).  
    \item For general stretch parameter $k$, we provide a derandomization of Baswana-Sen~\cite{baswana2007simple} that matches its stretch-size tradeoff and improves on~\cite{ghaffari2018derandomizing} (see \cref{table:baswana}). 
    \item These derandomization-based algorithms use local computations that exceed $O(m\poly\log n)$ and thus are less suitable for adaptation to the PRAM model of parallel computation. To avoid that, we show in addition a work-efficient reduction from spanners to so-called weak-diameter clusterings. Using this connection and our reduction from ultra-sparse spanners to spanners of higher density, we achieve a work-efficient distributed/parallel algorithm for ultra-sparse weighted spanners with $\poly(\log n)$ stretch. 
\end{enumerate}
In the next three subsections, we elaborate on these contributions.

\definecolor{Gray}{gray}{0.9}
%\definecolor{Gray}{rgb}{0.88,1,1}

\begin{table*}[h!]
\centering
\begin{tabular}{||c | c | c | c | c | c||} 
 \hline
 Paper & Stretch & Size & Weighted? & Deterministic? & \# rounds \\ [0.5ex] 
 \hline\hline
 \cite{pettie2010distributed} & $O(\log n \;\cdot\;  2^{\log^* n})$ & $O(n)$ &$\times$& $\times$& $O(\log n) \cdot  2^{\log^* n}$ \\ 
  \hline
\cite{elkin2018efficient} & $O(\log n)$ & $n + o(n)$ expected & $\times$& $\times$ & $O(\log n)$ \\ 
\rowcolor{Gray}
  \hline
This paper & $O(\log n\;\cdot\; 2^{O(\log^* n)})$ & $n + o(n)$ & \checkmark & \checkmark & $\poly(\log n)$ \\ 
  \hline
\end{tabular}
\caption{This table summarizes the relevant results for construction of spanners that are very sparse. Ideally, we aim for ultra-sparse spanners with size $n + o(n)$. This property is crucial in some applications, as discussed in the introduction. We highlight that our result is the first ultra-sparse spanner construction that is deterministic. Our result is also the first that handles weighted graphs. }
\label{table:ultrasparse}
\end{table*}

\begin{table*}[h!]
\centering
\begin{tabular}{||c | c | c | c | c | c||} 
 \hline
 Paper & Stretch & Size & Weighted? & Deterministic? & \# rounds \\ [0.5ex] 
 \hline\hline
 \cite{baswana2007simple} & $2k-1$ & $O(n^{1 + 1/k}\log k + nk)$ & $\times$ & $\times$& $O(k)$ \\ 
   \hline
\cite{baswana2007simple} & $2k-1$ & $O(n^{1 + 1/k}k)$ & \checkmark & $\times$& $O(k)$ \\ 
    \hline
\cite{ghaffari2018derandomizing} & $2k-1$ & $O(n^{1 + 1/k}k\log n)$ &$\times$ & \checkmark & $\poly(\log(n))$ \\ 
  \hline
\rowcolor{Gray}
This paper & $2k-1$ & $O(n^{1 + 1/k}\log k + nk)$ & $\times$ & \checkmark & $\poly(\log(n))$ \\ 
  \hline
\rowcolor{Gray}
This paper & $2k-1$ & $O(n^{1 + 1/k}k)$ & \checkmark & \checkmark & $\poly(\log(n))$ \\ 
  \hline
\end{tabular}
\caption{This table summarizes the relevant results for construction of spanners with small stretch (think of $k$ as a constant). We highlight that our results match the bounds of \cite{baswana2007simple} while our results are deterministic. }
\label{table:baswana}
\end{table*}

\subsubsection{Reduction from Ultra-Sparse Spanners to (Sparse) Spanners}
Our first contribution is a general reduction from the problem of computing ultra-sparse spanners in weighted graphs to the problem of computing their spanners. 
This allows us to efficiently move the extra factor in the number of edges of the final spanner to its stretch:

Before we state the theorem, we briefly introduce the notion of a \emph{cluster graph}: Given a graph $G$, a cluster graph $H$ is a graph that we get by contracting some of its disjoint subgraphs that we call clusters. 
If all of those clusters are connected and have radius at most $r$, we say that $H$ is an $r$-cluster-graph of $G$. We say that a distributed algorithm works on an $N$-node $r$-cluster-graph in $T(N,r)$ rounds if the underlying communication network is $G$, but the output of the algorithm should be for a given $r$-cluster-graph $H$ of $G$ with $|V(H)| = N$. For a more precise definition of a cluster graph and an algorithm that operates on a cluster graph see \Cref{sec:prelim}. 

\begin{restatable}{theorem}{ThmSparseToUltraSparse}
\label{thm:p1-sparse-ultrasparse}
Suppose that there exists a deterministic distributed algorithm $A$ which computes an $\alpha$-spanner with $N \cdot s(N)$ edges for any $N$-node weighted $r$-cluster-graph in $T(N,r)$ rounds. 
%, when each of these $N$ nodes is actually a cluster of radius at most $r$. 
Then, for any $t \geq 1$, there is a deterministic distributed algorithm $A'$ that computes an $O(t\cdot s(n) \cdot \alpha)$-spanner with $n + n/t$ edges for any $n$-node weighted graph, in 
\[
O\left(t\cdot s(n) \cdot \log^* n + T\left(\frac{n}{t\cdot s(n)}, O\left(t\cdot s\left(n\right)\right)\right)\right)
\]
rounds.
\end{restatable}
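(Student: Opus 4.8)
Set $r := t\cdot s(n)$. The plan is to contract $G$ onto a small cluster graph, run $A$ on it exactly once, and glue the output to the cluster trees. First I would construct a clustering of $V$ into $N\le n/r$ clusters, where each cluster $C$ carries a spanning tree $T_C\subseteq E(G)$ of hop-radius $O(r)$, and which is \emph{scale-faithful}: for every cluster $C\subsetneq V$ the weighted diameter of $T_C$ is at most $O(r)\cdot\mu(C)$, where $\mu(C)$ is the least weight of a $G$-edge with exactly one endpoint in $C$. (Building this clustering is the technical crux; see the last paragraph. We may assume $r<n$; in the degenerate regime $r\ge n$ one just takes a single approximately-minimum-weight spanning tree, whose stretch is $O(n)\le O(r\alpha)$ by the MST cycle property.) Then let $\hat G$ be the cluster graph on these $N$ vertices in which clusters $C,C'$ are joined by an edge of weight $\min\{w(e):e\in E(G)\text{ between }C\text{ and }C'\}$, and record one minimizing edge as the \emph{representative} of that cluster-graph edge. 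Since each $T_C$ is connected with hop-radius $O(r)$, $\hat G$ is a weighted $O(r)$-cluster-graph of $G$, so running $A$ on $\hat G$ produces an $\alpha$-spanner $\hat H$ of $\hat G$ with $N\cdot s(N)\le (n/r)\cdot s(n)=n/t$ edges (using $N\le n$ and that $s$ is non-decreasing). I then output
\[
H \;:=\; \Bigl(\bigcup_{C} E(T_C)\Bigr)\;\cup\;\{\text{representatives of the edges of }\hat H\}.
\]
As the $T_C$ are vertex-disjoint spanning trees of the $N$ clusters, they contribute $\sum_C(|C|-1)=n-N$ edges; the representatives are inter-cluster edges, hence distinct from all tree edges, contributing at most $n/t$ more. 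Thus $|E(H)|\le n+n/t$.

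\textbf{Stretch.} Fix $u,v\in V$, let $P$ be a shortest $u$--$v$ path in $G$, and let $\hat C(x)$ denote the cluster containing $x$. Reading off the inter-cluster edges of $P$ in order gives a walk from $\hat C(u)$ to $\hat C(v)$ in $\hat G$ of weight at most $w(P)=d_G(u,v)$ (each cluster-graph edge is no heavier than the $G$-edge it came from), so $d_{\hat G}(\hat C(u),\hat C(v))\le d_G(u,v)$. Since $\hat H$ is an $\alpha$-spanner of $\hat G$, pick a \emph{simple} path $\hat R=(D_0,\dots,D_p)$ in $\hat H$ from $\hat C(u)$ to $\hat C(v)$ with $w(\hat R)\le\alpha\,d_G(u,v)$, and for $0\le q<p$ let $(a_q,b_q)\in E(H)$ be the representative of the $q$-th edge of $\hat R$, so $a_q\in D_q$, $b_q\in D_{q+1}$, and $w_G(a_q b_q)$ equals that edge's weight. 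Inside $H$, concatenate the $T_{D_0}$-path from $u$ to $a_0$, the edge $(a_0,b_0)$, the $T_{D_1}$-path from $b_0$ to $a_1$, the edge $(a_1,b_1)$, and so on, ending with the $T_{D_p}$-path from $b_{p-1}$ to $v$ (each endpoint lies in the corresponding $D_q$, so every tree path exists and uses only $H$-edges). Writing $\delta_q$ for the weighted diameter of $T_{D_q}$, this walk certifies
\[
d_H(u,v)\;\le\;\sum_{q=0}^{p-1} w_G(a_q b_q)\;+\;\sum_{q=0}^{p}\delta_q\;=\;w(\hat R)\;+\;\sum_{q=0}^{p}\delta_q.
\]
The clusters $D_0,\dots,D_p$ are pairwise distinct because $\hat R$ is simple. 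By scale-faithfulness $\delta_q\le O(r)\cdot\mu(D_q)$, and $\mu(D_q)$ is at most the weight of the lightest edge of $\hat R$ incident to $D_q$; since each edge of $\hat R$ is incident to exactly two of the $D_q$, we get $\sum_{q=0}^{p}\mu(D_q)\le 2\,w(\hat R)\le 2\alpha\,d_G(u,v)$ (the case $p=0$, i.e.\ $\hat C(u)=\hat C(v)$, being immediate). Hence $d_H(u,v)\le w(\hat R)+O(r)\sum_q\mu(D_q)\le O(r\cdot\alpha)\,d_G(u,v)=O(t\cdot s(n)\cdot\alpha)\,d_G(u,v)$, as required.

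\textbf{Round complexity.} The clustering is produced in $O(r\log^* n)$ rounds. Afterwards, forming $\hat G$, simulating $A$ on the $O(r)$-cluster-graph $\hat G$, and decoding $\hat H$ into its representative edges all cost $T(N,O(r))$ rounds in $G$ (using the convention that the round count $T(N,r)$ of an algorithm on an $r$-cluster-graph is already measured in host-graph rounds, which absorbs the $O(r)$-round per-cluster aggregation/broadcast along the $T_C$). In total this is $O\!\left(r\log^* n + T(N,O(r))\right)$ with $N\le n/r=n/(t\cdot s(n))$, matching the claim.

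\textbf{The main obstacle.} Everything above is bookkeeping once the clustering is available, so the heart of the proof is the clustering subroutine: a deterministic $O(r\log^* n)$-round \CONGEST algorithm that partitions $V$ into $O(n/r)$ clusters that are simultaneously spanned by hop-radius-$O(r)$ trees \emph{and} scale-faithful. For unweighted graphs scale-faithfulness is vacuous (all weights equal $1$, and hop-radius $O(r)$ already bounds the weighted diameter), so the difficulty is purely in the weighted case. The natural strategy is a hierarchical, scale-by-scale peeling: treat the connected components of the subgraph of edges of weight $\le 2^i$ as the ``level-$i$ regions'', so a region first appearing at level $i$ has all internal edges of weight $\le 2^i$ and all boundary edges of weight $>2^i$, which yields scale-faithfulness for free; only a region whose internal edge weights all lie within a factor of two is cut, by hops, into pieces of $O(r)$ hops (whose internal and sibling-boundary edges are then comparable, so scale-faithfulness is preserved); and any region that has shrunk to $O(r)$ vertices is frozen as a final cluster. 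The $\log^* n$ overhead comes from the symmetry breaking needed to perform the hop-cutting/merging step deterministically and distributedly, and the delicate point is to show that this procedure terminates with only $O(n/r)$ clusters while every cluster keeps hop-radius $O(r)$.
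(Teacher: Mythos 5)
Your overall architecture matches the paper's: build a clustering into $n/(t\cdot s(n))$ pieces spanned by low-hop-radius trees, contract, run $A$ once on the cluster graph, and glue the spanner's representative edges to the cluster trees. However, there are two genuine gaps, one of which is a mathematical error in the stretch analysis.

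First, the property you impose on the clustering --- ``scale-faithfulness'', i.e.\ the weighted diameter of $T_C$ is at most $O(r)\cdot\mu(C)$ where $\mu(C)$ is the lightest \emph{boundary} edge of $C$ --- is too weak. The failure is exactly the case you dismiss as immediate, namely $p=0$ ($u$ and $v$ in the same cluster). If $\{u,v\}$ is a $G$-edge of weight $1$ inside a cluster $C$ whose lightest boundary edge has weight $W\gg 1$, your guarantee only yields $d_H(u,v)\le\delta_0\le O(r)\cdot W$, which is unbounded relative to $d_G(u,v)=1$; correspondingly, $\sum_{q=0}^{0}\mu(D_q)=\mu(C)=W$ is not at most $2\,w(\hat R)=0$. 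The paper's clustering property (``stretch-friendly'') is precisely the needed strengthening: for every \emph{inside}-edge $\{u,v\}$ of $C$ of weight $w$, every edge on the tree path between $u$ and $v$ has weight at most $w$ (and, for every boundary edge, every edge on the root path of its inner endpoint has weight at most that edge's weight). This controls intra-cluster stretch edge-by-edge rather than through a single cluster-wide quantity tied to the boundary.

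Second, you correctly identify the clustering lemma as the crux but leave it as a sketch whose key claims (termination, at most $n/r$ clusters, hop-radius $O(r)$, and preservation of your faithfulness property under the hop-cutting step) are all unproven. The paper's construction is different and considerably simpler: it runs $\lceil\log t\rceil$ merging iterations in which each cluster orients its \emph{minimum-weight} boundary edge outward, a $3$-coloring of the resulting out-degree-one cluster graph plus a maximal matching among small clusters breaks symmetry in $O(\log^* n)$ steps, and small clusters merge along their outgoing edges. Merging along the minimum-weight outgoing edge is exactly what makes stretch-friendliness an invariant of the process, while cluster sizes at least double per iteration, giving at most $n/t$ clusters, hop-radius $O(t)$, and $O(t\log^* n)$ rounds. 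The correct weighted invariant and the construction that maintains it are the real content of the theorem, and in the proposal the former is wrong and the latter is missing.
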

We present the proof of \Cref{thm:p1-sparse-ultrasparse} in \Cref{sec:reduction}.
We note that the stretch-size tradeoff in our reduction is asymptotically optimal. Indeed, plugging an $O(\log n)$-spanner construction with $O(n)$ edges~\cite{derbel2009local,elkin2018efficient} in our reduction would result in an ultra-spanner with $n+n/t$ edges and stretch $O(t \log n)$, which is asymptotically optimal, even in unweighted graphs. 
 
\paragraph{Example Implication.}
As one concrete corollary of \cref{thm:p1-sparse-ultrasparse}, let us discuss how this reduction gives an improvement on the results of Li~\cite{Li20}. By applying \Cref{thm:p1-sparse-ultrasparse} to the Pettie's~\cite{pettie2010distributed} randomized sparse spanner construction, we obtain a randomized algorithm for weighted ultra-sparse spanner with $n+n/t$ edges and $O(t \log n \cdot 4^{\log^* n})$ stretch. More concretely, we have:

\begin{restatable}{theorem}{ThmRandomizedUltraSparse}
\label{thm:randomized-ultrasparse}
There is a randomized distributed algorithm that computes an $O\left(t\log n \cdot 4^{\log^* n} \right)$-spanner of any $n$-node weighted graph and any $t \ge 1$ with expected $n + n/t$ edges. The algorithm works with high probability and it runs in $t\cdot \polylog(n)$ rounds in the \CONGEST model. A parallel variant of this algorithm works with $\polylog(n)$ depth and $m \cdot \polylog(n)$ work in the \pram model.
\end{restatable}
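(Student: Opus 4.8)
The plan is to instantiate the reduction of \Cref{thm:p1-sparse-ultrasparse} with the algorithm $A$ taken to be a weighted, cluster-graph version of Pettie's randomized sparse spanner construction~\cite{pettie2010distributed}. For an $N$-node graph, Pettie's algorithm outputs a spanner of stretch $\alpha = O(\log N) \cdot 2^{O(\log^* N)}$ with $O(N)$ edges; say it has at most $cN$ edges for an absolute constant $c$, so that in the language of \Cref{thm:p1-sparse-ultrasparse} we may take the constant function $s \equiv c$. Since the reduction of \Cref{thm:p1-sparse-ultrasparse} is a deterministic procedure that makes black-box calls to $A$, running it with a randomized $A$ produces a randomized $A'$ whose stretch bound $O(t \cdot s(n) \cdot \alpha)$ is as in \Cref{thm:p1-sparse-ultrasparse} and whose edge count $n + n/t$ holds in expectation: the cluster graph built by the reduction has $N = n/(t \cdot c)$ nodes, so its spanner contributes at most $cN = n/t$ edges, on top of the $\le n-1$ edges of the spanning structure.

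The first step, which I expect to be the main technical obstacle, is to establish that $A$ indeed has this form. First, \emph{weights}: Pettie's construction is a hierarchical clustering scheme in the style of Baswana--Sen/Thorup--Zwick, and exactly as for Baswana--Sen, edge weights can be handled directly at each level by keeping, for each surviving vertex, only the lightest edge to each neighboring cluster; in particular one avoids the weight-class decomposition that would destroy (ultra-)sparsity, as noted in \Cref{subsec:spanners-background}. What needs care is re-deriving the stretch analysis in the weighted setting; tracking the per-level overhead through the $O(\log^* N)$ levels of Pettie's hierarchy keeps the stretch at $O(\log N) \cdot 2^{O(\log^* N)}$, which accounts for the $4^{\log^* n}$ appearing in the statement. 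Second, \emph{cluster graphs}: if the algorithm is to run on an $N$-node $r$-cluster-graph whose underlying communication network is $G$, then one round of communication on the cluster graph is simulated by $O(r)$ rounds on $G$ using a BFS tree of radius $r$ inside each cluster (the general machinery of \Cref{sec:prelim}), and a standard congestion analysis shows the round complexity grows by only a $\polylog(n)$ factor beyond this $O(r)$ factor; since Pettie's base round count $O(\log N) \cdot 2^{O(\log^* N)}$ is itself $\polylog(n)$, this gives $T(N,r) = O(r) \cdot \polylog(n)$.

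With $A$ in hand, the conclusion follows by plugging $s \equiv c = O(1)$, $\alpha = O(\log n) \cdot 2^{O(\log^* n)}$, and the above bound on $T$ into \Cref{thm:p1-sparse-ultrasparse} with cluster radius $r = O(t \cdot s(n)) = O(t)$. Using $\log^* N \le \log^* n$ (the cluster graph has at most $n$ nodes), this yields a spanner with $n + n/t$ edges in expectation, stretch $O(t \cdot s(n) \cdot \alpha) = O(t \log n) \cdot 2^{O(\log^* n)} = O(t \log n \cdot 4^{\log^* n})$, and round complexity $O\big(t \cdot s(n) \cdot \log^* n + T(\Theta(n/t), O(t))\big) = t \cdot \polylog(n)$. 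The stretch bound holds deterministically, being a structural property of the clustering hierarchy; the $n + n/t$ bound holds in expectation, and a Markov/Chernoff argument over the $O(\log N)$ levels of the hierarchy upgrades it to ``$O(N)$ edges with high probability'' for the cluster-graph spanner, so the whole pipeline succeeds with high probability.

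For the \pram variant: Pettie's hierarchical clustering is BFS-based and admits an implementation with $\polylog(n)$ depth and $\tO(m)$ work, and --- in contrast to the derandomization-based algorithms discussed in the introduction --- the reduction of \Cref{thm:p1-sparse-ultrasparse} is itself work-efficient, so composing the two gives a \pram algorithm with $\polylog(n)$ depth and $m \cdot \polylog(n)$ work. Once the weighted cluster-graph version of Pettie's construction and its round/work accounting are in place, the rest is a direct substitution into \Cref{thm:p1-sparse-ultrasparse}.
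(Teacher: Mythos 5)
Your proposal is correct and follows essentially the same route as the paper: the paper also obtains this theorem by plugging Pettie's randomized sparse spanner construction (extended to weighted graphs via the same per-phase modification that produces the $4^{\log^* n}$ factor, and adapted to run on cluster graphs) into the reduction of \Cref{thm:p1-sparse-ultrasparse}, with the \pram claim following from the work-efficiency of both ingredients.
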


This stretch is optimal up to the $4^{\log^* n}$ term. We comment that Pettie presents the algorithm only for unweighted graphs, but we show in \cref{thm:linearsize} that by simple modifications, we can extend the algorithm to weighted graphs, with only a minimal loss of increasing the stretch factor from $O(\log n \cdot 2^{\log^* n})$ to $O(\log n \cdot 4^{\log^* n})$. 
Also, as the algorithm readily runs in the \pram model with $\polylog(n)$ depth and $m \cdot \polylog(n)$ work. 
This improves on the \pram ultra-sparse spanner construction of Li~\cite{Li20}, which had stretch $O(t^2 \log^3n \log^2 \log n)$ for $n+n/t$ edges.

In the next two subsubsections we discuss how invoking the reduction of \cref{thm:p1-sparse-ultrasparse} atop our (deterministic) spanner constructions leads to our final results on ultra-sparse spanners.

\subsubsection{Spanners via Derandomization}
Our second contribution is derandomization-based deterministic spanner constructions (see \Cref{sec:derand}). We present these in two groups: (A) focused on stretch and especially for sub-logarithmic values of stretch, and (B) focused on low sparsity for roughly logarithmic values of stretch. 
We note that the algorithms below achieve a good stretch for the given number of edges, but they involve large computations in each node.

\smallskip
\paragraph{(A) Low stretch spanners, by derandomizing Baswana-Sen~\cite{baswana2007simple}:}  In the first category, our focus is on small values of stretch. We show  $\polylog(n)$-round deterministic algorithms for $(2k-1)$-spanners, with $O(nk+n^{1+1/k}\log k)$ edges in unweighted graphs and $O(n^{1+1/k} k)$ edges in weighted graphs:

\begin{restatable}{theorem}{ThmKStretch}
\label{thm:kstretch}
There are $\polylog(n)$-rounds deterministic distributed algorithms that compute a $\left(2k-1\right)$-spanner with $O(nk + n^{1 + 1/k} \log k)$ and $O(nk + n^{1 + 1/k} \log k)$ edges for unweighted and weighted graphs, respectively.
\end{restatable}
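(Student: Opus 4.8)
\textbf{Proof approach for \Cref{thm:kstretch}.}
The plan is to derandomize the Baswana--Sen $(2k-1)$-spanner algorithm~\cite{baswana2007simple} while preserving its \emph{refined} edge-count analysis --- in particular Pettie's~\cite{pettie2010distributed} $\log k$-factor amortization in the unweighted case --- by running the distributed method of conditional expectations with a carefully designed pessimistic estimator, fixing the random bits via a network decomposition of the cluster graph in the style of~\cite{ghaffari2018derandomizing}. The point is that the extra $\log n$ factor in~\cite{ghaffari2018derandomizing} comes from a derandomization that does not carry through Pettie's amortized argument; our task is to encode that argument into the estimator.

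Recall the structure of Baswana--Sen. It maintains, through $k-1$ phases, clusterings $\fC_0,\fC_1,\dots,\fC_{k-1}$, where $\fC_0$ consists of singletons and $\fC_i$ is obtained by \emph{sampling} each cluster of $\fC_{i-1}$ independently with probability $p=n^{-1/k}$; a vertex whose cluster is unsampled either joins an adjacent sampled cluster --- adding one spanner edge to it and, in the weighted case, one edge to each adjacent cluster strictly closer than the one joined --- or, having no adjacent sampled cluster, is removed after adding one spanner edge to each adjacent cluster; after phase $k-1$ every surviving vertex adds one edge to each adjacent cluster of $\fC_{k-1}$. The only randomness is one bit per cluster per phase, the stretch is deterministically $2k-1$ (from the cluster radius bound $\rad(\fC_i)\le i$), and the expected number of spanner edges matches the bounds claimed. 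We may assume $k\le\lceil\log_2 n\rceil$: for larger $k$, replacing $k$ by $\lceil\log_2 n\rceil$ only decreases the stretch below $2k-1$ and gives edge count $O(n\log n)=O(nk)$. Consequently every cluster has radius $O(\log n)$, so one round on the cluster graph costs $O(\log n)\cdot\polylog(n)$ \CONGEST rounds on $G$, and all aggregates below are computed along (vertex-disjoint) cluster BFS trees.

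We process the phases in order. In phase $i$ we fix the sampling bits of the clusters of $\fC_{i-1}$ in groups --- one color class of a network decomposition of the cluster graph at a time --- never increasing a pessimistic estimator $\Phi_i$ of the conditional expectation, over the still-unfixed bits of phase $i$ and \emph{all} bits of later phases, of the total number of spanner edges added from phase $i$ onward. The contribution of phases $>i$ is replaced by its closed-form expectation as a function of the resulting cluster counts $|\fC_i|,|\fC_{i+1}|,\dots$, which is in turn dominated by an auxiliary pessimistic-estimator term; this makes $\Phi_i$ consistent across phases, so that $\Phi_1$ at the empty assignment equals the target bound and $\Phi_{k-1}$ at the final assignment equals the true edge count. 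The key structural property is that $\Phi_i$ is $O(\log n)$-\emph{local}: a vertex's contribution depends only on the sampling bits of its $O(1)$ incident clusters together with statistics broadcast within those clusters, so within each color class the clusters can, in parallel and by local brute-force search, fix their bits so as not to increase $\Phi_i$, in $\polylog(n)$ rounds. Over all $k\le\log n$ phases this is $\polylog(n)$ rounds, and the resulting deterministic spanner has at most $\Phi_1(\varnothing)$ edges, i.e. the unweighted (resp. weighted) bound of~\cite{baswana2007simple}; the stretch $2k-1$ holds unconditionally.

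The main obstacle is the construction of $\Phi_i$. The elementary Baswana--Sen bound, summed over phases, is only $O(kn^{1+1/k})$; the sharper $O(nk+n^{1+1/k}\log k)$ relies on Pettie's amortized argument, which charges the edges of removed vertices against a global potential that telescopes across phases. To make this compatible with the method of conditional expectations, that potential must be folded into $\Phi_i$ so that (i) its value at the empty assignment is the target bound, (ii) it dominates the true conditional expectation after any partial assignment of sampling bits, and (iii) it is still $O(\log n)$-locally computable. Since the theorem permits unbounded local computation, the complexity of the estimator itself is not an issue --- only the diameter of its dependency structure is --- but the tension between (ii) and (iii) is real; it is resolved by noting that Pettie's bound only ever refers to low-dimensional per-cluster statistics (the cluster size and, for each $d$, the number of incident vertices with exactly $d$ incident clusters), which can be aggregated and rebroadcast within each cluster in $O(\log n)\cdot\polylog(n)$ rounds. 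The treatment of ties and edge weights in the weighted case, and the $\polylog(n)$ round complexity of network decomposition on the cluster graph, are standard.
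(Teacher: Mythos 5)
Your high-level plan coincides with the paper's: run Baswana--Sen phase by phase, encode the refined (Pettie-style) size analysis into a pessimistic estimator, and fix the sampling choices by the method of conditional expectations parallelized over a network decomposition of the cluster graph. The stretch argument and the reduction to $k\le O(\log n)$ are fine. But there is a concrete gap in the derandomization step itself: you never reduce the number of random bits. With one independent bit per cluster, a single ND-cluster of a color class may contain polynomially many Baswana--Sen clusters whose bits \emph{jointly} determine the contributions of the nodes around them, so they cannot be fixed ``in parallel''; fixing them one at a time is far too slow; and fixing them jointly by having a leader brute-force the assignment requires communicating the estimator's value for exponentially many joint assignments, which is impossible in \CONGEST (the per-node contribution is not a sum of per-bit terms, so you cannot aggregate it assignment-by-assignment). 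Your supporting claim that ``a vertex's contribution depends only on the sampling bits of its $O(1)$ incident clusters'' is also false --- a vertex can be adjacent to $\Omega(n)$ clusters, and in the weighted case the number of edges it adds depends on the bits of all clusters preceding the first sampled one. The paper resolves exactly this by first replacing full independence within each iteration by a pseudorandom distribution (a Gopalan--Yehudayoff hash family) with seed length $O(\log n\log\log n)$ that approximates every ``hitting event'' $\bigvee_j X_j$ to within $1/\poly(n)$; conditional expectation is then applied to the \emph{seed bits} one at a time, each node computes its two conditional contributions locally (at cost $n^{O(\log\log n)}$ local work), and only sums are aggregated to ND-cluster leaders. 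This is also how the precedent you cite, Ghaffari--Kuhn, works (via $O(\log n)$-wise independence); it is the crux of getting $\polylog(n)$ rounds, and your writeup omits it. A further requirement you skip is that the decomposition must be of the \emph{square} of the cluster graph, so that same-colored ND-clusters influence disjoint sets of nodes.

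Separately, your estimator design is more elaborate than necessary. You worry about folding a telescoping cross-phase potential into $\Phi_i$, but the only part of the $\log k$ refinement that amortizes across phases --- dying nodes with at most $\tau=\ln(k)/p$ adjacent clusters, each contributing at most $\tau$ edges and dying exactly once --- is a deterministic bound of $n\tau=O(n^{1+1/k}\log k)$ in total and can simply be excluded from the estimator. The paper therefore uses independent per-iteration budgets: a utility function per phase combining (i) the number of sampled clusters, (ii) the edges added by high-degree dying nodes (plus all nodes in the weighted case), and (iii) an $n^5$-weighted indicator that some very-high-degree node dies, and shows a good assignment exists by analyzing sampling at rate $p/4$. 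No global telescoping is needed.
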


 Our approach is inspired by the work of Ghaffari and Kuhn~\cite{ghaffari2018derandomizing} on derandomizing the Baswana-Sen randomized algorithm. Ghaffari and Kuhn~\cite{ghaffari2018derandomizing} get the same stretch but with a worse sparsity bound of $O(n^{1+1/k}k \log n)$ edges, and only for unweighted graphs. In contrast, our bounds match the best known analysis of the randomized Baswana-Sen algorithm, in both unweighted and weighted graphs. 
 
 \smallskip
 \paragraph{(B) Low sparsity spanners, by Derandomizing Pettie~\cite{pettie2010distributed}:} In the second category, our focus is on the sparsity of the spanner, and we show $\polylog(n)$-round deterministic algorithms that compute sparse spanners with $O(n)$ edges, in both unweighted and weighted cases, which achieve a stretch of almost $O(\log n)$:
 
\begin{restatable}{theorem}{ThmLinearSize}
\label{thm:linearsize}
There are $\polylog(n)$ rounds deterministic distributed algorithms that compute a spanner with $O(n)$ edges. For unweighted graphs, the stretch of the spanner is $O\left(\log n \cdot 2^{\log^* n}\right)$ and for weighted graphs, it is $O\left(\log n \cdot 4^{\log^* n}\right)$.
\end{restatable}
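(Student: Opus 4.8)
The plan is to take Pettie's randomized sparse‑spanner construction~\cite{pettie2010distributed}, derandomize it in $\polylog(n)$ deterministic \CONGEST rounds, and then re‑analyze it for weighted graphs. Recall the structure of Pettie's algorithm: it runs in $\ell = O(\log^* n)$ phases of a Baswana--Sen‑style cluster‑merging process. Phase $i$ starts from a clustering $\fC_i$ whose clusters have radius $r_i$; a subset of clusters is \emph{sampled} as supercluster centers; each sampled center BFS‑grows to absorb nearby clusters, producing $\fC_{i+1}$; and each unsampled, unabsorbed cluster \emph{interconnects}, keeping one connecting edge to each neighboring cluster not covered by a sampled center. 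The number of clusters decays doubly‑exponentially per phase, so $O(\log^* n)$ phases suffice, and the radii grow so that the accumulated stretch over all phases is $O(\log n \cdot 2^{O(\log^* n)})$, as shown by Pettie. The key observation enabling derandomization is that the stretch bound is a \emph{deterministic} function of the radius schedule $r_1, \dots, r_\ell$ alone: the sampled sets affect only \emph{which} clusters survive and hence only the number of interconnection edges, which Pettie bounds by $O(n)$ in expectation over the whole run. Thus it suffices to fix, phase by phase, a sampling whose interconnection contribution is at most its conditional expectation; the same cluster‑merging process run on a given input $r$‑cluster‑graph also yields the cluster‑graph version needed to invoke \Cref{thm:p1-sparse-ultrasparse}.

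For a single phase I would derandomize by the method of conditional expectations with a pessimistic estimator, implemented distributedly in the spirit of Ghaffari--Kuhn~\cite{ghaffari2018derandomizing}. Two facts make this feasible. First, the interconnection cost is a sum of per‑cluster local terms, and the term at a cluster $C$ depends only on the sampled/unsampled status of clusters within $O(r_i) = \polylog(n)$ hops of $C$ (namely $C$, its neighboring clusters, and their neighbors). Second, Pettie's expected‑cost analysis uses only a bounded degree of independence among the per‑cluster coins, so a $\Theta(\log n)$‑wise independent distribution suffices; hence the $\approx |\fC_i|$ coin flips of a phase can be produced from a seed of only $O(\log^2 n)$ bits drawn from a small $\Theta(\log n)$‑wise independent family. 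I then fix the seed bits one at a time: to set bit $j$, each cluster must evaluate, for each candidate value of bit $j$ (and the already‑fixed bits $1,\dots,j{-}1$), its conditional expected local cost, which it can do from a $\polylog(n)$‑radius view using unbounded local computation; the bit value keeping the running global conditional expectation non‑increasing is selected. The sequencing and aggregation of these $O(\log^2 n)$ bit‑fixings across a graph of possibly large diameter is done within the low‑diameter clusters of a deterministic network decomposition, processed class by class, which costs $\polylog(n)$ rounds per phase. Over the $O(\log^* n)$ phases this totals $\polylog(n)$ rounds and $O(n)$ edges, giving the unweighted claim with stretch $O(\log n \cdot 2^{\log^* n})$.

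For weighted graphs I would run the identical hierarchical process in the weighted metric: cluster $i$ grows BFS up to \emph{weighted} radius $r_i$, and an unsampled cluster interconnects to a neighboring cluster through a \emph{minimum‑weight} connecting edge. The path‑recovery argument for a removed edge $\{u,v\}$ then reroutes through the cluster structure at the current scale; because each detour now traverses two cluster‑internal paths of weighted length $O(r_i)$ rather than essentially one, the per‑level multiplicative blow‑up in the stretch grows from roughly $2$ to roughly $4$, so after $O(\log^* n)$ levels the stretch becomes $O(\log n \cdot 4^{\log^* n})$. Crucially, one must handle all weight scales inside this single process rather than via the naive weight‑bucketing reduction, which (as noted in the introduction) destroys the $O(n)$ edge bound. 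The derandomization carries over verbatim, since the random choices and the interconnection‑count estimator are purely combinatorial and scale‑oblivious; only the deterministic stretch accounting changes.

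I expect the main obstacle to be the distributed implementation of the method of conditional expectations over a graph of arbitrarily large diameter: this requires both the bounded‑independence re‑analysis of Pettie's edge‑count bound and, more delicately, a way to fix the $\polylog(n)$ seed bits of each phase so that the global pessimistic estimator stays below its expectation, which is where the deterministic network decomposition and the heavy (super‑$O(m\polylog n)$) local computation flagged in the introduction become necessary. A secondary but genuine point is the weighted re‑analysis---pinning the extra factor to exactly $2^{O(\log^* n)}$ rather than a quantity that degrades with the aspect ratio---since one cannot fall back on the weight‑class reduction without losing sparsity.
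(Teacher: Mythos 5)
Your high-level route is the same as the paper's: recast Pettie's construction as $O(\log^* n)$ phases of (derandomized) Baswana--Sen with a doubly-exponentially growing sampling schedule, observe that the stretch guarantee is deterministic while randomness only controls the edge count, and fix the per-phase coins by the method of conditional expectations sequenced through a deterministic network decomposition. However, there are two concrete gaps. First, your claim that $\Theta(\log n)$-wise independence (seed $O(\log^2 n)$) suffices to preserve the $O(n)$ edge bound is exactly the step that fails: the quantities controlling the edge count are \emph{hitting events} of the form $\Prob[\text{none of $d$ adjacent clusters is sampled}] = (1-p)^d$ with $d$ as large as $n$, and bounded independence does not approximate $(1-p)^d$ to within $1/\poly(n)$ when $dp \gg \log n$ --- this is precisely why the Ghaffari--Kuhn derandomization you cite loses an extra $\log n$ factor in the size, which would destroy the $O(n)$ bound after telescoping over phases. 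The paper instead uses a pseudorandom generator tailored to such events (Gopalan--Yehudayoff, \cref{app:short-seed-length}) with seed length $O(\log n \log \log n)$, at the cost of $n^{O(\log\log n)}$ local computation; some substitute for bounded independence of this kind is unavoidable here.

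Second, your weighted analysis is not the right mechanism. Growing clusters to \emph{weighted} radius $r_i$ breaks the round complexity (rounds are hop-based, and in the paper's $r$-cluster-graph definition $r$ is explicitly a hop-radius), and your explanation of the extra factor --- that a detour traverses two cluster-internal paths instead of one --- does not distinguish weighted from unweighted: in both cases a dead edge of weight $w$ at iteration $i$ is replaced by at most $2i-1$ tree edges each of weight at most $w$, via the \emph{stretch-friendly} property of Baswana--Sen trees (\cref{lem:bs-stretch}), so the per-iteration stretch formula is identical. The actual source of $4^{\log^* n}$ versus $2^{\log^* n}$ is the edge budget: in the weighted case each iteration adds $O(n_i/p)$ edges rather than $O(n_i + n_i\log g/(pg))$, so each phase must run roughly \emph{twice} as many Baswana--Sen iterations to shrink the cluster count enough for the phase-by-phase edge counts to telescope to $O(n)$; doubling the iterations doubles the per-phase radius growth and hence multiplies the per-phase stretch factor by $2$, giving $4^{P}$ over $P = O(\log^* n)$ phases. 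Without this adjustment to the schedule, your weighted variant either exceeds $O(n)$ edges or does not achieve the claimed stretch.
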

 
 These deterministic algorithms are also obtained via derandomization, but when applied to Pettie's randomized algorithm~\cite{pettie2010distributed}. We also note that the weighted sparse spanner stated in \Cref{thm:linearsize} was not stated in prior work even for randomized algorithms; we obtain this result via a minor modification of Pettie's unweighted approach.

\paragraph{Implications.} Via \Cref{thm:p1-sparse-ultrasparse} with \Cref{thm:linearsize}, we get the first $\polylog(n)$-round deterministic distributed algorithm for ultra-sparse spanners:

\begin{restatable}{theorem}{ThmUltraSparse}
\label{thm:ultrasparse}
There are $t\cdot \polylog(n)$ distributed deterministic algorithms that compute a spanner with $n + n/t$ edges with stretch $O\left(t\log n \cdot 2^{\log^* n}\right)$ for unweighted graphs and with stretch $O\left(t\log n \cdot 4^{\log^* n}\right)$ for weighted graphs.
\end{restatable}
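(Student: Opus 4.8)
The plan is to obtain \Cref{thm:ultrasparse} by plugging the deterministic sparse spanner construction of \Cref{thm:linearsize} into the black-box reduction of \Cref{thm:p1-sparse-ultrasparse}. In the language of \Cref{thm:p1-sparse-ultrasparse}, I would take as the algorithm $A$ the deterministic $\polylog(N)$-round algorithm that, given an $N$-node (weighted) graph, outputs a spanner with $O(N)$ edges and stretch $\alpha = O(\log N \cdot 2^{\log^* N})$ in the unweighted case and $\alpha = O(\log N \cdot 4^{\log^* N})$ in the weighted case. In the notation of the reduction this means $s(N) = O(1)$.

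The only point requiring an argument is that \Cref{thm:p1-sparse-ultrasparse} asks for $A$ to run on an $N$-node $r$-cluster-graph, whereas \Cref{thm:linearsize} is phrased for ordinary graphs. I would bridge this with the standard cluster-graph simulation (as set up in \Cref{sec:prelim}): one round of a \CONGEST algorithm on an $r$-cluster-graph $H$ of $G$ can be executed on the underlying network $G$ in $O(r)$ rounds, because each cluster is a connected radius-$r$ subgraph with a rooted spanning tree of depth at most $r$, along which it can convergecast the (short) information it needs and then broadcast the $O(\log n)$-bit message that it sends to all of its neighboring clusters, after which a single round across the inter-cluster edges suffices; the derandomization machinery behind \Cref{thm:linearsize} only ever aggregates such short quantities, so it is preserved by this simulation. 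Hence $A$ runs on an $N$-node $r$-cluster-graph in $T(N,r) = O(r \cdot \polylog(N))$ rounds with the same size and stretch guarantees, i.e. still with $s(N) = O(1)$.

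It then remains to substitute and simplify. Invoking \Cref{thm:p1-sparse-ultrasparse} with this $A$ and a parameter $t \ge 1$: since $s(n) = \Theta(1)$, the reduction runs $A$ on a cluster-graph with $N = \Theta(n/t) \le n$ nodes and clusters of radius $r = O(t \cdot s(n)) = O(t)$, so in particular $\log N = O(\log n)$ and $\log^* N \le \log^* n$. The output has exactly $n + n/t$ edges by the conclusion of \Cref{thm:p1-sparse-ultrasparse}; its stretch is $O(t \cdot s(n) \cdot \alpha) = O(t \log n \cdot 2^{\log^* n})$ in the unweighted case and $O(t \log n \cdot 4^{\log^* n})$ in the weighted case; and its round complexity is $O(t \cdot s(n) \cdot \log^* n) + T(N,r) = O(t \log^* n) + O(t \cdot \polylog(n/t)) = t \cdot \polylog(n)$. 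This is precisely the claim.

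I expect the cluster-graph extension in the second paragraph to be the step needing the most care: one must verify that the specific derandomization of Pettie's construction underlying \Cref{thm:linearsize} genuinely operates in the cluster-graph model (so that the conditional-expectation / pessimistic-estimator bookkeeping is carried out over clusters rather than over vertices) and that no \CONGEST congestion blow-up occurs when several clusters share boundary vertices. In the write-up I would therefore prove \Cref{thm:linearsize} in \Cref{sec:derand} directly in the cluster-graph model, using the formalism of \Cref{sec:prelim}, so that the substitution above is literal; the remaining arithmetic in the third paragraph is routine.
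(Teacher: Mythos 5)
Your proposal is correct and follows essentially the same route as the paper: plug \Cref{thm:linearsize} into \Cref{thm:p1-sparse-ultrasparse} with $s(N)=O(1)$, and justify $T(N,r)=r\cdot\polylog(N)$ by arguing that the algorithm behind \Cref{thm:linearsize} runs on $r$-cluster-graphs with only a multiplicative factor $r$ in the round complexity. The concern you rightly flag is exactly how the paper discharges this step---the proof of \Cref{thm:linearsize} already operates phase-by-phase on cluster-graphs (in particular, it bounds neighbor discovery by noting each node need only learn its first $O(\log^2 n)$ adjacent clusters, and runs the network decomposition of \Cref{thm:networkdecomposition} on cluster-graphs), so the dilation-$r$ simulation goes through without congestion blow-up.
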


 \subsubsection{Spanners via Low-Diameter Clusterings}

As our third contribution, we show a connection with the so-called weak-diameter clusterings. 
As a result of this connection, we will achieve deterministic spanner constructions that are work efficient---i.e., doing only $m\poly\log n$ computations---but have weaker bounds on the stretch; the latter is merely due the sub-optimal bounds in the state of the art radius for deterministic low-diameter clusterings.

Our most general result in this direction is the following near-optimal end-to-end reduction of spanner construction to clustering construction. 
Before we state the somewhat technical theorem, we need to briefly introduce the notion of weak-diameter and separated clusterings. 
A clustering is a collection of disjoint vertex sets (clusters).
Next, think actually of every cluster as a pair of the vertex set $C \subseteq V(G)$ and a tree $T_C$ with $V(T_C) \supseteq C$. That is, each cluster also carries a tree with it collecting its nodes. 
A clustering has weak-diameter $D$ if every cluster $C$ of it has the property that the diameter of $T_C$ is at most $D$. 
Namely, we require any two nodes of $C$ to be close in $G$, but the cluster $C$ itself might be even disconnected. 
A clustering is $k$-separated if the distance of any two different clusters of it is at least $k$. 
Finally, the average overlap $\xiavg$ of a clustering $\{C_1, \dots, C_t\}$ is defined as $\xiavg = \left( \sum_{i = 1}^t |V(T_{C_i})| \right) / n$. That is, it is the average overlap of the trees of the clustering. 

\begin{restatable}{theorem}{thmDeterministic}
\label{thm:final-result-weak-clustering-to-utrasparse-unweighted}
Suppose there exists an algorithm $A$ which for any unweighted $n$-vertex graph finds a $3$-separated clustering with weak-diameter $\diamwldc(n)$ with average overlap at most $\xiavg(n)$. 
Then, there is an algorithm $A'$ that builds a $\beta$-spanner $H$ on an unweighted graph such that  (1) $|E(H)| = O(\xiavg(n) n)$, (2) $\beta = O(D(n))$.

Furthermore, if $A$ is deterministic then so is $A'$ and
\begin{itemize}
        \item If $A$ requires at most $\Tldc(n)$ \congest rounds then $A'$ requires $O\left( \log(n)\Tldc(n) \right)$ \congest rounds.
        
        \item If $A$ requires at most $\Tldc(n,r)$ \congest rounds on a $r$-cluster-graph, then $A'$ requires $O\left( \log(n)(\Tldc(n,r) + r) \right)$ \congest rounds on a $r$-cluster-graph. 

        \item If $A$ is a \pram algorithm with $T(n)$ depth and $W(m,n)$ work then $A'$ is a \pram algorithm with depth $ O(\log(n)T(n))$ and work $O(\poly(\log(n))(W(m,n) + m + n \xiavg(n))$.
        
\end{itemize}
\end{restatable}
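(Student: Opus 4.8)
I would build $A'$ by iterating the clustering routine $A$ in the spirit of the Baswana--Sen hierarchical construction, where each of $O(\log n)$ applications of $A$ plays the role of one ``sparsification layer''. Maintain a shrinking set of \emph{active} vertices $V_1 = V(G)\supseteq V_2\supseteq\cdots$. In iteration $i$, run $A$ on the induced subgraph $G[V_i]$ to get a $3$-separated clustering $\mathcal{C}_i=\{C_1,\dots,C_{t_i}\}$ together with its trees $T_{C_j}$ of weak-diameter at most $\diamwldc(|V_i|)\le\diamwldc(n)$, and put all edges $\bigcup_j E(T_{C_j})$ into the spanner $H$. Then exploit the separation: since any two clusters of $\mathcal{C}_i$ are at $G$-distance at least $3$, every $v\in V_i$ that is not in a cluster is adjacent to \emph{at most one} cluster of $\mathcal{C}_i$; if $v$ is adjacent to a cluster and has not received such an edge at an earlier iteration, add the single ``boundary'' edge joining $v$ to that unique cluster. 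Set $V_{i+1}:=V_i\setminus\bigcup_j C_j$ and recurse; since the clusterings we apply cover a constant fraction of the vertices they are run on (so $|V_{i+1}|\le|V_i|/2$), after $L=O(\log n)$ iterations $V_{L+1}=\emptyset$ and every vertex lies in a cluster of exactly one $\mathcal{C}_i$, whose index we denote $\mathrm{lev}(v)$.

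\textbf{Size.} Layer $i$ adds $\sum_j|E(T_{C_j})|\le\sum_j|V(T_{C_j})|=\xiavg(|V_i|)\cdot|V_i|$ tree edges, plus the boundary edges. Each vertex gets at most one boundary edge over the whole execution, for $O(n)$ in total; and bounding $\xiavg(|V_i|)\le\xiavg(n)$, $|V_i|\le n/2^{i-1}$, and summing the geometric series over $i\le L$ bounds the tree edges by $O(\xiavg(n)\,n)$. Hence $|E(H)|=O(\xiavg(n)\,n)$.

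\textbf{Stretch (the crux).} It suffices to prove $d_H(u,v)=O(\diamwldc(n))$ for every edge $\{u,v\}\in E(G)$, since summing along a shortest $u$--$v$ path then gives $d_H\le O(\diamwldc(n))\cdot d_G$. Let $i=\min\{\mathrm{lev}(u),\mathrm{lev}(v)\}$, say $u\in C\in\mathcal{C}_i$. Both endpoints are active entering iteration $i$, so $\{u,v\}\in E(G[V_i])$ and $v$ is adjacent to $C$; by $3$-separation $C$ is the only cluster of $\mathcal{C}_i$ that $v$ touches. If $v\in C$ we route inside $T_C$ and get $d_H(u,v)\le\diamwldc(n)$. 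Otherwise $v$ remains active, and I would argue that the boundary edge $v$ carries (acquired at iteration $i$, or earlier) attaches $v$ in $H$ to a cluster at $G$-distance at most $2$ from $u$; since that cluster and $C$ both contain a neighbour of $v$, the $3$-separation of the relevant clustering forces them to coincide (or, if they lie in different layers, to be directly connected in $H$), so that $v$ reaches $T_C$ in $H$ within $O(\diamwldc(n))$ steps and again $d_H(u,v)=O(\diamwldc(n))$. The main obstacle is exactly this last point: a priori the layer where $v$ got its boundary edge, the layer where $u$ got clustered, and the chain of cluster-adjacencies linking them could span many of the $O(\log n)$ iterations, and one must show --- using $3$-separation and the ``first opportunity'' boundary rule --- that this chain collapses to an $O(\diamwldc(n))$ detour rather than accumulating a $\log n$ (or worse) factor. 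Pinning down the right invariant here is the technical heart of the proof.

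\textbf{Complexity and determinism.} Each iteration is one call to $A$ ($\Tldc(n)$ \CONGEST rounds, or $\Tldc(n,r)$ on an $r$-cluster-graph) plus $O(\diamwldc(n))=O(\Tldc(n))$ rounds to build and aggregate over the cluster trees and one round for the boundary edges; on an $r$-cluster-graph each communication round costs an extra $O(r)$ to simulate, giving the additive $O(r)$ term. Over $O(\log n)$ iterations this yields $O(\log(n)\Tldc(n))$ and $O(\log(n)(\Tldc(n,r)+r))$ \CONGEST rounds. In the \pram model, per iteration we run $A$ (depth $T(n)$, work $W(m,n)$) on the active-induced subgraph, then in $O(\polylog n)$ depth scan incident edges for the boundary attachments ($O(m)$ work) and emit the trees ($O(n\,\xiavg(n))$ work); over the $O(\log n)$ iterations this is depth $O(\log(n)T(n))$ and work $O(\polylog(n)(W(m,n)+m+n\,\xiavg(n)))$. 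Finally, $A'$ makes only deterministic local choices on top of $A$, so $A'$ is deterministic whenever $A$ is.
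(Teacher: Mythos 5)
Your overall architecture (iterate $A$ on the unclustered vertices, add the cluster trees, exploit $3$-separation to attach unclustered neighbors by single edges, halve the unclustered set $O(\log n)$ times) is the paper's construction, and your size, complexity, and determinism arguments are fine. But there is a genuine gap in the stretch argument, and you have correctly located it yourself: it is caused by your ``first opportunity'' rule for boundary edges, which is a deviation from what the construction needs. Under that rule, a vertex $v$ that stays unclustered for many iterations keeps only the edge to the \emph{first} cluster it ever neighbored. Now take an edge $\{u,v\}\in E(G)$ where $u$ is clustered into $C$ at iteration $i$ and $v$ is clustered only at some later iteration $j>i$, but $v$ already received its unique boundary edge at some iteration $i'<i$ toward a different cluster $C'$. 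Then $H$ gives $v$ access only to $T_{C'}$ and to its own tree from layer $j$; there is no invariant forcing $C'$ and $C$ to be connected in $H$ (they live in different layers, so $3$-separation says nothing about them, and no vertex is obliged to have added an edge between them). The ``chain of cluster-adjacencies'' you hope collapses really can fail to collapse, so the step as written does not go through.

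The fix is to drop the first-opportunity restriction: at \emph{every} iteration $i$, every vertex that is still unclustered after that iteration and neighbors a newly created cluster (by $3$-separation there is at most one such cluster per iteration) adds one edge to it. This costs at most $|V^{\mathrm{unclustered}}_i|$ edges in iteration $i$, and since $|V^{\mathrm{unclustered}}_i|$ decays geometrically the total is still $O(n)$, so the size bound is unharmed. With this rule one maintains the invariant that every unclustered vertex has an $H$-edge into \emph{each} cluster it currently neighbors. The stretch argument then closes in one step: for an edge $\{u,v\}$ with $u$ clustered first, say at iteration $i$ into $C$, the vertex $v$ was still unclustered at iteration $i$ and neighbors $C$ via $u$, so either $v\in C$ (route inside $T_C$) or $v$ holds an edge to some $y\in C$ and $d_H(u,v)\le 1+\mathrm{diam}(T_C)=O(D(n))$. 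No chain across layers is ever needed.
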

\iffalse
\todo{C: we remarked somewhere in the paper that $\Tldc(n) \geq 1$ and that it is monotonic in $n$ and similar technicalities}
\todo{C: Can be simplified if one assumes that $W(m,n) \geq m \geq n \xiavg(n)$}
\fi

 \smallskip 
 
\Cref{thm:final-result-weak-clustering-to-utrasparse-unweighted} is optimal, up to constants, in terms of its density-stretch tradeoff. This is because there are $3$-separated clusterings with diameter $\log n$ and constant overlap. These would give the desired stretch factor of $O(\log n)$.
Also, \Cref{thm:final-result-weak-clustering-to-utrasparse-unweighted} can be extended to produce spanners in weighted graphs with an additional $O(\log (U + 1))$ overhead in the number of edges in the spanner, by applying the standard reduction mentioned earlier in \Cref{subsec:spanners-background}.

\paragraph{Implications.} Plugging the state-of-the-art deterministic distributed clusterings~\cite{rozhovn2020polylogarithmic} into \Cref{thm:final-result-weak-clustering-to-utrasparse-unweighted}, and then applying \Cref{thm:p1-sparse-ultrasparse} on top, gives the following result:

\begin{restatable}{theorem}{thmWeightedUltra}\label{thm:result-weighted-ultra-randomized-congest}
There is a deterministic work-efficient \congest algorithm that, given any $n$-vertex weighted graph $G$ and $t \ge 1$, computes in $O(t \log^{10} n)$ rounds an ultra-sparse spanner with $n + n/t$ edges and stretch $O(t \log^4 n \log (U+1))$, where $U \geq 1$ is the aspect ratio of the weights. There is also a deterministic \pram algorithm that computes such a spanner in $\polylog(n)$-time and with $m\cdot \polylog(n)$ work.
\end{restatable}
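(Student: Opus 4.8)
The plan is to obtain \Cref{thm:result-weighted-ultra-randomized-congest} as a corollary by chaining four ingredients that are already at our disposal, in the following order: the state-of-the-art deterministic clustering of~\cite{rozhovn2020polylogarithmic}; the clustering-to-spanner reduction of \Cref{thm:final-result-weak-clustering-to-utrasparse-unweighted}; the standard $(1+\eps)$-rounding weighted-to-unweighted reduction from \Cref{subsec:spanners-background}; and finally the ultra-sparsification reduction of \Cref{thm:p1-sparse-ultrasparse}. Each step is essentially plug-and-play, so the proof is a matter of instantiating parameters and bookkeeping.

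First I would record the clustering primitive: \cite{rozhovn2020polylogarithmic} yields, deterministically in $\polylog(n)$ \congest rounds and (via its distributed-derandomization framework) also as a \pram algorithm with $\polylog(n)$ depth and $m\polylog(n)$ work, an algorithm $A$ that for every unweighted $n$-vertex graph produces a clustering with weak-diameter $\diamwldc(n)=O(\log^3 n)$ and average overlap $\xiavg(n)=O(\log n)$. The one point needing care is the $3$-separation, since a network-decomposition color class is only $2$-separated: I would run the procedure at a constant blow-up of the scale parameter (or post-process to force pairwise cluster distance $\ge 3$), which costs only constant factors in $\diamwldc$ and $\xiavg$ and $O(1)$ additive hops in the round count, and goes through on $r$-cluster-graphs with the usual $O(r)$-type per-round simulation overhead. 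Feeding this $A$ into \Cref{thm:final-result-weak-clustering-to-utrasparse-unweighted} then gives a deterministic algorithm that, on any unweighted $n$-vertex graph, builds a $\beta$-spanner with $\beta=O(\diamwldc(n))=O(\log^3 n)$ and $O(\xiavg(n)\,n)=O(n\log n)$ edges, in $O(\log n\cdot \Tldc(n))=\polylog(n)$ \congest rounds (and $O(\log n(\Tldc(n,r)+r))$ rounds on $r$-cluster-graphs), and as a \pram algorithm with $\polylog(n)$ depth and $O(\polylog(n)(W(m,n)+m+n\xiavg(n)))=m\polylog(n)$ work.

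Next I would apply the standard $(1+\eps)$-rounding reduction with a constant $\eps$: running the unweighted algorithm of the previous step on each of the $O(\log(U+1))$ weight classes (which live on edge-disjoint subgraphs, hence can be run sequentially at an $O(\log(U+1))$ factor, or pipelined) turns it into a deterministic algorithm that, on any $N$-node weighted $r$-cluster-graph of aspect ratio $U$, outputs an $\alpha$-spanner with $N\cdot s(N)$ edges where $\alpha=O(\log^3 n)$ and $s(N)=O(\log n\log(U+1))$. Plugging this into \Cref{thm:p1-sparse-ultrasparse} with parameter $t$ immediately yields, for every $t\ge 1$, a deterministic algorithm computing an ultra-sparse spanner of any weighted $n$-vertex graph with $n+n/t$ edges and stretch $O(t\cdot s(n)\cdot\alpha)=O(t\log n\log(U+1)\cdot\log^3 n)=O(t\log^4 n\log(U+1))$, as claimed. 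For the round complexity, \Cref{thm:p1-sparse-ultrasparse} charges $O(t\,s(n)\log^* n)$ plus the cost of one invocation of the weighted cluster-graph spanner on a cluster-graph of radius $r=O(t\,s(n))=O(t\log n\log(U+1))$; combining this with the bounds above and the standard assumption $\log(U+1)=O(\log n)$ for the round count, a routine accounting of the accumulated polylogarithmic factors (the $O(\log n)$-factor recursion inside \Cref{thm:final-result-weak-clustering-to-utrasparse-unweighted}, the $O(\log(U+1))$ weight classes, the $\polylog(n)$ cost of~\cite{rozhovn2020polylogarithmic}, the cluster-graph simulation overhead, and the factor $t$) gives $O(t\log^{10}n)$ rounds. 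The \pram claim follows in the same way from the \pram bullet of \Cref{thm:final-result-weak-clustering-to-utrasparse-unweighted} together with the \pram variant of \Cref{thm:p1-sparse-ultrasparse}, both of which preserve $\polylog(n)$ depth and $m\polylog(n)$ work.

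The main obstacle is not a single hard argument but the bookkeeping concentrated in the first and last steps: extracting from~\cite{rozhovn2020polylogarithmic} exactly a $3$-separated clustering with the stated $O(\log^3 n)$ weak-diameter and $O(\log n)$ average overlap (and verifying that the $3$-separation modification and the $r$-cluster-graph simulation do not inflate these quantities), and then carefully tracking how the polylogarithmic factors multiply through the two reductions — in particular through the recursive cluster-graph invocation inside \Cref{thm:p1-sparse-ultrasparse}, whose radius parameter itself scales with $t\,s(n)$ — so as to land precisely at the advertised $O(t\log^{10}n)$ rounds and $m\polylog(n)$ work.
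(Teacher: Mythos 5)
Your proposal is correct and follows essentially the same chain as the paper's proof: the deterministic weak-diameter clustering of \cite{rozhovn2020polylogarithmic}, then \Cref{thm:final-result-weak-clustering-to-utrasparse-unweighted}, then the folklore weighted-to-unweighted reduction over the $O(\log(U+1))$ weight classes, and finally \Cref{thm:p1-sparse-ultrasparse}, with the same parameter accounting landing at stretch $O(t\log^4 n\log(U+1))$ and $O(t\log^{10} n)$ rounds. Your worry about enforcing $3$-separation is moot, since the clustering theorem the paper invokes (Theorem 1.12 of \cite{rozhovn2020polylogarithmic}) already guarantees $10$-separated clusters.
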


% Since our distributed algorithm is work-efficient it also directly implies the first deterministic \pram construction for ultra-sparse spanners in weighted graphs:

% \begin{corollary}\label{cor:result-weighted-ultra-randomized-pram}
% There is a deterministic $\polylog n$-time \pram algorithm with $O(m \polylog n)$ work that given any $t >1$ and run on any $n$-vertex weighted graph $G$ computes an ultra-sparse spanner with $n + n/t$ edges and stretch $O(t \log^4 n \log (U+1))$, where $U \geq 1$ is the aspect ratio of the weights. 
% \end{corollary}

%%%%%%

% \newpage
\subsection{Connectivity Certificates: Background and Our Contribution}
\paragraph{Definition and Motivation.} For a graph $G=(V, E)$, a \emph{$k$-connectivity certificate} is a spanning subgraph $H=(V, E')$ such that if $G$ is $k$-edge-connected, so is $H$. 
This high connectivity property can be relevant for resilience against link failures or for higher communication capacity. Concretely, if $G$ can withstand the failure/removal of any $k$ of its edges and it would still remain connected, the same should be true also for $H$. Similarly, if the minimum cut in $G$---which is in some sense the communication bottleneck in the network as it is the smallest number of edges between one set of nodes and the rest---has size $k$, the same should be true for $H$. 

Any $k$-edge-connected graph must have at least $nk/2$ edges, as each node must have degree at least $k$. Hence, the sparsest that the connectivity certificate $H$ can be is to have $nk/2$ edges. Considering this, any connectivity certificate that has $O(nk)$ edges is called a \emph{sparse connectivity certificate}. We note that an approximation version of this problem has also been widely studied under the notion of $k$-edge-connected spanning subgraph ($k$-ECSS), where for each given $k$-edge-connected graph $G$, the objective is to compute the sparsest possible $k$-edge-connected subgraph $H$, and the performance is measured in terms of the ratio of the number of edges in the computed subgraph $H$ to the smallest possible.

\paragraph{State of the Art.} Thurimella~\cite{thurimella1997sub} gave the first distributed algorithm for computing a sparse connectivity certificate, with $k(n-1)$ edges, and the round complexity of $O(k(D+\sqrt{n}))$ in the \CONGEST model. Here, $D$ denotes the network diameter. Primarily coming from the side of the $k$-ECSS problem, Censor-Hillel and Dory~\cite{censor2020fast} investigated the special case of $k=2$ and they gave an $O(D)$ round randomized distributed algorithm that computes a $2$-connectivity certificate with $O(n)$ edges with high probability, i.e., an $O(1)$ approximation for $2$-ECSS. Then, Dory~\cite{dory2018distributed} provided an $O(D\log^3 n)$ round randomized distributed algorithm that computes a $3$-connectivity certificate with $O(n\log n)$ edges with high probability, thus an $O(\log n)$ approximation for $3$-ECSS. Daga et al.~\cite{daga2019distributed} improved the algorithm of Thurimella~\cite{thurimella1997sub} to achieve a round complexity of $\tilde{O}(D+\sqrt{nk}))$.

Finally, Parter~\cite{parter2019small} improved the complexity significantly by providing an $O(k \poly(\log n))$-round randomized algorithm that computes a $k$-connectivity certificate with $O(k n)$ edges, with high probability. Thus, this also gives an $O(1)$ approximation for the $k$-ECSS problem in $O(k \poly(\log n))$ rounds. Notice that this complexity can still grow larger even up to $n\poly(\log n)$ as $k$ grows.
Parter~\cite{parter2019small} also gave a faster $\poly(\log n)$-round algorithm but for a slightly weaker notion of \emph{approximate}-certificate. Concretely, given the parameter $k$, the algorithm runs in $\poly(\log n)/\eps^2$ rounds and computes a spanning subgraph $H$ with $O(k n)$ edges such that, if $G$ is $k$-edge-connected, then $H$ is $k(1-\eps)$-edge-connected, with high probability. It remained open whether the same $\poly(\log n)$ round complexity can also suffice for sparse connectivity certificates, without reducing the connectivity value to the approximate version. 

\paragraph{Our contribution.} We present a simple algorithm that resolves the above question. Concretely, we show a $\poly(\log n)$-round randomized distributed algorithm that computes a $k$-connectivity certificate with $(1+o(1))kn$ edges, with high probability. 

\begin{theorem}
\label{thm:certificate-largek}
For any $\eps < 1/2$, there is a randomized distributed algorithm that computes a $k$-connectivity certificate in $\frac{\polylog(n)}{\eps^3}$ rounds with $k n(1 + \eps)$ edges.
\end{theorem}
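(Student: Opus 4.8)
The plan is to bootstrap the approximate connectivity certificate of Parter~\cite{parter2019small} into an \emph{exact} one, losing only a $(1+\eps)$ factor in the number of edges, by composing a few approximate certificates computed on successively ``residual'' graphs with geometrically shrinking connectivity targets.

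\textbf{Reformulation and the composition lemma.} I would first pass to the standard per‑cut (Nagamochi--Ibaraki) formulation: it suffices to output $H\subseteq G$ with $|E(H)|\le(1+\eps)kn$ such that \emph{every} cut $(S,\bar S)$ has $|E_H(S,\bar S)|\ge\min(|E_G(S,\bar S)|,k)$, since this already implies $H$ is $k$-edge-connected whenever $G$ is, and moreover an exact such $\kappa$-certificate needs only $\le\kappa(n-1)$ edges. The engine is the composition lemma: if $H_1$ satisfies the per-cut property with parameter $\kappa_1$ on $G$ and $H_2$ satisfies it with parameter $\kappa_2$ on $G-E(H_1)$, then $H_1\cup H_2$ satisfies it with parameter $\kappa_1+\kappa_2$ on $G$ (a one-paragraph case analysis on whether a cut lies above or below the thresholds). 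I would also prove its approximate version: if $H_1$ only guarantees $|E_{H_1}(S,\bar S)|\ge(1-\delta)\min(|E_G(S,\bar S)|,\kappa_1)$ per cut, then composing it with an \emph{exact} $\lceil\delta\kappa_1\rceil$-certificate of the residual $G-E(H_1)$ recovers an exact $\kappa_1$-certificate of $G$; this follows because for any cut one has $|E_{H_1}|+\min\!\big(|E_{G-E(H_1)}|,\lceil\delta\kappa_1\rceil\big)\ge\min(|E_G|,\kappa_1)$, using $|E_{G-E(H_1)}|=|E_G|-|E_{H_1}|$ and $\min(|E_G|,\kappa_1)-|E_{H_1}|\le\delta\kappa_1$.

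\textbf{The algorithm.} To build an exact $\kappa$-certificate of a graph $G'$: if $\kappa>\kappa_0:=\Theta(1/\eps)$, call Parter's approximate $\kappa$-certificate with a constant error $\delta=\tfrac12$ to obtain $H_1$ with $O(\kappa n)$ edges in $\polylog(n)$ rounds, delete $E(H_1)$, and recurse with parameter $\lceil\tfrac{\kappa}{2}\rceil$; if $\kappa\le\kappa_0$, finish by a single call to Parter's routine with error $\tfrac1{2\kappa}$ — since cut sizes are integers and $(1-\tfrac1{2\kappa})\kappa>\kappa-1$, this last call in fact yields an exact $\kappa$-certificate of the final residual. The connectivity targets form the geometric sequence $k,\tfrac k2,\tfrac k4,\dots,\kappa_0$, so the recursion has $O(\log n)$ levels, giving $O(\log n)\cdot\polylog(n)=\polylog(n)$ rounds for the recursion plus $\polylog(n)\cdot\kappa_0^{2}=\polylog(n)/\eps^{2}$ for the base case; the $\polylog(n)/\eps^{3}$ budget in the statement leaves room to instead run each level with a slightly smaller per-level error to control the edge count, and to take a union bound over the $O(\log n)$ levels (each correct with high probability) for the overall high-probability guarantee.

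\textbf{Main obstacle.} The delicate part is pushing the leading constant down to $1+\eps$ rather than the $O(1)$ that the naive composition above delivers: Parter's routine is only claimed to output $O(\kappa n)$ edges, and a bounded geometric sum of such pieces is still $\Theta(kn)$ with a constant strictly above $1$. To get $(1+\eps)kn$ I would instead build the certificate as $(1+O(\eps))k$ approximate spanning forests peeled off the residual à la Nagamochi--Ibaraki — each of them a forest, hence $\le n-1$ edges, for a total of $(1+O(\eps))k(n-1)<(1+\eps)kn$ — where the $\polylog(n)$-round computation of each forest is exactly where Parter's low-diameter-clustering machinery is used, at the price of each forest missing a controlled set of edges that is then absorbed by the $O(\eps k)$ extra forests; the crux is charging this accumulated slack so that every cut with $\ge k$ edges still retains $\ge k$ of them. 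A secondary technical point is that the composition is applied to residual graphs $G-E(H_1)$ that need not be highly connected (or even connected), so the subroutine must provide its guarantee per cut rather than only in the ``$G$ is $k$-edge-connected $\Rightarrow$ $H$ is $(1-\eps)k$-edge-connected'' form; I would either cite that Parter's construction already yields this stronger per-cut statement or re-derive it from the clustering toolkit used elsewhere in this paper.
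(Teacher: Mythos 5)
Your proposal departs from the paper's route in an interesting way, but as written it has a genuine gap that sits exactly at the heart of the theorem. Your recursive composition (Parter's approximate certificate with constant error on $G$, then an exact certificate with geometrically smaller target on the residual) is sound as a correctness skeleton --- the composition lemma and its approximate variant check out --- but, as you yourself observe, it only yields $O(kn)$ edges with a leading constant strictly above $1$, since Parter's routine is only guaranteed to output $O(\kappa n)$ edges per level. Your proposed fix, peeling $(1+O(\eps))k$ near-forests \`a la Nagamochi--Ibaraki, does control the edge count, but it is inherently sequential in $k$: forest $i+1$ must be computed on the residual after removing forests $1,\dots,i$, so even at $\polylog(n)$ rounds per forest this costs $\Omega(k)\cdot\polylog(n)$ rounds. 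That linear dependence on $k$ is precisely the obstacle the theorem is meant to remove, and nothing in your write-up explains how to avoid it; the two halves of your proposal (fast but lossy recursion, slow but tight peeling) are not reconciled. A secondary unresolved dependency is the one you flag yourself: the residual graphs are not $\kappa$-edge-connected, so you need Parter's construction to satisfy a per-cut guarantee of the form $|E_{H}(S,\bar S)|\ge(1-\delta)\min(|E_{G'}(S,\bar S)|,\kappa)$ rather than the stated ``$\kappa$-connected $\Rightarrow$ $(1-\delta)\kappa$-connected'' form; this is plausible but is not something you can simply cite from the statement given in this paper.

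For comparison, the paper resolves the tension differently: it first builds a \emph{slow} exact certificate by sequentially peeling $k$ ultra-sparse spanners (each with at most $n(1+\eps)$ edges, so the total is $kn(1+\eps)$, and each peel gives the per-cut ``all or at least $k$ edges'' guarantee), costing $k\cdot\polylog(n)/\eps$ rounds; it then removes the dependence on $k$ by Karger's random edge-partition into $Q=\Theta(k\eps^2/\log n)$ parts, so that with high probability every cut splits proportionally among the parts, runs the slow algorithm in parallel on all parts with the tiny target $k'=\Theta(\log n/\eps^2)$, and shows by a two-case analysis on the cut size that the union is an exact $k$-certificate with $nk(1+O(\eps))$ edges. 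The random edge-partition is the missing ingredient in your argument: it is what lets one pay the sequential peeling cost only up to connectivity $O(\log n/\eps^2)$ while still certifying connectivity $k$ exactly. If you want to salvage your recursion, you would still need some such parallelization device for the low-constant peeling phase.
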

This sparsity is within a $2+o(1)$ factor the best possible as any $k$-connectivity certificate needs at least $k n/2$ edges. Hence, our algorithm gives a $2+o(1)$ approximation for the (unweighted) $k$-ECSS problem in $\poly(\log n)$ rounds. Due to the space limitations, the entire proof of \Cref{thm:certificate-largek} is deferred to \Cref{sec:conn-certificate}. 

We note that, similar to Daga et al.~\cite{daga2019distributed} and Parter~\cite{parter2019small}, we also use Karger's edge-sampling~\cite{karger1999random} to split the graph into many edge disjoint parts and paralellize the work. Daga et al.~\cite{daga2019distributed} perform a tree packing in each of these parts. Parter packs sparse spanners, in each part, but loses a $1+\eps$ in the connectivity. We also used packings of ultra-sparse spanners, and we show by a simple case analysis cut sizes that we reach the exact $k$-connectivity, without the $1+\eps$ factor loss. We note that this $1+\eps$ factor loss can be important in applications, e.g., in the framework of Daga et al.~\cite{daga2019distributed} for computing min-cut, this loss would downgrade exact min-cut algorithms to $1+\eps$ approximation, which is a much easier problem.

% \newpage
\section{Basic Notations}
\label{sec:prelim}
The input is a graph $G$, with $n$ nodes and $m$ edges. If $G$ is weighted, we assume all edge weights are non-negative and are bounded by $\poly(n)$. We denote via $d_{G}(u, v)$ the weight (i.e., length) of the shortest path between two nodes $u, v \in V(G)$. We may drop the subscript when the graph is clear from context. The distance function extends to sets by $d(A, B) = \min_{a \in A, b \in B} d(a,b)$ and we write $d(u,S)$ instead of $d(\{u\}, S)$. We denote by $\diam(G)$ the diameter of a graph $G$ which is the maximum pairwise distances between nodes of $G$. We may use $\diam(C)$ to denote the diameter of the induced subgraph $G[C]$.

For a weighted graph $G$, a \emph{cluster} $C \subseteq V(G)$ is simply a subset of its nodes. A \emph{clustering} $\fC$ is a set of disjoint clusters. We say that a clustering $\fC$ is a \emph{partition} if $V(G) = \bigcup_{C \in \fC} C$. The clustering (partition) $\fC$ is an $r$-clustering ($r$-parition) if there is a rooted tree with radius at most $r$ in the subgraph induced by each of its cluster. A rooted tree has radius $r$ if the maximum hop-distance (i.e. the number of edges in the shortest path) between a leaf and the root is $r$. We say an edge is a \textit{boundary-edge} of a cluster $C$ if exactly one of its endpoints is in $C$. An edge is an \textit{inside-edge} of $C$ if both of its endpoints are in $C$. A graph $H$ is called an $r$-cluster-graph of $G$ if it is obtained from $G$ by contracting each cluster of an $r$-clustering to a single node. If $v$ is a node of $H$, then $\inv^{H \rightarrow G}(v)$ represents the set of nodes that are contracted to $v$. Let us emphasize that in the definition of $r$-clustering and $r$-cluster-graphs, the parameter $r$ does not depend on edge weights and only depends on hop-distances.

\section{Derandomization}
\label{sec:derand}

\paragraph{Baswana-Sen Algorithm~\cite{baswana2007simple}.} The spanner is constructed in $k$ iterations. At first, all nodes and edges of $G$ are \textit{alive}. The input of iteration $i$ is a $(i-1)$-partition of the nodes that remain alive after the first $i-1$ iterations. During one iteration, some nodes die. When a node dies, all of its incident edges die as well. Moreover, some edges incident to an alive node may die during one iteration. By adding a relatively small set of edges to the spanner in iteration $i$, we ensure that all edges that die in this iteration have  a stretch at most $2i-1$. We also ensure that all nodes die after the last iteration. So all edges die and as a result, we have a spanner with stretch $2k-1$ at the end. The output of iteration $i$ is the input of iteration $i+1$. The input for the first iteration is the trivial partition of all nodes (one cluster for each node). The details of iteration $i$ are given in the following:

\begin{enumerate}[(1)]
    \item We sample each cluster with probability $p = n^{-1/k}$ for $i \leq k-1$. In the last iteration (when $i=k$), we sample each cluster with probability zero, i.e. no cluster is sampled.

    \item Each node adds some (possibly zero) edges to the spanner. For a node $v$ with $d$ adjacent clusters, let $e_i$ be the edge with the smallest weight $w_i$ among all edges between $v$ and its $i$-th adjacent cluster $C_i$ and assume $w_1 \leq w_2 \leq \dots \leq w_d$. A cluster $C$ is adjacent to $v$ if $v$ has a neighbor in $C$. If $v$ is in a sampled cluster, it does nothing. If $v$ is an unsampled cluster, let $i$ be the smallest integer such that $C_i$ is sampled. If there is such an $i$, node $v$ joins $C_i$. The edge $e_i$ along with all $e_j$ for which $w_j$ is strictly less than $w_i$ are added to the spanner. If there is no such an $i$, node $v$ dies and all $d$ edges $e_1, \dots, e_d$ are added to the spanner. In all of the above cases for $v$, whenever we add an edge $e_i$ to the spanner, all edges between $v$ and $C_i$ die. 
    
    \item The output of iteration $i$ is an $i$-partition on the set of nodes that are alive after iteration $i$. The partition has one cluster $C'$ for each sampled cluster $C$ along with all nodes that are joined to it. When a node $v$ joins $C$, its parent in $C$ is the node to which it has an edge with smallest weight (note that this edge is in the spanner from step (2)). Observe that the radius of $C'$ is at most the radius of $C$ plus one.
\end{enumerate}

In the following lemma we collect deterministic properties of the above construction proven in \cite{baswana2007simple}. 

\begin{lemma}[\cite{baswana2007simple}]
\label{lem:bs-stretch}
For any cluster $C$ in the output of iteration $i$, we have: Radius of $C$ is at most $i$. For any alive boundary-edge $\{u \not \in C,v \in C\}$ of $C$ with weight $w$, all edges in the unique path from $v$ to the root of $C$ have weight at most $w$. For any alive inside-edge $\{u \in C, v\in C\}$ of $C$ with weight $w$, all edges in the unique path between $u$ and $v$ in $C$ have weight at most $w$. 
It holds that all dead edges in iteration $i$ have stretch $2i-1$. So the final spanner has stretch $2k-1$ as all nodes die in the last iteration. All these properties are deterministic in the sense that they hold regardless of the way we sample clusters.
\end{lemma}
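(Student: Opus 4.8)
The plan is to fix an arbitrary (even adversarial) choice of which clusters get sampled in each of the $k$ iterations, and to establish all the stated properties by induction on the iteration index $i$. Two bookkeeping facts are used repeatedly, both immediate from the description of the algorithm: every parent edge added in step (2) of any iteration lies in the spanner, so the tree $T_C$ of any cluster $C$ ever created consists entirely of spanner edges; and aliveness is monotone, i.e.\ an edge or node that is alive after iteration $i$ was alive after every earlier iteration as well.

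The radius bound is the easy part: a cluster $C'$ in the output of iteration $i$ is a sampled cluster $\wh C$ from the output of iteration $i-1$ together with joining nodes, each of which becomes a child of a node of $\wh C$; hence $\rad(C')\le\rad(\wh C)+1$, and starting from the singleton clusters this gives $\rad(C')\le i$. For the ``monotone weights along root paths'' statement (the boundary-edge claim) I would again induct on $i$. A node that already belonged to $\wh C$ keeps its root path, and its alive boundary edges stay alive boundary edges of $\wh C$, so the inductive hypothesis applies directly. For a node $v$ that joins $\wh C$ during iteration $i$, the key observation is that $v$ entered through its minimum-weight edge to $\wh C$ --- call its weight $p$ --- and that in the same step every strictly lighter edge from $v$ to any other adjacent cluster was added to the spanner and its whole bundle killed. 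Consequently any boundary edge of $C'$ still alive at $v$ after iteration $i$ has weight $w\ge p$: its other endpoint lies in a cluster adjacent to $v$ different from $\wh C$, and a strictly lighter edge from $v$ to such a cluster would have been killed in $v$'s step. Now the root path of $v$ in $T_{C'}$ is the parent edge of weight $p\le w$ followed by the root path of $\parent(v)$ inside $T_{\wh C}$; applying the inductive hypothesis to the boundary edge $\{v,\parent(v)\}$ of $\wh C$ (alive at the end of iteration $i-1$, of weight $p$) bounds the remaining edges by $p\le w$.

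The inside-edge statement I would derive from the boundary-edge one, using the elementary fact that the tree path between two nodes is contained in the union of their two root paths. Given an alive inside edge $\{u,v\}$ of a cluster $C'$ in the output of iteration $i$, I split on whether each of $u,v$ already lay in the sampled cluster $\wh C$ or joined it in iteration $i$: in every case the tree path from $u$ to $v$ in $T_{C'}$ consists of at most two parent edges of freshly joined endpoints plus a portion lying in $T_{\wh C}$. Each such parent edge has weight at most $w:=w(\{u,v\})$ by the same ``surviving forces the edge to be at least as heavy as the parent edge'' argument as above, and the $T_{\wh C}$-portion is controlled by the boundary-edge property (applied to the parent edges, viewed as boundary edges of $\wh C$) together with, in the subcase where both $u$ and $v$ already lay in $\wh C$, the inside-edge inductive hypothesis. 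I expect this case analysis to be the fiddliest part of the whole argument; the subcase where $u$ and $v$ both join $\wh C$ in iteration $i$ from two distinct old clusters is the one that needs the most care, since there one must argue that if $\{u,v\}$ were lighter than the parent edge of whichever endpoint is processed, that edge would be killed, which is what forces both parent-edge weights down to $\le w$.

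It remains to bound the stretch of an edge $\{u,v\}$ that dies in iteration $i$. If the algorithm explicitly puts $\{u,v\}$ into the spanner in step (2), the stretch is $1\le 2i-1$. Otherwise $\{u,v\}$ died as a side effect of step (2) at one of its endpoints, say $u$; then the cluster $C$ containing $v$ at the start of iteration $i$ is one of $u$'s adjacent clusters (if $v$ instead lay in $u$'s own cluster, conclude directly from the already-proved inside-edge statement), the minimum-weight edge $e'$ from $u$ to $C$ was added to the spanner, and $w(e')\le w:=w(\{u,v\})$. Both $e'$ and $\{u,v\}$ are alive boundary edges of $C$ at the end of iteration $i-1$ of weight $\le w$, and $\rad(C)\le i-1$, so by the boundary-edge property the root paths in $T_C$ of $v$ and of the $C$-endpoint of $e'$ each consist of at most $i-1$ spanner edges of weight $\le w$. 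Concatenating $e'$ with these two root paths yields a $u$--$v$ walk in the spanner of length at most $w(e')+(i-1)w+(i-1)w\le(2i-1)w$, as required. Finally, in iteration $k$ no cluster is sampled, so every node still alive has no sampled adjacent cluster, dies, and contributes all of its surviving edges; hence all edges are dead at the end and the spanner has stretch $2k-1$.
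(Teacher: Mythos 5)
The paper does not actually prove this lemma---it is stated with a citation and imported wholesale from Baswana--Sen---so there is no in-paper argument to compare against; your reconstruction is the standard one and is essentially correct. The induction on $i$, the ``a surviving edge must be at least as heavy as the joining edge, or its whole bundle would have been killed'' observation, and the routing of a dead edge through $e'$ plus two weight-bounded root paths of hop-length at most $i-1$ are exactly the right ingredients, and the final claim that iteration $k$ kills every remaining node and edge is handled correctly. One subcase deserves to be made explicit rather than left implicit: in the inside-edge argument you flag only the case where $u$ and $v$ join $\wh{C}$ from two \emph{distinct} old clusters, but the case where they come from the \emph{same} unsampled cluster $B$ is the genuinely delicate one. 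There the bound $p_u,p_v\le w$ does not follow from any inter-cluster killing; it holds only because the paper's definition declares $v$'s \emph{own} cluster adjacent to $v$ whenever $v$ has a (surviving) neighbour in it, so a $B$-internal edge strictly lighter than $v$'s joining edge gets added to the spanner and its bundle killed in $v$'s step. Under the alternative (and common) reading in which intra-cluster edges are simply never examined, one can build a small example---two siblings $u,v$ in $B$ joined by a light surviving edge, each entering $\wh{C}$ via a heavy edge---that violates the inside-edge invariant, so this is precisely the point where the convention, or an explicit intra-cluster-edge removal step as in the original Baswana--Sen formulation, must be invoked.
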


The lemma above provides a stretch guarantee we need. It remains to show the expected size of final spanner. For that, let us first define a hitting-event. We write all the stretch analysis in terms of these kind of events as we need this for derandomization. We discuss the reason in a moment. 

\begin{definition}
A binary random variable $E$ is a \textit{hitting-event} over the set of binary random variables $\{X_1,\dots,X_c\}$, if there is a subset $S \subseteq [c]$ such that $E = \bigvee_{j\in[c]} X_j$.
\end{definition}

%For each $i \in [k]$, instead of full independence, we can sample $\{X^{(i)}_j\}_{j \in [c_i]}$ with $O(\log n \log \log n)$ random bits such that all hitting-events are preserved up to an additive factor $n^{-10}$. More concretely, let $p_E$ be the probability that a hitting-event $E$ is one when all $\{X^{(i)}_j\}_{j \in [c_i]}$ are independent. Let $\tilde{p}_E$ be the probability when we sample with $O(\log n \log \log n)$ bits. Then, $|p_E - \tilde{p}_E| \leq n^{-{10}}$ (see \cref{app:short-seed-length} for details). Let $c_i$ be the number of clusters in the input of iteration $i$, and $X^{(i)}_{j}$ be a binary random variable that is one if $j$-th cluster in the input of iteration $i$ is sampled and it is zero otherwise. To adapt the size analysis for derandomization, we do not assume full independence among random variables $\{X^{(i)}_j\}_{j \in [c_i]}.$ Although, we assume full independence across the iterations. 

\paragraph{Number of clusters.} For $i > 1$, we have $$\E[c_i] = \sum_{j \in [c_i]} \Prob[X_j^{(i)} = 1] = p\cdot\E[c_{i-1}].$$ Since $c_1 = n$, we have $\E[c_i] = np^{i-1}$.

\paragraph{Last iteration.} For $i = k$, there are $np^{k-1} = n^{1/k}$ clusters in expectation. Since there are $n$ nodes, at most $n^{1+1/k}$ edges are added in expectation for $i=k$.

\paragraph{First $k-1$ iterations.} Consider an alive node $v$ with $d$ adjacent clusters in iteration $i$. Without loss of generality, assume that its $j$-th adjacent cluster is $C^{(i)}_j$ and the smallest edge weight between $v$ and $C^{(i)}_j$ is larger or equal than the smallest edge weight between $v$ and $C^{(i)}_{j-1}$.
Let $X^{(i)}_j$ be the indicator random variable that the cluster $C^{(i)}_j$ is selected in the $i$-th iteration. 
The node $v$ adds at most $$1 + \sum_{j \in [d]} \Prob[\bigvee_{\ell \in [j-1]} X^{(i)}_\ell = 0] < 1 + \sum_{j=0}^{\infty} (1-p)^j = O(1/p)$$ edges in expectation. There are $n$ nodes, so each iteration adds $O(n/p)$ edges. In total and including the last iteration, we get the claimed size bound $O(nk/p) = O(n^{1+1/k}k)$.

\paragraph{Unweighted graphs.} In this case, when a node remains alive during an iteration, it adds at most one edge to the spanner (since we only add edges with \textbf{strictly} smaller weights). So, in total, at most $n(k-1)$ edges are added from nodes that remain alive during an iteration. For the contribution of dead nodes, consider an alive node $v$ with $d$ adjacent clusters in iteration $i$. Let $S \subseteq [c_i]$ with $|S| = d$ be the set of adjacent clusters of $S$. Node $v$ dies in iteration $i$ with probability at most 
$$\Prob[\bigvee_{j \in S} X_j^{(i)} = 0] = (1 - p)^d \leq e^{-pd}.$$ 
So a node can add up to $de^{-pd}$ edges, in expectation. Function $x \rightarrow xe^{-px}$ is maximized at $x = 1/p$ where its value is $\Theta(1/p)$. 
From this and since there are $n$ nodes, each iteration adds $O(n/p)$ edges, in expectation. 
This results in the final expected size of $O(nk/p)$. 
This is the same as in the weighted case. 

To improve this bound, suppose $d$ is larger than a threshold $\tau = \ln(k)/p$. Function $x \rightarrow xe^{-px}$ is decreasing in the range $[1/p,+\infty)$. Thus the total expected contribution of such $v$ in iteration $i$ is $n\cdot(\tau e^{-p\tau}) = O(n\log k/(pk))$. There are $k-1$ iterations, so $O(n\log k/p)$ edges added from these nodes in expectation. The bound for nodes with $d \leq \tau$ is deterministic. The overall contribution of these nodes in all the first $k-1$ iterations is at most $n\tau$ (since each node dies only once and there are at most $n$ nodes). Considering the contribution of $n^{1+1/k}$ edges in the last iteration, the expected size of the final spanner is the claimed bound $O(nk + n^{1+1/k}\log k)$.

\paragraph{High-degree nodes.} For implementation, we need one more ingredient. An alive node $v$ with $d$ adjacent cluster in iteration $i$ is called \textit{high-degree} if $d \geq \xi = 10\ln n/p$. The probability that such a node dies in iteration $i$ is a hitting-event over $\{X^{(i)}_{j}\}_{j \in [c_i]}$ and is at most $(1-p)^{\xi} \leq e^{-p\xi} = n^{-10}.$ So by union bound, no high-degree node dies during iteration $i$ with probability at least $1 - n^{-9}$.

\paragraph{Reducing randomness.} In the above, we assume full independence for sampling in each iteration and between the iterations. The analysis still goes through with less randomness inside each iteration but keeping the independence across the iterations. More concretely, for each $i$, there is a distribution $\mathcal{P}$ from which we can sample with $O(\log n \log \log n)$ random bits to generate $\{X^{(i)}_j\}_{j\in [c_i]}$ such that it approximates each hitting-event with additive factor $1/\poly(n)$. To be more precise, let $E$ be a hitting-event over $\{X^{(i)}_j\}_{j\in [c_i]}$ and let $p_E$ be the probability that $E$ is 1 assuming that each $X^{(i)}_j$s is sampled independently of the other. Let $\tilde{p}_E$ be $\Prob_{\mathcal{P}}[E = 1]$. Then $|p_{E} - \tilde{p}_{E}| \leq n^{-10}$. Let us emphasize that the distribution $\mathcal{P}$ is independent of $E$. For details, please see \cref{app:short-seed-length}. In derandomization, we use the method of conditional expectation where we fix bits of the random seed one by one. So for a polylogarithmic rounds algorithm, such a short seed is needed. 

The only randomized part of each iteration is cluster sampling. If we derandomize this part, the whole algorithm becomes deterministic. The next lemma gives a formal statement of our derandomization for sampling. It is written in a parametric form as we need it later for computing linear size spanners. There is no randomness in the last iteration, so it is already deterministic and we ignore that. A direct implication of the following lemma is \cref{thm:kstretch}.
\begin{lemma}
\label{lem:detbs}
For any positive integer $g$ and real number $p \in (1/n,1)$, there is a deterministic distributed algorithm that ``simulates'' $g$ iterations of Baswana-Sen with sampling probability $p$ in $g^2 \cdot \polylog(n)$ rounds. That is, (1) it adds $O(ng/p)$ and $O(ng + \frac{n \log g}{p})$ edges to the spanner for weighted and unweighted graphs, respectively, (2) the number of clusters in the output of last iteration is at most $np^g$, (3) no high-degree node dies in any of $g$ iterations.
\end{lemma}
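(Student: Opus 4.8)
The plan is to run the $g$ Baswana-Sen iterations sequentially and, in each of them, to replace the single randomized step---the cluster sampling of step~(1)---by a deterministic choice obtained via the method of conditional expectations applied to the short-seed distribution $\mathcal{P}$ of \cref{app:short-seed-length}. When iteration $i$ begins, the current clustering (of radius at most $i-1$ by \cref{lem:bs-stretch}) is already fixed; what remains is to fix the indicators $\{X^{(i)}_j\}_j$ that decide which clusters are sampled. Since $\mathcal{P}$ produces these from a seed of only $O(\log n\log\log n)$ bits and approximates every hitting-event over $\{X^{(i)}_j\}_j$ up to an additive $1/\poly(n)$, we can afford to fix the $\polylog(n)$ seed bits one at a time, keeping a pessimistic estimator non-increasing at each step.

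The pessimistic estimator for iteration $i$ is a single weighted potential
\[
\Phi_i \;=\; \big(\text{number of edges added in iteration } i\big)\;+\;\Lambda\, c_{i+1}\;+\;M\,\big(\text{number of high-degree nodes dying in iteration } i\big),
\]
with fixed weights $\Lambda\gg M\gg 1$ of size $\poly(n/p)$; in the unweighted case $\Phi_i$ carries an extra term tracking only the edges contributed by dying nodes of degree larger than the threshold $\tau=\ln(g)/p$ (the $\le n\tau$ contribution of lower-degree dying nodes is deterministic and can be charged once, globally, over all $g$ iterations). The decisive point---and the reason the size analysis above was deliberately phrased through \emph{hitting-events}---is that every quantity feeding $\E_{\mathcal{P}}[\Phi_i]$ is a hitting-event probability (or its complement) over $\{X^{(i)}_j\}_j$: the number of edges a node adds, the probability it dies, the probability a high-degree node dies, and $c_{i+1}=\sum_j X^{(i)}_j$. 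Hence switching from full independence to $\mathcal{P}$ perturbs $\E[\Phi_i]$ by only $1/\poly(n)$, and the independence-based estimates survive: $\E_{\mathcal{P}}[\text{\# edges in iteration }i]=O(n/p)$ (weighted), $O\!\left(n+\tfrac{n\log g}{pg}\right)$ for the refined unweighted count, $\E_{\mathcal{P}}[c_{i+1}]\le c_i(p+n^{-10})$, and $\E_{\mathcal{P}}[\text{\# high-degree deaths}]\le 2n^{-9}$. Running the conditional expectations down to an actual seed yields a deterministic outcome with $\Phi_i\le\E_{\mathcal{P}}[\Phi_i]$, and for a suitable choice of $\Lambda,M$ this simultaneously forces (a) only $O(n/p)$ (resp.\ the refined unweighted bound) edges added in iteration $i$; (b) $c_{i+1}\le c_i(p+n^{-10})$; and (c) no high-degree death, the last because the death count is a nonnegative integer while $M\cdot(\text{\# deaths})\le\E_{\mathcal{P}}[\Phi_i]<M$. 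Summing (a) over $i=1,\dots,g$ gives $O(ng/p)$ edges for weighted graphs and, after adding the deterministic $\le n\tau=O(n\log g/p)$ from lower-degree dying nodes, $O(ng+n\log g/p)$ edges for unweighted graphs; telescoping (b) from $c_1=n$ gives $c_{g+1}\le n(p+n^{-10})^g = np^g(1+o(1))$.

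For the running time, each iteration consists of aggregations inside clusters of radius $O(g)$ (costing $O(g)$ rounds) interleaved with the deterministic global aggregations needed to evaluate the conditional expectation of $\Phi_i$ after each of the $\polylog(n)$ seed bits; using the derandomization framework underlying \cite{ghaffari2018derandomizing}, each such global step costs $\polylog(n)$ rounds. This is $g\cdot\polylog(n)$ rounds per iteration and $g^2\cdot\polylog(n)$ rounds over all $g$ iterations, and the algorithm is deterministic since both $\mathcal{P}$ and the bit-fixing rule are.

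I expect the main obstacle to be matching the \emph{exact} constant in claim~(2): the telescoping above only delivers $c_{g+1}\le np^g(1+o(1))$, not $c_{g+1}\le np^g$. I would resolve this by running the whole derandomization against the slightly smaller target probability $p'=p\,(1-n^{-2})$ in place of $p$ (and taking the approximation quality of $\mathcal{P}$ to be $1/n^{c}$ for a large constant $c$); this affects none of the edge or round bounds but pushes $\E_{\mathcal{P}}[c_{g+1}]\le n(p'+n^{-c})^g$ far enough below $np^g$ that the integer $c_{g+1}$ produced by the process is at most $np^g$. A secondary point needing care is that all the conditional expectations of $\Phi_i$ under $\mathcal{P}$ must be computable from a partial seed within the stated round budget; this is precisely the guarantee that the hitting-event approximation property of $\mathcal{P}$ (\cref{app:short-seed-length}) is set up to provide, so it can be invoked directly.
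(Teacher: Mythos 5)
Your overall route is the paper's: fold the three objectives into one potential, use the hitting-event structure so that the short-seed distribution of \cref{app:short-seed-length} perturbs its expectation by only $1/\poly(n)$, fix the seed bits by conditional expectation, and buy slack for the cluster-count target by sampling below $p$ (the paper samples at $p/4$; your $p(1-n^{-2})$ serves the same purpose). Two points, however, do not work as written.

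First, your weight ordering $\Lambda\gg M$ is inconsistent with the way you rule out high-degree deaths. You argue $M\cdot(\#\text{deaths})\le\E_{\mathcal P}[\Phi_i]<M$, which requires the \emph{entire} expected potential to be below $M$; but $\E_{\mathcal P}[\Phi_i]\ge\Lambda\,\E[c_{i+1}]$, and with $\Lambda\gg M$ and $\E[c_{i+1}]\ge 1$ this already exceeds $M$. The paper's utility function resolves this the other way around: the death penalty $n^5$ strictly exceeds the target value $\iota n/p=O(n^2)$ of the whole potential, while the cluster-count term is kept below the target not by a small weight but by a small expectation (weight $\iota/p^{i+1}$ times expected count $np^i/4$ gives $\iota n/(4p)$). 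You need to swap your ordering, or more precisely choose the death weight larger than the target bound on $\E[\Phi_i]$. Second, the claim that each conditional-expectation step is a ``global aggregation'' costing $\polylog(n)$ rounds is unjustified in \congest, where a global sum costs $\Omega(D)$ rounds. This is where the real work of \cref{app:bs-derand} lies: one computes a weak-diameter network decomposition of the \emph{square} of the cluster graph (\cref{thm:networkdecomposition}), gives each ND-cluster its own independent $O(\log n\log\log n)$-bit seed, and lets all ND-clusters of one color fix their seeds in parallel over their $\polylog(n)$-radius Steiner trees; the squaring guarantees that the node sets $N(\mathcal D)$ affected by different same-color ND-clusters are disjoint, so the potential decomposes and the per-cluster greedy choices cannot conflict. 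Without this decomposition step the stated round complexity does not follow.
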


\begin{proof}
We derandomize each iteration separately. For iteration $i \in [g]$, we have three objectives:
\begin{enumerate}[(a)]
    \item The number of clusters in the output of iteration $i$, variable $c_{i+1}$, is at most $np^{i}$.
    \item Let $\iota$ be a large enough constant. For weighted graphs, we should add at most $\iota n/p$ edges. For unweighted graphs, nodes that die in the iteration and has at least $\tau = \ln g/p$ adjacent clusters should add at most $(\iota n\log g)/(pg)$ edges. From earlier discussion, the contribution of nodes that remain alive or the one that dies but has $d \leq \tau$ is within the final size budget, deterministically. So we ignore them.
    \item Ensuring that no high-degree node (having more than $10 \ln n/p$ adjacent clusters) dies.
\end{enumerate}
We combine all our objectives into one random variable (also known as utility function). For weighted graphs, we define $U_i^{\mathrm{w}}$ as 
\begin{align}
    \label{eq:w}
U_i^{\mathrm{w}} =  \iota/p^{i+1}\sum_{j \in [c_i]} X^{(i)}_j + \sum_{v \in V_i} (b_v + n^5 h_v).
\end{align}
For unweighted graphs, we define $U_i^{\mathrm{uw}}$ as
\begin{align}
  \label{eq:uw}
  U_i^{\mathrm{uw}} =  (\iota \ln g)/(gp^{i+1})\sum_{j \in [c_i]} X^{(i)}_j + \sum_{v \in V_i} (b_v + n^5 h_v).  
\end{align}

In the definitions of $U_i^{\mathrm{w}}$ and $U_i^{\mathrm{uw}}$, set of alive nodes in iteration $i$ is denoted by $V_i$. Random variable $b_v$ is the number of edges added by node $v$ (in the unweighted case, we set $b_v$ to zero if it is an ignored node). Random variable $h_v$ is one if $v$ is high-degree and dies in iteration $i$. It is zero otherwise. Suppose there is an assignment of $\{X^{(i)}_j\}_{j\in [c_i]}$ that makes $U^{\mathrm{w}}_i$ at most $\iota n/p$. So in this assignment $c_{i+1} = \sum_{j \in [c_i]} X^{(i)}_j \leq np^i$ and $\sum_{v \in V_i} b_v \leq \iota n/p$. Moreover, $\sum_{v \in V_i} h_v = 0$ as $\iota n/p = O(n^2)$ and the summation should be an integer. So all three required conditions (a), (b), and (c) hold. Similarly, for the unweighted case, an assignment with $U^{\mathrm{uw}} \leq (\iota n \log g)(pg)$ suffices. But why does such an assignment exist? For that, suppose we sample each $X^{(i)}_{j}$ independently with probability $p/4$ (and not $p$). From induction, we know that $c_i \leq np^{i-1}$ so $\E[\sum_{j \in [c_i]} X^{(i)}_j] \leq np^i/4$, and using earlier discussion, we know that $\sum_{v \in V_i} \E[b_v] \leq \iota n/(4p)$ and $\sum_{v \in V_i} \E[h_v] \leq n^{-5}$. So, we have: $$\E[U_i^{\mathrm{w}}] \leq \iota n/(4p) + \iota n/(4p) + n^{-5} < \iota n/p.$$ Similarly, we can show that $\E[U_i^{\mathrm{uw}}] < (\iota n \log g)/(pg).$

So there is such a good assignment for $\{X^{(i)}_j\}_{j \in [c_i]}$. Observe that each utility function can be written as $O(n^2)$ hitting-events, so if we approximate full independence with the distribution in \cref{app:short-seed-length}, the above analysis still goes through as it only incurs $O(n^2 \cdot n^{-10}) = O(n^{-8})$ error. To find such an assignment deterministically in $\polylog(n)$ rounds, we use the method of conditional expectation running over a network decomposition. The details are in \cref{app:bs-derand}. 

\paragraph{Computational Aspects.} We do not get work-efficient algorithms in the sense of having only $m\cdot \polylog(n)$ computation. In fact, the amount of computation of each node is slightly super-polynomial $2^{O(\log n \log \log n)} = n^{O(\log \log n)}$. However, this is somewhat similar to (and only better than) recent works that use conditional expectation for derandomization. In~\cite{ghaffari2018derandomizing, deurer2019deterministic}, they use $O(\log n)$-wise independence, and in~\cite{parter2018congested}, they use a distribution with $\log n (\log \log n)^3$ random bits. Hence, they need $n^{O(\log n)}$ and $n^{O((\log \log n)^3)}$ local computations, respectively.
\end{proof}

\subsection{Linear Size Spanners}
We also provide a $\polylog(n)$ rounds derandomization of Pettie's algorithm~\cite{pettie2010distributed} for weighted and unweighted graphs. Our result is stated in \cref{thm:linearsize} and its proofs is deferred to \cref{app:linear-size}. In the following, we define \textit{stretch-friendly} clustering which plays the key role in the stretch analysis of this algorithm and \cref{sec:reduction}. 

\begin{definition}
An $r$-cluster $C$ is stretch-friendly if for any boundary-edge $\{u \not \in C,v \in C\}$ of $C$ with weight $w$, all edges in the unique path from $v$ to the root of $C$ have weight at most $w$. Moreover, for any inside-edge $\{u \in C,v \in C\}$ of $C$ with weight $w$, all edges in the unique path between $u$ and $v$ in $C$ have weight at most $w$. An $r$-clustering ($r$-partition) is stretch-friendly if all of its clusters are stretch-friendly. 
\end{definition}

\begin{observation}
\label{obs:stretch-friendly}
Let $\mathcal{C}$ be a stretch-friendly $r$-partition of $G$ and $H$ be the $r$-cluster-graph induced by $\mathcal{C}$. Denote by $T_C$ the set of edges in the tree of cluster $C \in \mathcal{C}$. The union of an $\alpha$-spanner of $H$ with $\cup_{C \in \mathcal{C}} T_C$ is a $((2r + 1)(\alpha + 1) - 1)$-spanner of $G$.
\end{observation}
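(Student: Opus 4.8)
The plan is to take an arbitrary edge $\{x,y\} \in E(G)$ and exhibit a path between $x$ and $y$ in the claimed spanner $H' := (\text{$\alpha$-spanner of } H) \cup \bigcup_{C} T_C$ whose total weight is at most $((2r+1)(\alpha+1)-1)$ times $w(\{x,y\})$; since stretch is controlled by single edges, this suffices. Let $C_x$ and $C_y$ be the clusters of $\mathcal{C}$ containing $x$ and $y$ respectively (they exist because $\mathcal{C}$ is a partition). The degenerate case $C_x = C_y$ is handled first: then $\{x,y\}$ is an inside-edge of $C_x$, and stretch-friendliness says every edge on the tree path between $x$ and $y$ in $T_{C_x}$ has weight at most $w(\{x,y\})$; since that path has at most $2r$ edges, we get a path of weight at most $2r \cdot w(\{x,y\}) \le ((2r+1)(\alpha+1)-1)\cdot w(\{x,y\})$, using $\alpha \ge 1$.

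For the main case $C_x \ne C_y$, the edge $\{x,y\}$ is a boundary-edge of both $C_x$ and $C_y$, and it corresponds to an edge $e$ of the cluster-graph $H$ between the nodes representing $C_x$ and $C_y$. Applying the given $\alpha$-spanner guarantee in $H$, there is a path in the $\alpha$-spanner of $H$ from $C_x$ to $C_y$ using at most $\alpha$ (cluster-graph) edges; here I would note that every edge of $H$ has ``weight'' $1$ in the unweighted-$H$ sense, so the hop-length of this path is at most $\alpha$ — this is the one spot where I should be careful about what ``$\alpha$-spanner of $H$'' means, since $H$ inherits weights from $G$; the cleanest reading, consistent with how the reduction is used later, is that the spanner path uses at most $\alpha$ edges. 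Lift this path to $G$: each cluster-graph edge on it is a boundary-edge of two consecutive clusters, so it corresponds to an actual $G$-edge, and consecutive such $G$-edges are joined by walking through the tree $T_C$ of the intermediate cluster $C$.

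Now I bound the weight. The key point, and the only place stretch-friendliness is really used, is a weight-monotonicity argument along the spanner path: starting from the boundary-edge $\{x,y\}$ at cluster $C_x$, stretch-friendliness bounds the tree-path inside $C_x$ from $x$ to the attachment point by $w(\{x,y\})$, hence all edges of that portion — and in particular the next boundary-edge leaving $C_x$, which by the ordering assumption implicit in stretch-friendliness is the minimum-weight boundary edge to that neighbor — have weight at most $w(\{x,y\})$; iterating, every edge used along the whole lifted path (at most $\alpha$ boundary-edges plus the connecting tree-paths, each of hop-length at most $2r$) has weight at most $w(\{x,y\})$. Counting hops: $\alpha$ boundary-edges and at most $\alpha+1$ tree-segments (one in $C_y$, one in $C_x$, and $\alpha-1$ intermediate) each of at most $2r$ edges gives at most $\alpha + (\alpha+1)\cdot 2r = (2r+1)(\alpha+1) - 1$ edges, each of weight at most $w(\{x,y\})$, for total weight at most $((2r+1)(\alpha+1)-1)\,w(\{x,y\})$.

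The main obstacle I anticipate is the weight-monotonicity step: stretch-friendliness as stated controls tree-paths relative to a boundary-edge's weight, but to chain these bounds across several clusters I need that the boundary-edge picked by the $\alpha$-spanner path out of each intermediate cluster is no heavier than the edge by which the path entered — this requires invoking the ``smallest-weight boundary edge per neighbor'' convention built into the stretch-friendly definition (and matching the Baswana–Sen / Pettie construction) rather than an arbitrary boundary edge. Making that chaining precise, and being careful that the tree-path inside an intermediate cluster connects the entry boundary-edge endpoint to the exit boundary-edge endpoint (both inside that cluster, so the inside-edge clause of stretch-friendliness applies), is the technical heart; everything else is hop-counting.
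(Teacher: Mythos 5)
The paper states this observation without proof, so there is nothing to compare against directly; judged on its own, your argument is structurally on the right track (lift the spanner path of $H$ to $G$, stitching through the cluster trees) but the step you yourself flag as the ``technical heart'' is genuinely broken in the weighted case, which is the case the observation is actually needed for (\cref{thm:p1-sparse-ultrasparse} is about weighted cluster-graphs). Two problems. First, $H$ is weighted, so an $\alpha$-spanner of $H$ guarantees that the replacement path for the edge $\{C_x,C_y\}$ has \emph{total weight} at most $\alpha\cdot w_H(C_x,C_y)\le \alpha\, w(\{x,y\})$; it does \emph{not} guarantee at most $\alpha$ hops, and your ``cleanest reading'' silently restricts to unweighted $H$. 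Second, the weight-monotonicity chain is false: the individual edges $f_1,\dots,f_\ell$ of the spanner path in $H$ can each be heavier than $w(\{x,y\})$ (only their sum is controlled), and the definition of stretch-friendliness contains no ``minimum-weight boundary edge per neighbor'' convention you could invoke to force the exiting boundary edge of a cluster to be no heavier than the entering one. Even the first link fails: the tree path in $C_x$ from $x$ to the attachment point $a_1$ of $f_1$ is contained in the union of the root-path of $x$ (edges $\le w(\{x,y\})$) and the root-path of $a_1$ (edges $\le w(f_1)$), and $w(f_1)$ has no a priori relation to $w(\{x,y\})$.

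The fix is to drop monotonicity entirely and charge each tree segment to its two adjacent boundary edges. Let the spanner path in $H$ be $C_x=D_0,f_1,D_1,\dots,f_\ell,D_\ell=C_y$ with $\sum_i w(f_i)\le\alpha\, w(\{x,y\})$, and let $f_i=\{a_i,b_i\}$ with $a_i\in D_{i-1}$, $b_i\in D_i$. In each intermediate cluster $D_i$ the tree path from $b_i$ to $a_{i+1}$ lies in the union of the two root-paths, so it has weight at most $r\,w(f_i)+r\,w(f_{i+1})$; the segment in $C_x$ from $x$ to $a_1$ has weight at most $r\,w(\{x,y\})+r\,w(f_1)$ and symmetrically in $C_y$. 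Summing, the lifted path has weight at most $\sum_i w(f_i)+2r\sum_i w(f_i)+2r\,w(\{x,y\})\le\left((2r+1)\alpha+2r\right)w(\{x,y\})=\left((2r+1)(\alpha+1)-1\right)w(\{x,y\})$, which is exactly the claimed stretch. Your same-cluster case and your hop-counting numerology are fine (and your argument as written is correct for unweighted $G$), but as a proof of the observation in the generality in which the paper uses it, the monotonicity step is a gap that must be replaced by this charging argument.
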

\section{Deterministic Ultra-Sparse to Sparse Reduction}
\label{sec:reduction}

%We present a $O(t \log^* n)$-rounds deterministic algorithm that computes a stretch-friendly $O(t)$-partition with at most $n/t$ clusters. This helps us in computing ultra-sparse spanner using linear-size spanner. 

The main tool of this section is the following lemma.

\begin{lemma}
\label{lem:reduction}
There is a deterministic distributed algorithm that computes a stretch-friendly $O(t)$-partition with at most $n/t$ clusters in $O(t\log^* n)$.
\end{lemma}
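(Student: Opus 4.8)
The goal is to build a stretch-friendly $O(t)$-partition of the unweighted graph $G$ with at most $n/t$ clusters in $O(t \log^* n)$ \congest rounds. The plan is to start from the trivial partition (one cluster per node, which is trivially stretch-friendly with radius $0$) and repeatedly merge clusters so that every surviving cluster grows to size at least $t$, while keeping the radius of each cluster $O(t)$ and maintaining the stretch-friendly invariant. Since the clusters are disjoint and each has $\geq t$ vertices, there can be at most $n/t$ of them, giving the size bound. The merging will be organized in $O(\log t)$ phases; in phase $i$ we guarantee that every cluster that is not yet "large" (size $\geq t$) and still has a neighbor has size at least $2^i$, so after $O(\log t)$ phases every non-large cluster with a living neighbor has been absorbed, and any isolated small cluster can only occur in a connected component of $G$ of size $< t$, which we can just merge wholesale into one cluster of radius $O(t)$ (actually radius $\le t$ via a BFS tree) or handle as a boundary case.

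The core subroutine inside a phase is the standard "rake-and-compress" / star-contraction step on the cluster graph $H_i$ whose nodes are the current small clusters: each small cluster with a small neighbor picks one arbitrary neighboring small cluster (an "out-edge"), forming a pseudoforest; the weakly connected components of this pseudoforest are handled by first removing degree-one clusters (rake) and then computing a ruling set / MIS on the remaining paths (compress), using Cole--Vishkin-style $\log^* n$-round symmetry breaking on the paths. Each selected "center" cluster absorbs the $O(1)$ clusters pointing at it (a star merge), so the radius grows by $O(1)$ cluster-radii plus one connecting edge per merge; doing this carefully over the $O(\log t)$ phases and $O(1)$-depth stars per phase keeps the radius at $O(t)$ — one must track that each phase at most doubles a cluster's size but only additively increases its radius by $O(1)$ times the current maximum radius, so I will instead bound the radius as $O(t)$ directly by noting a cluster of size $s$ has radius $O(s)$ is too weak; rather, each merge in a phase appends a path of cluster-trees of total radius bounded by the number of merge steps within that phase's path contraction, which is $O(\log t)$ per phase times $\log^* n$ — so I'd actually aim for the phase structure where each phase does $O(1)$ star-merge rounds and the doubling comes from the fact that in a path of small clusters, every other one gets absorbed. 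The cleanest accounting: run $t$ merge steps total (not $\log t$ phases), each a single star-contraction, after which any cluster with a small neighbor has size $\ge t$; radius then is $O(t)$ since each of the $t$ steps adds $O(1)$ to the radius. This matches the $O(t \log^* n)$ round bound: $t$ steps, each costing $O(\log^* n)$ for symmetry breaking plus $O(1)$ for the actual merge communication, for a total of $O(t \log^* n)$.

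The stretch-friendly invariant is maintained as follows: when a cluster $C$ with tree $T_C$ rooted at $r_C$ is absorbed into $C'$ with tree $T_{C'}$ via a connecting edge $e = \{x \in C, y \in C'\}$ in the unweighted graph, we re-root $T_C$ at $x$, attach it under $y$ in $T_{C'}$ (keeping $r_{C'}$ as the new root), and set the parent pointers accordingly. In an unweighted graph every edge has weight $1$, so the conditions "all edges on the path from $v$ to the root have weight at most $w$" and "all edges on the internal path between two endpoints have weight at most $w$" are automatically satisfied — every weight equals $1$. Hence stretch-friendliness is essentially free in the unweighted setting; the real content is purely the size and radius bookkeeping.

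The main obstacle I expect is the simultaneous control of the radius and the round complexity during the path-contraction (compress) step: a naive MIS on a path of length $\ell$ would take $\Omega(\log^* n)$ rounds but could, if one merges an entire path into a single cluster in one step, blow the radius up by $\Theta(\ell)$. The fix is to only allow \emph{star} merges (an independent set of centers, each absorbing its immediate pointing neighbors), so each step increases any cluster's radius by at most an additive $O(r_{\max})$ where $r_{\max}$ is the current max radius; combined with the doubling-in-size guarantee, after $O(t)$ such steps sizes reach $t$ while radii stay $O(t)$. Verifying that each step genuinely at least doubles the size of every still-small cluster that has a small neighbor — or, more carefully, that the number of steps needed is $O(t)$ and not $\omega(t)$ — is the delicate combinatorial point, and I would handle it by the invariant "after step $j$, every small cluster with a small neighbor has size $\geq j+1$," proved by induction: in a star merge on the current pseudoforest, each surviving small center has at least one small leaf attached (or it had no small neighbor, contradiction), so its size strictly increases, and a short argument on the pseudoforest structure upgrades "strictly increases" to "reaches the next threshold" within $O(1)$ steps on average, absorbed into the constant. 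Finally, components of $G$ smaller than $t$ are collapsed directly by a BFS of depth $< t$ into one cluster, trivially stretch-friendly and of radius $< t$.
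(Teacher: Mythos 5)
There is a genuine gap, and it is the central one: you have assumed the graph is unweighted and concluded that ``stretch-friendliness is essentially free.'' But \cref{lem:reduction} is invoked in the proof of \cref{thm:p1-sparse-ultrasparse} on \emph{weighted} graphs, and the paper explicitly remarks right after the lemma statement that the unweighted case is already known (Kutten--Peleg) precisely because every clustering is trivially stretch-friendly there. The entire content of the lemma is the weighted case. The idea you are missing is that the merge edges cannot be arbitrary: each cluster must orient its \emph{minimum-weight} boundary edge outward and only ever be absorbed along that edge. One then proves by induction that merging two stretch-friendly clusters $C_1 \to C_2$ along $C_1$'s minimum-weight boundary edge $\{u_1,u_2\}$ of weight $w$ yields a stretch-friendly cluster: any surviving boundary or inside edge at $v\in C_1$ has weight $w'\ge w$, and the tree path from $v$ to the new root passes through $\mathrm{root}_1$, $u_1$, $u_2$, $\mathrm{root}_2$ using only edges of weight at most $\max(w',w)=w'$. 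With arbitrary merge edges (or with a BFS tree for small components, as you propose) the invariant simply fails.

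A secondary problem is your round-complexity accounting. Your ``cleanest accounting'' of $t$ sequential star-contraction steps, each costing $O(\log^* n)$, ignores that symmetry breaking on the \emph{cluster graph} requires each cluster-graph round to be simulated in time proportional to the current cluster radius; once radii reach $\Theta(t)$, each step costs $\Theta(t\log^* n)$ and the total becomes $\Theta(t^2\log^* n)$. The doubling phase structure you flirt with and then abandon is exactly what is needed: after phase $i$ every non-large cluster has size at least $2^i$ and radius $O(2^i)$, so phase $i$ (one 3-coloring plus one maximal matching on small clusters, followed by merging matched pairs and attaching unmatched small clusters to their out-neighbors) costs $O(2^i\log^* n)$ rounds, and the geometric sum over the $\lceil\log t\rceil$ phases gives $O(t\log^* n)$.
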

To our knowledge, such a result is only known for unweighted graphs, (see Kutten and Peleg~\cite{kutten1998fast}). Note that in the unweighted case, any clustering is stretch-friendly. Round complexity of~\cite{kutten1998fast} is also $O(t \log^* n)$ and our algorithm is arguably simpler.

\begin{remark}
The algorithm of \cref{lem:reduction} can be run in $\polylog(n)$ depth and $m\cdot\polylog(n)$ work in the \pram model.
\end{remark}

Next, we prove \cref{thm:p1-sparse-ultrasparse} and \cref{thm:ultrasparse}.

\begin{proof}[Proof of \cref{thm:p1-sparse-ultrasparse}]
First, we find a stretch-friendly $O(t\cdot s(n))$-clustering $\mathcal{C}$ using \cref{lem:reduction}. Then, using $A$, we compute an $\alpha$-spanner of a $O(t\cdot s(n))$-cluster-graph that is induced by $\mathcal{C}$. This spanner along with trees corresponding to the clusters in $\mathcal{C}$ is a $O(t\cdot s(n)\alpha)$-spanner of $G$ according to \cref{obs:stretch-friendly}. The spanner has $n + O(n/t)$ edges since at most $n-1$ edges are in the union of trees of $\mathcal{C}$ and algorithm $A$ adds $s(n/(t\cdot s(n))) \frac{n}{t\cdot s(n)} = O(n/t)$ edges. Multiplying $t$ by a large enough constant gives a spanner with $O(t\cdot s(n)\alpha)$ stretch and $n + n/t$ edges. The round complexity follows from the round complexity of $A$ and \cref{lem:reduction}.
\end{proof}

\begin{proof}[Proof of \cref{thm:ultrasparse}]
We apply \cref{thm:p1-sparse-ultrasparse} with algorithm $A$ being the linear size algorithm of \cref{thm:linearsize}. For this $A$, the output has size $O(N)$, so $s(N) = O(1)$. For function $T(N,r)$, recall that $A$ is split into $O(\log^* n)$ phases. The input of each phase is a cluster-graph. It is discussed in the proof of \cref{thm:linearsize} that the radius of input for each phase is linearly multiplied in its round complexity as it stretches the dilation by $r$. So $T(N,r) = r\cdot \polylog(N)$ which completes the proof.
\end{proof}

The algorithm of \cref{lem:reduction} is described in the following. Its proof is deferred to \cref{app:stretch-friendly}. We gradually construct the partition in $\lceil\log t\rceil$ iterations. The output of iteration $i$ (input of iteration $i+1$) is a stretch-friendly $(3\cdot 2^i - 1)$-partition with each cluster has size at least $2^i$. The input of first iteration is the trivial partition (one cluster for each node). Details of iteration $i$ with input partition $\mathcal{C}$ are as follows:

\begin{enumerate}[(1)]
    \item Each cluster $C \in \mathcal{C}$ computes it size.
    \item Each cluster $C$ finds its minimum weight boundary-edge (breaking ties arbitrarily) and orients it out from $C$.
    \item A 3-coloring of the cluster-graph induced by $\mathcal{C}$ considering only oriented edges is computed.
    \item A maximal matching between \textit{small} clusters is computed. A cluster is \textit{small} if its size is less than $2^i$ and is \textit{large} otherwise.
    \item A partition $\mathcal{C'}$ (input of iteration $i+1$) is created: First, we merge matched clusters and put one cluster for each merged cluster in $\mathcal{C'}$. Then, each large cluster is added to $\mathcal{C'}$. In the end, each unmatched cluster is merged to the cluster of its outgoing neighbor in $\mathcal{C'}$ (this neighbor is in $\mathcal{C'}$ as it is either a matched small cluster or a large cluster). When we merge two clusters $C_1$ and $C_2$ with an edge oriented from $C_1$ to $C_2$, the root of the new cluster is the root of $C_2$. 
\end{enumerate}
\section{Deterministic Unweighted Spanner via Low-Diameter Clusterings}
\label{sec:unweighted_spanner}

In this section, we sketch how one can efficiently compute ultra-sparse spanners with $n + n/t$ edges by computing $O(\log n)$ $t'$-separated strong diameter clusterings for $t' = O(t \log n)$. 
\begin{definition}[$t$-separated Strong Diameter Clustering]
\label{def:low_diameter_clustering}
Assume we are given an unweighted graph $G$ and a parameter $t > 0$. 
A $t$-separated low diameter clustering with strong diameter $D$ is a clustering $\fC$ such that:
\begin{enumerate}
    \item For each $C \in \fC$ we have $\diam(C) \le D$.
    \item For each $C_1 \not= C_2 \in \fC$ we have $d(C_1, C_2) \ge t$. 
    \item We have $|\bigcup_{C \in \fC} C| \ge n/2$. 
\end{enumerate}
\end{definition}

The high-level idea for the spanner construction is to compute a low-diameter clustering covering \emph{all} the nodes such that the total number of neighboring clusters is at most $n/t$. Then, adding for each cluster a low-diameter spanning tree to the spanner and for each of the at most $n/t$ neighboring clusters an arbitrary edge between the two clusters results in a spanner with $n + n/t$ edges and the stretch being on the order of the diameter of each cluster.

Note that obtaining a clustering that covers all the nodes is simple: we iteratively compute a low-diameter clustering of the yet unclustered nodes. In each iteration, the number of unclustered nodes decreases by a factor of two, and therefore each node is clustered after $O(\log n)$ iterations.

To achieve a small number of neighboring clusters, we can do the following in each of the $O(\log n)$ iterations. 
Start with a clustering with separation $100t \log n$. Now, repeatedly grow a given cluster by adding its neighbors to it until it would grow less than by a multiplicative factor of $1+1/t$.
The growth stops after at most $10t\log n$ steps as $(1+1/t)^{10t\log n } > n$. 
The clusters still remain well-separated and the property that each final cluster $C$ neighbors with at most $|C|/t$ nodes implies that the cluster is  ``responsible'' for at most $|C|/t$ neighboring clusters in the final clustering. This means that in the end there are at most $n/t$ neighboring pairs of clusters, as desired.

A formal statement of this reduction together with its proof can be found in \cref{sec:low_diameter_clusterings}. In fact, the proof strengthens the idea presented above and shows that it suffices that the clusters have a separation of $100t$ instead of $100t \log n$.

In \cref{sec:low_diameter_clusterings}, we extend the result by showing how to compute spanners with $\tilde{O}(n)$ edges from so-called weak-diameter clusterings, a more relaxed notion compared to strong-diameter clusterings, for which more efficient \congest algorithms exist \cite{rozhovn2020polylogarithmic, ghaffari_grunau_rozhon2020improved_network_decomposition} (compared to strong-diameter clusterings \cite{chang2021strong}).
We conclude \cref{sec:low_diameter_clusterings} by showing that, using the reduction from ultra-sparse spanners to sparse spanners and the folklore reduction from weighted to unweighted spanners, we can efficiently compute weighted ultra-sparse spanners.

 \section*{Acknowledgment}
M.E. was supported by the ISF grant No. (2344/19).  M.G., C.G., S. I., and V.R. were supported in part by the European Research Council (ERC) under the European Unions Horizon 2020 research and innovation program (grant agreement No.~853109) and the Swiss National Foundation (project grant 200021-184735). B.H. was supported in part by NSF grants CCF-1814603, CCF-1910588, NSF CAREER award CCF-1750808, a Sloan Research Fellowship, funding from the European Research Council (ERC) under the European Union's Horizon 2020 research and innovation program (grant agreement 949272), and the Swiss National Foundation (project grant 200021-184735).

\bibliographystyle{alpha}
\bibliography{refs}

\newcommand{\etalchar}[1]{$^{#1}$}
\begin{thebibliography}{CHKPY18}

\bibitem[AB17]{AbboudB15}
Amir Abboud and Greg Bodwin.
\newblock The 4/3 additive spanner exponent is tight.
\newblock {\em Journal of the ACM (JACM)}, 64(4):1--20, 2017.

\bibitem[ABCP93]{AwerbuchBCP93}
Baruch Awerbuch, Bonnie Berger, Lenore Cowen, and David Peleg.
\newblock Near-linear cost sequential and distribured constructions of sparse
  neighborhood covers.
\newblock In {\em 34th Annual Symposium on Foundations of Computer Science,
  Palo Alto, California, USA, 3-5 November 1993}, pages 638--647, 1993.

\bibitem[ABS{\etalchar{+}}20]{ahmed2020graph}
Reyan Ahmed, Greg Bodwin, Faryad~Darabi Sahneh, Keaton Hamm, Mohammad
  Javad~Latifi Jebelli, Stephen Kobourov, and Richard Spence.
\newblock Graph spanners: A tutorial review.
\newblock {\em Computer Science Review}, 37:100253, 2020.

\bibitem[ACIM99]{AingworthCIM99}
Donald Aingworth, Chandra Chekuri, Piotr Indyk, and Rajeev Motwani.
\newblock Fast estimation of diameter and shortest paths (without matrix
  multiplication).
\newblock {\em {SIAM} J. Comput.}, 28(4):1167--1181, 1999.

\bibitem[ADD{\etalchar{+}}93]{althofer1993sparse}
Ingo Alth{\"o}fer, Gautam Das, David Dobkin, Deborah Joseph, and Jos{\'e}
  Soares.
\newblock On sparse spanners of weighted graphs.
\newblock {\em Discrete \& Computational Geometry}, 9(1):81--100, 1993.

\bibitem[AP92]{AwerbuchP92}
Baruch Awerbuch and David Peleg.
\newblock Routing with polynomial communication-space trade-off.
\newblock {\em {SIAM} J. Discrete Math.}, 5(2):151--162, 1992.

\bibitem[BEG18]{barenboim2018fast}
Leonid Barenboim, Michael Elkin, and Cyril Gavoille.
\newblock A fast network-decomposition algorithm and its applications to
  constant-time distributed computation.
\newblock {\em Theoretical Computer Science}, 751:2--23, 2018.

\bibitem[BGK{\etalchar{+}}14]{BGKMPT14}
Guy~E. Blelloch, Anupam Gupta, Ioannis Koutis, Gary~L. Miller, Richard Peng,
  and Kanat Tangwongsan.
\newblock Nearly-linear work parallel {SDD} solvers, low-diameter
  decomposition, and low-stretch subgraphs.
\newblock {\em Theory Comput. Syst.}, 55(3):521--554, 2014.

\bibitem[BKMP10]{BaswanaKMP10}
Surender Baswana, Telikepalli Kavitha, Kurt Mehlhorn, and Seth Pettie.
\newblock Additive spanners and ($\alpha$, $\beta$)-spanners.
\newblock {\em {ACM} Trans. Algorithms}, 7(1):5:1--5:26, 2010.

\bibitem[BS07]{baswana2007simple}
Surender Baswana and Sandeep Sen.
\newblock A simple and linear time randomized algorithm for computing sparse
  spanners in weighted graphs.
\newblock {\em Random Struct. Algorithms}, 30(4):532--563, 2007.

\bibitem[CG21]{chang2021strong}
Yi-Jun Chang and Mohsen Ghaffari.
\newblock Strong-diameter network decomposition.
\newblock In {\em Proceedings of the 2021 ACM Symposium on Principles of
  Distributed Computing}, pages 273--281, 2021.

\bibitem[CHD20]{censor2020fast}
Keren Censor-Hillel and Michal Dory.
\newblock Fast distributed approximation for tap and 2-edge-connectivity.
\newblock {\em Distributed Computing}, 33(2):145--168, 2020.

\bibitem[Che13]{chechik2013new}
Shiri Chechik.
\newblock New additive spanners.
\newblock In {\em Proceedings of the Twenty-Fourth Annual {ACM-SIAM} Symposium
  on Discrete Algorithms, {SODA} 2013, New Orleans, Louisiana, USA, January
  6-8, 2013}, pages 498--512, 2013.

\bibitem[CHKPY18]{censor2018distributed}
Keren Censor-Hillel, Telikepalli Kavitha, Ami Paz, and Amir Yehudayoff.
\newblock Distributed construction of purely additive spanners.
\newblock {\em Distributed Computing}, 31(3):223--240, 2018.

\bibitem[Coh93]{cohen93}
Edith Cohen.
\newblock Fast algorithms for constructing t-spanners and paths with stretch t.
\newblock In {\em 34th Annual Symposium on Foundations of Computer Science,
  Palo Alto, California, USA, 3-5 November 1993}, pages 648--658, 1993.

\bibitem[DGPV08]{derbel2008locality}
Bilel Derbel, Cyril Gavoille, David Peleg, and Laurent Viennot.
\newblock On the locality of distributed sparse spanner construction.
\newblock In {\em Proceedings of the twenty-seventh ACM symposium on Principles
  of distributed computing}, pages 273--282, 2008.

\bibitem[DGPV09]{derbel2009local}
Bilel Derbel, Cyril Gavoille, David Peleg, and Laurent Viennot.
\newblock Local computation of nearly additive spanners.
\newblock In {\em International Symposium on Distributed Computing}, pages
  176--190. Springer, 2009.

\bibitem[DHNS19]{daga2019distributed}
Mohit Daga, Monika Henzinger, Danupon Nanongkai, and Thatchaphol Saranurak.
\newblock Distributed edge connectivity in sublinear time.
\newblock In {\em Proceedings of the 51st Annual ACM SIGACT Symposium on Theory
  of Computing}, pages 343--354, 2019.

\bibitem[DHZ00]{DorHZ00}
Dorit Dor, Shay Halperin, and Uri Zwick.
\newblock All-pairs almost shortest paths.
\newblock {\em {SIAM} J. Comput.}, 29(5):1740--1759, 2000.

\bibitem[DKM19]{deurer2019deterministic}
Janosch Deurer, Fabian Kuhn, and Yannic Maus.
\newblock Deterministic distributed dominating set approximation in the congest
  model.
\newblock In {\em Proceedings of the 2019 ACM Symposium on Principles of
  Distributed Computing}, pages 94--103, 2019.

\bibitem[DMP{\etalchar{+}}05]{DMPRS05}
Devdatt~P. Dubhashi, Alessandro Mei, Alessandro Panconesi, Jaikumar
  Radhakrishnan, and Aravind Srinivasan.
\newblock Fast distributed algorithms for (weakly) connected dominating sets
  and linear-size skeletons.
\newblock {\em J. Comput. Syst. Sci.}, 71(4):467--479, 2005.

\bibitem[DMZ10]{derbel2010sublinear}
Bilel Derbel, Mohamed Mosbah, and Akka Zemmari.
\newblock Sublinear fully distributed partition with applications.
\newblock {\em Theory of Computing Systems}, 47(2):368--404, 2010.

\bibitem[Dor18]{dory2018distributed}
Michal Dory.
\newblock Distributed approximation of minimum $k$-edge-connected spanning
  subgraphs.
\newblock In {\em Proceedings of the 2018 ACM Symposium on Principles of
  Distributed Computing}, pages 149--158, 2018.

\bibitem[DS08]{DS08}
Samuel~I. Daitch and Daniel~A. Spielman.
\newblock Faster approximate lossy generalized flow via interior point
  algorithms.
\newblock In Cynthia Dwork, editor, {\em Proceedings of the 40th Annual {ACM}
  Symposium on Theory of Computing, Victoria, British Columbia, Canada, May
  17-20, 2008}, pages 451--460. {ACM}, 2008.

\bibitem[Elk05]{elkin2005computing}
Michael Elkin.
\newblock Computing almost shortest paths.
\newblock {\em ACM Transactions on Algorithms (TALG)}, 1(2):283--323, 2005.

\bibitem[Elk07]{elkin2007near}
Michael Elkin.
\newblock A near-optimal distributed fully dynamic algorithm for maintaining
  sparse spanners.
\newblock In {\em Proceedings of the twenty-sixth annual ACM symposium on
  Principles of distributed computing}, pages 185--194, 2007.

\bibitem[EM19]{elkin2019near}
Michael Elkin and Shaked Matar.
\newblock Near-additive spanners in low polynomial deterministic congest time.
\newblock In {\em Proceedings of the 2019 ACM Symposium on Principles of
  Distributed Computing}, pages 531--540, 2019.

\bibitem[EN18]{elkin2018efficient}
Michael Elkin and Ofer Neiman.
\newblock Efficient algorithms for constructing very sparse spanners and
  emulators.
\newblock {\em ACM Transactions on Algorithms (TALG)}, 15(1):1--29, 2018.

\bibitem[EP04]{elkin20041eps}
Michael Elkin and David Peleg.
\newblock (1+$\eps$,$\ \beta$)-spanner constructions for general graphs.
\newblock {\em SIAM Journal on Computing}, 33(3):608--631, 2004.

\bibitem[Erd63]{erdos1963extremal}
Paul Erd{\H{o}}s.
\newblock Extremal problems in graph theory.
\newblock In {\em Proceedings of the Symposium on Theory of Graphs and its
  Applications}, page 2936, 1963.

\bibitem[GGR21]{ghaffari_grunau_rozhon2020improved_network_decomposition}
Mohsen Ghaffari, Christoph Grunau, and V\'{a}clav Rozho\v{n}.
\newblock Improved deterministic network decomposition.
\newblock In {\em Proc. of the 32nd ACM-SIAM Symp. on Discrete Algorithms
  (SODA)}, page 2904–2923, USA, 2021. Society for Industrial and Applied
  Mathematics.

\bibitem[GK18]{ghaffari2018derandomizing}
Mohsen Ghaffari and Fabian Kuhn.
\newblock Derandomizing distributed algorithms with small messages: Spanners
  and dominating set.
\newblock In {\em 32nd International Symposium on Distributed Computing (DISC
  2018)}, 2018.

\bibitem[GP17]{grossman2017improved}
Ofer Grossman and Merav Parter.
\newblock Improved deterministic distributed construction of spanners.
\newblock In {\em 31 International Symposium on Distributed Computing}, 2017.

\bibitem[GPPR04]{GavoillePPR04}
Cyril Gavoille, David Peleg, St{\'{e}}phane P{\'{e}}rennes, and Ran Raz.
\newblock Distance labeling in graphs.
\newblock {\em J. Algorithms}, 53(1):85--112, 2004.

\bibitem[GY20]{gopalan2020concentration}
Parikshit Gopalan and Amir Yehudayoff.
\newblock Concentration for limited independence via inequalities for the
  elementary symmetric polynomials.
\newblock {\em Theory of Computing}, 16(1):1--29, 2020.

\bibitem[Kar99]{karger1999random}
David~R Karger.
\newblock Random sampling in cut, flow, and network design problems.
\newblock {\em Mathematics of Operations Research}, 24(2):383--413, 1999.

\bibitem[KP98]{kutten1998fast}
Shay Kutten and David Peleg.
\newblock Fast distributed construction of smallk-dominating sets and
  applications.
\newblock {\em Journal of Algorithms}, 28(1):40--66, 1998.

\bibitem[Li20]{Li20}
Jason Li.
\newblock Faster parallel algorithm for approximate shortest path.
\newblock In {\em Proceedings of the 52nd Annual ACM SIGACT Symposium on Theory
  of Computing}, page 308–321, New York, NY, USA, 2020. Association for
  Computing Machinery.

\bibitem[Lin87]{Linial87}
Nathan Linial.
\newblock Distributive graph algorithms-global solutions from local data.
\newblock In {\em 28th Annual Symposium on Foundations of Computer Science, Los
  Angeles, California, USA, 27-29 October 1987}, pages 331--335, 1987.

\bibitem[LL18]{LenzenL18}
Christoph Lenzen and Reut Levi.
\newblock A centralized local algorithm for the sparse spanning graph problem.
\newblock In {\em 45th International Colloquium on Automata, Languages, and
  Programming, {ICALP} 2018, July 9-13, 2018, Prague, Czech Republic}, pages
  87:1--87:14, 2018.

\bibitem[MPVX15]{miller2015improved}
Gary~L. Miller, Richard Peng, Adrian Vladu, and Shen~Chen Xu.
\newblock Improved parallel algorithms for spanners and hopsets, 2015.

\bibitem[Par19]{parter2019small}
Merav Parter.
\newblock Small cuts and connectivity certificates: A fault tolerant approach.
\newblock In {\em 33rd International Symposium on Distributed Computing}, 2019.

\bibitem[Pel00]{peleg2000distributed}
David Peleg.
\newblock {\em Distributed computing: a locality-sensitive approach}.
\newblock SIAM, 2000.

\bibitem[Pen16]{peng2016approximate}
Richard Peng.
\newblock Approximate undirected maximum flows in $o(m \textrm{polylog} (n))$
  time.
\newblock In {\em Proceedings of the twenty-seventh annual ACM-SIAM symposium
  on Discrete algorithms}, pages 1862--1867, 2016.

\bibitem[Pet09]{Pettie09}
Seth Pettie.
\newblock Low distortion spanners.
\newblock {\em {ACM} Trans. Algorithms}, 6(1):7:1--7:22, 2009.

\bibitem[Pet10]{pettie2010distributed}
Seth Pettie.
\newblock Distributed algorithms for ultrasparse spanners and linear size
  skeletons.
\newblock {\em Distributed Computing}, 22(3):147--166, 2010.

\bibitem[PS89]{peleg1989graph}
David Peleg and Alejandro~A Sch{\"a}ffer.
\newblock Graph spanners.
\newblock {\em Journal of graph theory}, 13(1):99--116, 1989.

\bibitem[PY18a]{ParterY18}
Merav Parter and Eylon Yogev.
\newblock Congested clique algorithms for graph spanners.
\newblock In {\em 32nd International Symposium on Distributed Computing, {DISC}
  2018, New Orleans, LA, USA, October 15-19, 2018}, pages 40:1--40:18, 2018.

\bibitem[PY18b]{parter2018congested}
Merav Parter and Eylon Yogev.
\newblock Congested clique algorithms for graph spanners.
\newblock In {\em 32nd International Symposium on Distributed Computing},
  page~3, 2018.

\bibitem[RG20]{rozhovn2020polylogarithmic}
V{\'a}clav Rozho{\v{n}} and Mohsen Ghaffari.
\newblock Polylogarithmic-time deterministic network decomposition and
  distributed derandomization.
\newblock In {\em Proceedings of the 52nd Annual ACM SIGACT Symposium on Theory
  of Computing}, pages 350--363, 2020.

\bibitem[RV11]{ram2011distributed}
L~Shankar Ram and Elias Vicari.
\newblock Distributed small connected spanning subgraph: Breaking the diameter
  bound.
\newblock {\em Technical report/Swiss Federal Institute of Technology Zurich,
  Department of Computer Science}, 530, 2011.

\bibitem[RZ04]{RodittyZ04}
Liam Roditty and Uri Zwick.
\newblock On dynamic shortest paths problems.
\newblock In {\em Algorithms - {ESA} 2004, 12th Annual European Symposium,
  Bergen, Norway, September 14-17, 2004, Proceedings}, pages 580--591, 2004.

\bibitem[SS11]{SS11}
Daniel~A. Spielman and Nikhil Srivastava.
\newblock Graph sparsification by effective resistances.
\newblock {\em {SIAM} J. Comput.}, 40(6):1913--1926, 2011.

\bibitem[Thu97]{thurimella1997sub}
Ramakrishna Thurimella.
\newblock Sub-linear distributed algorithms for sparse certificates and
  biconnected components.
\newblock {\em Journal of Algorithms}, 23(1):160--179, 1997.

\bibitem[TZ01]{ThorupZ01routing}
Mikkel Thorup and Uri Zwick.
\newblock Compact routing schemes.
\newblock In {\em Proceedings of the Thirteenth Annual {ACM} Symposium on
  Parallel Algorithms and Architectures, {SPAA} 2001, Heraklion, Crete Island,
  Greece, July 4-6, 2001}, pages 1--10, 2001.

\bibitem[TZ05]{thorup2005approximate}
Mikkel Thorup and Uri Zwick.
\newblock Approximate distance oracles.
\newblock {\em Journal of the ACM (JACM)}, 52(1):1--24, 2005.

\bibitem[TZ06]{thorup2006spanners}
Mikkel Thorup and Uri Zwick.
\newblock Spanners and emulators with sublinear distance errors.
\newblock In {\em Proceedings of the seventeenth annual ACM-SIAM symposium on
  Discrete algorithm}, pages 802--809, 2006.

\end{thebibliography}

\newpage
\appendix
{

\onecolumn
\fontsize{11pt}{14pt}\selectfont
\section{Some Other Related Work}
\label{app:otherRelated}
\paragraph{Centralized and parallel algorithms for ultra-sparse weighted spanners} We next discuss algorithms from the centralized and parallel computation settings that compute ultra-sparse weighted spanners, and elaborate why they are not applicable in the distributed setting. 

One (centralized) approach to construct ultra-sparse spanners with good stretch for weighted graphs is to give an efficient implementation for the greedy algorithm of \cite{althofer1993sparse}. This was done by Roditty and Zwick \cite{RodittyZ04} who showed that the greedy algorithm can be implemented in $\tilde{O}(n^2)$ time in the centralized model of computation. Unfortunately, this implementation is inherently sequential and not amenable to extend to work-efficient distributed or parallel implementations. Dubhashi et al. \cite{DMPRS05} devised a distributed implementation of the greedy algorithm in the \local model of distributed computation, but this algorithm only works for unweighted graphs and is not work-efficient~\footnote{The \local model allows sending of arbitrarily large messages and performing arbitrarily heavy local computations. The algorithm of \cite{DMPRS05} indeed employs such heavy local computations.}. 

In terms of work-efficient algorithms, to the best of our knowledge, there are currently only two parallel algorithms for building ultra-sparse sub-graphs with distance-approximation guarantees for weighted graphs, namely Blelloch et al.~\cite{BGKMPT14} and Li~\cite{Li20}. Both algorithms seem inherently randomized and not suitable for distributed computational models: % to the distributed \congest model.  %(See \cref{sec:preliminaries} for definitions of relevant computational models.)

The algorithm of Blelloch et al.~\cite{BGKMPT14} provides ultra-sparse subgraphs with a guarantee on the {\em average} stretch of edges of the original graph, as opposed to the worst-case stretch guarantee of spanners. While average-stretch is good enough for some spectral-sparsification algorithms, it is insufficient for many other applications, e.g., for Li's recursive scheme \cite{Li20} for computing $(1+\eps)$-transshipment and $(1+\eps)$-approximate single-source shortest paths.% (henceforth, $(1+\eps)$-SSSP). 

The second randomized parallel algorithm, due to Li \cite{Li20}, provides for any $t \ge 1$ a spanners with $n + n/t$ edges and $O(t^2 \cdot \log^2 n \cdot \log^2 \log n)$ stretch using $O(t \cdot \log^3 n \cdot \log \log n \log^* n)$ depth and $O(m \log n)$ work, while succeeding with constant probability.\footnote{When run on an $n$-vertex $m$-edge graph with a parameter $k$, the algorithm of \cite{Li20}, with constant probability, produces an $O(k^2)$-spanner with $n + O({{m \cdot \log n} \over k})$ edges in $O(k \log^2 n \log^* n)$ time and $O(m \log^2 n)$ work. By first building an $O(\log n)$-spanner with $m = O(n \log\log n)$ edges via the algorithm of \cite{miller2015improved}, and then running the algorithm of \cite{Li20} on top of this spanner with $k = t \log n \log\log n$, we obtain the bounds cited here.
%BERHARD: Full calculations for Li's algorithm: log n loops for the weight classes, k log^* n time and O(m) work for basic MPX in the unweighted graph ==>  $k^2$-spanner with $n + O({{m \cdot \log n} \over k})$ edges in $O(k \log n \log^* n)$ time and $O(m \log n)$ work; the extra stretch factor of $k$ comes from rounding length to multiples of $k$ first to only have to deal with one graph with $k$-well-separated weights in MPXV.) 
} Because the approach of \cite{miller2015improved} for weighted graphs, which is also utilized by Li~\cite{Li20}, requires $\Omega(\frac{\log n}{\log \log n})$ rounds of contractions, these two algorithms cannot be implemented distributedly. 

% We remark that a recent work \cite{EM21} constructs ultra-sparse \emph{emulators}. Emulators are similar to spanners in approximately preserving distance, but they are arbitrary graphs and not subgraphs of the original graph. These constructions are furthermore restricted to unweighted graphs, and require polynomial (as opposed to polylogarithmic) time

\section{Short Seed Length}
\label{app:short-seed-length}
We want to approximate a hitting-event $E$ over $n$ binary random variables $X_1, \dots, X_n$ where each $X_i$ is sampled independently with probability $p$. For this, we can use the family of hash functions provided in the work of Gopalan and Yehudayoff~\cite{gopalan2020concentration}.

\begin{theorem}[\cite{gopalan2020concentration}, Theorem 1.9]
For positive integers $N$ and $M$, and for error parameter $\delta > 0$, there is a family of hash functions $\mathcal{H}$ from $[N]$ to $[M]$ with size $2^r$ where $r = O((\log \log N + \log (M/\delta))\log \log (M/\delta))$, such that for every subsets $T_1, \dots, T_N \subseteq [M]$, we have:
\begin{equation*}
    \left | \Prob_{h\sim \mathcal{H}}[\forall i \in [N], h(i) \in T_i] -
            \Prob_{u\sim \mathcal{U}}[\forall i \in [N], u(i) \in T_i]
    \right| \leq \delta
\end{equation*}
where $\mathcal{U}$ is the family of all functions from $[N]$ to $[M]$. We can sample from $\mathcal{H}$ in polynomial time.
\end{theorem}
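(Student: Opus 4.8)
The plan is first to note that this statement is Theorem~1.9 of \cite{gopalan2020concentration}, imported verbatim; since we use it only as a black box, the ``proof'' we owe is the citation. For orientation, here is the route by which such a statement is proved. The accepting event $\{\forall i\in[N]\colon h(i)\in T_i\}$ is a \emph{combinatorial rectangle}: writing $q_i := |T_i|/M$, under the truly uniform map $u$ its probability is exactly $\prod_{i=1}^{N} q_i$, so what is needed is an explicit hash family whose acceptance probability is within $\delta$ of $\prod_i q_i$ \emph{simultaneously over all} choices of $T_1,\dots,T_N$, while its seed length depends on $N$ only through a $\log\log N$ term.

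The backbone is bounded-independence fooling. Take $\mathcal{H}$ to be a $\delta'$-almost $\ell$-wise independent family of functions $[N]\to[M]$ with $\ell = \Theta(\log(M/\delta))$ and $\delta'$ polynomially smaller than $\delta$. To see why this fools the rectangle, set $Y_i := 1 - \mathbf{1}[h(i)\in T_i]$, a Bernoulli of mean $p_i := 1-q_i$ under $u$, and expand
\[
\prod_{i=1}^{N}\big(1 - Y_i\big) \;=\; \sum_{k=0}^{N} (-1)^k\, e_k(Y_1,\dots,Y_N),
\]
where $e_k$ is the $k$-th elementary symmetric polynomial. Each $e_k$ is multilinear of degree $k$, so $\ell$-wise independence reproduces $\E[e_k]$ for $k\le\ell$ up to the almost-independence error; hence the truncated alternating sum $\sum_{k\le\ell}(-1)^k e_k(p_1,\dots,p_N)$ is shared by the pseudorandom and the uniform worlds, and by the Bonferroni inequalities the omitted tail is at most $e_{\ell+1}(p_1,\dots,p_N)\le \big(e\sum_i p_i/(\ell+1)\big)^{\ell+1}$.

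This tail is small only when $\sum_i p_i = O(\ell)$, which dictates a case split. If $\prod_i q_i \ge \delta$, then $\sum_i p_i \le \sum_i \log(1/q_i) = \log\big(1/\prod_i q_i\big) \le \log(1/\delta) = O(\ell)$ and the truncation above directly gives agreement to within $\delta/2$. If instead $\prod_i q_i < \delta$, one passes to a sub-collection $J\subseteq[N]$ whose partial product lies in $[\delta/2,\delta]$ (or to a single coordinate with $q_i \le \delta$ when no such $J$ exists), applies the first regime to upper bound $\Prob_h[\forall i\in J\colon h(i)\in T_i]$ by $\delta + \delta'$, and concludes $\Prob_h[\forall i\in[N]\colon h(i)\in T_i] \le \Prob_h[\forall i\in J\colon h(i)\in T_i] = O(\delta)$, matching $\prod_i q_i = O(\delta)$ on the uniform side. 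Polynomial-time sampleability is immediate from the explicit algebraic description of $\mathcal{H}$.

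The main obstacle is the seed length. The naive instantiation of the backbone --- $\ell$-wise independence realized over a field of size $\poly(N,M)$ --- costs $\Theta(\ell\log(NM)) = \Theta(\log(1/\delta)\log(NM))$ bits, which is quadratic in $\log(1/\delta)$ after also fooling to error $\delta'$ and is linear rather than logarithmic in $\log N$. Bringing this down to the near-optimal $r = O\big((\log\log N + \log(M/\delta))\log\log(M/\delta)\big)$ --- additive in $\log M$, only $\log\log N$ in $N$, and essentially linear in $\log(1/\delta)$ up to a $\log\log$ factor --- is exactly the content of \cite{gopalan2020concentration}: it requires their sharp inequalities for elementary symmetric polynomials and a careful bucketing of the coordinates according to the magnitude of $q_i$, replacing the crude $\ell$-wise-independence-over-$[M]$ construction. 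That refinement is the technical heart of the quoted theorem, which we invoke here as a black box.
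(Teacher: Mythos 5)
Your proposal is correct and matches the paper exactly: the paper imports this statement verbatim as Theorem~1.9 of the cited work and offers no proof of its own, so the citation is indeed all that is owed. Your expository sketch of the underlying rectangle-fooling argument is a reasonable orientation but is not part of the paper's treatment.
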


Set $N$ to $n$, $M$ to $1/p$, and $\delta$ to $n^{-{10}}$. So the seed length is $O(\log n \log \log n)$ as in our applications, we always have $p \geq 1/\poly(n)$. To sample $\{X_i\}_{i \in [n]}$, we sample a random hash function from $\mathcal{H}$, and then set $X_i$ to $1$ if $h(i)$ is $1$ and set it to zero if $h(i) \neq 1$. This approximates any hitting event $E$ with error $\delta$. To show this, suppose $S \subseteq [n]$ is the corresponding subset of $E$, that is $$\Prob[E = 0] = \Prob[\bigwedge_{i\in S} X_i = 0] = (1 - p)^{|S|}$$ For $i \in [N]$, we set $T_i$ to $[M] \setminus \{1\}$ if $i \in S$ and we set $T_i$ to $[M]$ if $i \not \in S$. Note that:
\begin{align*}
    \Prob_{u\sim \mathcal{U}}[\forall i \in [N], u(i) \in T_i] &= 
    \Prob_{u\sim \mathcal{U}}[\forall i \in S, u(i) \in T_i] \\ 
    & = \Prob_{u\sim \mathcal{U}}[\forall i \in S, u(i) \neq 1]\\
    & = (1 - 1/M)^{|S|} = (1 - p)^{|S|}
\end{align*}
On the other hand, we have:
\begin{align*}
    \Prob_{h\sim \mathcal{H}}[\forall i \in [N], h(i) \in T_i] &=
    \Prob_{h\sim \mathcal{H}}[\forall i \in S, h(i) \in T_i] \\
    &= \Prob_{h\sim \mathcal{H}}[\forall i \in S, X_i \neq 1]\\
    &= \Prob_{h\sim \mathcal{H}}[\bigwedge_{i\in S} X_i = 0]
\end{align*}
So with this setting, we can capture hitting-event $E$. And so the distribution that is provided by $\mathcal{H}$ gives a $\delta=n^{-10}$ approximation of this event with only $O(\log n \log \log n)$ random bits.

\begin{remark}
Note that $p$ might not be in the form of $1/M$. In that case, assuming $p = \Omega(1/\poly(n))$, we can write $p = M'/M - 1/\poly(n)$ with $M$ and $M'$ being $\poly(n)$ bounded integers. If in the above, we set $T_i$ to $[M] \setminus [M']$ for $i \in S$ and $T_i$ to $[M]$ for $i \not \in S$, we again get approximation with polynomially small error with $O(\log n \log \log n)$ seed length.
\end{remark}

%%%%%%%%%%%%%%%%%%%%%%%%%%%%%%%%%%%%%%%%%%%%%%%%%%%%%%%%%%%
%%%%%%%%%%%%%%%%%%%%%%%%%%%%%%%%%%%%%%%%%%%%%%%%%%%%%%%%%%%
%%%%%%%%%%%%%%%%%%%%%%%%%%%%%%%%%%%%%%%%%%%%%%%%%%%%%%%%%%%
%%%%%%%%%%%%%%%%%%%%%%%%%%%%%%%%%%%%%%%%%%%%%%%%%%%%%%%%%%%

\section{Baswana-Sen Derandomization}
\label{app:bs-derand}

We deterministically assign sample/unsample to clusters using the method of conditional expectation running on network decomposition. But first, let us recall the definition of network decomposition.

A $(Q,D)$ weak-diameter network decomposition of an unweighted graph $G$ is a partitioning of nodes of $G$ into clusters, colored with $Q$ colors, such that nodes of each cluster are at most at distance $D$ in $G$. Moreover, it guarantees that there is no edge between clusters with the same color. According to the work of Rozho{\v{n}} and Ghaffari~\cite{rozhovn2020polylogarithmic}, one can find a weak-diameter network decomposition in $\polylog(n)$ rounds deterministically. Indeed, they show that we can find a network decomposition of any constant power of $G$ in $\polylog(n)$ rounds. The $t$-th power of $G = (V, E)$, denoted by $G^t$, is a graph on $V$ with an edge between any two nodes with distance at most $t$ in $G$. For our derandomization, we need network decomposition of $G^2$. Concretely, it means that any two clusters $C$ and $C'$ with the same color have a distance of at least three in $G$.

\begin{theorem}[\cite{rozhovn2020polylogarithmic}, Theorem 2.12]
\label{thm:networkdecomposition}
There is a deterministic distributed algorithm that computes a $(\log n, \polylog(n))$ weak-diameter network decomposition of $G^2$.

Moreover, for each color class and each cluster $C$ on this class, a Steiner tree $T_\mathcal{C}$ with radius $\polylog(n)$ is associated. Terminal nodes of $T_C$ are the set of nodes of $C$ and each edge of $G$ appears in $O(\polylog(n))$ of these Steiner trees.
\end{theorem}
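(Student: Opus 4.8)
The plan is to establish this deterministic network-decomposition theorem of Rozho\v{n} and Ghaffari through the ``ball-carving'' framework, as two nested reductions. The first reduction is to a \emph{single-color} subroutine: a deterministic $\polylog(n)$-round algorithm that, given any vertex set $V'\subseteq V(G)$, outputs a clustering of a subset $S\subseteq V'$ with $|S|\ge|V'|/2$ such that distinct output clusters are non-adjacent in $G^2$, each output cluster $C$ comes with a Steiner tree $T_C$ in $G$ of radius $\polylog(n)$ whose terminal set is exactly $C$, and each $G$-edge lies on $O(\polylog n)$ of these trees. Granting the subroutine, the theorem follows by invoking it $\lceil\log_2 n\rceil+1$ times: in round $c$, run it on the set of still-uncovered vertices, give the returned clusters color $c$, and delete them; since each round halves the uncovered set, after $O(\log n)$ rounds everything is colored, the number of colors is $O(\log n)$, the weak-diameter and tree-radius bounds are inherited, and the per-$G$-edge tree-congestion is only multiplied by $O(\log n)$. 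Throughout, the algorithm runs on $G$, simulating one $G^2$-round by two $G$-rounds.

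The core is the single-color subroutine, which I would run in $b=\lceil\log_2 n\rceil+1$ \emph{bit-phases}, maintaining a partial clustering of $V'$ in which each active cluster has an $O(\log n)$-bit identifier and a $G$-Steiner tree of its current vertices; vertices may be permanently \emph{killed} and clusters \emph{finalized} into the output. Initially each vertex is its own singleton cluster with the trivial tree. Bit-phase $j$ splits the active clusters into ``blue'' (bit $j$ of the identifier is $0$) and ``red'' (bit $j$ is $1$) and runs a growing process realizing, for each red cluster $C$, the dichotomy: \emph{either} $C$ absorbs enough adjacent blue clusters to multiply its size by at least $1+\tfrac1{10b}$, extending $T_C$ accordingly, \emph{or} $C$ is finalized after killing at most $\tfrac1{10b}|C|$ boundary vertices, which suffices to sever every $G^2$-edge between $C$ and the blue clusters. (Producing this dichotomy --- a consistent simultaneous assignment of contested blue clusters to red clusters, with a roll-back for red clusters that would not have grown enough, and a layer-by-layer implementation of the absorption --- is a short inner routine.) This yields three consequences. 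First, a cluster can grow at most $O(b\log n)=O(\log^2 n)$ times, since each growth multiplies its size by $1+\tfrac1{10b}$ and the size never exceeds $n$; implementing each growth by adding $O(1)$-thick vertex layers then keeps the Steiner-tree radius, and hence the weak diameter, at $\polylog(n)$. Second, the total number of killed vertices is at most $\sum_{\text{bit-phases}}\tfrac1{10b}\sum_{C\text{ finalized}}|C|\le b\cdot\tfrac{|V'|}{10b}=\tfrac{|V'|}{10}$, so $|S|\ge|V'|/2$. Third, after bit-phase $j$ no surviving cluster whose identifier has bit $j=1$ is $G^2$-adjacent to one whose identifier has bit $j=0$; since a merge in any later phase only joins $G^2$-adjacent clusters, this separation is preserved to the end, so any two distinct output clusters --- which differ in some identifier bit --- are non-adjacent in $G^2$.

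For the \CONGEST realization, every primitive the subroutine needs --- a cluster computing $|C|$, a red cluster learning the blue mass adjacent to it, announcing its grow/finalize decision, absorbed vertices learning their new identifier, killed vertices being removed --- is an aggregation or broadcast over the cluster's Steiner tree of radius $\polylog(n)$; since each $G$-edge lies on $O(\polylog n)$ trees, these operations pipeline in $\polylog(n)$ \CONGEST rounds per bit-phase, hence $\polylog(n)$ rounds overall. I expect the main obstacle to be precisely this simultaneous control: implementing the grow-or-finalize dichotomy (including the conflict resolution among red clusters competing for the same blue cluster) with $O(\log n)$-bit messages, while keeping the Steiner trees \emph{both} of radius $\polylog(n)$ --- so that the layered absorptions over $O(\log^2 n)$ growth events do not let subtrees stack up super-polylogarithmically --- \emph{and} of per-$G$-edge congestion $\polylog(n)$. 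Balancing these three requirements against one another is the delicate technical heart of the Rozho\v{n}--Ghaffari construction.
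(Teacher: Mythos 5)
The paper offers no proof of this statement --- it is imported verbatim as Theorem 2.12 of Rozho\v{n}--Ghaffari \cite{rozhovn2020polylogarithmic} --- and your sketch is essentially a faithful reconstruction of that source's argument: the outer $O(\log n)$-color halving loop, the bit-phase red/blue grow-or-finalize dichotomy with the $1/(10b)$ deletion budget, and the invariant that adjacent surviving clusters agree on all processed identifier bits (so distinct final clusters, having distinct labels, cannot be adjacent in $G^2$). The one caveat is that the genuinely technical content --- the simultaneous conflict resolution among competing red clusters and the joint control of Steiner-tree radius and per-edge congestion --- is precisely what you defer to ``a short inner routine,'' so this should be read as a correct outline of the known proof rather than a self-contained one.
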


The second part of the theorem above guarantees that for any color $q$, we can simulate convergecast and broadcast on the collections of clusters with color $q$ simultaneously as each cluster can communicate through its Steiner tree. Since each edge is in only few Steiner trees, the congestion overhead of simultaneous communication cannot be more than $\polylog(n)$ rounds.

\begin{proof}[Proof of \cref{lem:detbs}]
Let $\mathcal{C}$ be the input clustering of iteration $i$ (an $(i-1)$-partition of nodes that are alive at the beginning of iteration $i$) and $H$ be the cluster-graph induced by $\mathcal{C}$. First, we find a weak-diameter network decomposition of $H^{2}$ into $O(\log n)$ color classes such that each cluster has weak diameter $\polylog(n)$. For that, We can run the $\polylog(n)$ rounds algorithm of~\cite{rozhovn2020polylogarithmic} on $H$. The algorithm can be easily adapted to run on cluster-graphs as its operation only needs broadcast and convergecast, and moreover, it can run on a network where each node has $\polylog(n)$ bits of memory. Since each node of $H$ corresponds to a tree with radius $i-1$ of $G$, the total running time is $O(i \cdot \polylog(n))$ rounds. To avoid confusion, we reserve the term ``cluster'' for referring to the clusters in Baswana-Sen algorithm and we refer to each cluster of network decomposition by ``ND-cluster''.

The contribution of each node to both $U^{\mathrm{w}}_i$ and $U^{\mathrm{uw}}_i$ (see \cref{eq:w,eq:uw}) can be written as the contribution of each individual alive node of $G$.  The contribution of $X^{(i)}_j$ (indicator random variable that determines whether $j$-th cluster in iteration $i$ is sampled or not) is considered as part of a contribution of the root of the $j$-cluster. Moreover, observe that each node of $G$ can compute its contribution by looking at its 1-hop neighborhood. In the following, we show how we can derandomize a utility function with these properties.

We go over the color classes of network decomposition one by one. Suppose we assign sample/unsample labels for each Baswana-Sen cluster that is in a ND-cluster with color 1 to $q-1$ and now we are working on color $q$. Suppose all ND-clusters with color $q$, independent of each other, sample the Baswana-Sen clusters that are inside them with $O(\log n \log \log n)$ bits from the distribution in \cref{app:short-seed-length}\footnote{Note that if two distributions approximate the independent distribution, their product is also an approximation.}. In order to derandomize sampling of each ND-cluster, we go over $O(\log n \log \log n)$ bits of its random seed one by one and fix them in such a way that the expected value of the utility function condition on those fixed bits is at most be the expected value of utility function without fixing those bits.

Consider one ND-cluster $\mathcal{D}$ with color $q$. Let $V_{\mathcal{D}}$ be the set of alive nodes of $G$ in $\mathcal{D}$ and let $N(\mathcal{D})$ be the union of $V_{\mathcal{D}}$ and all neighbors of $V_{\mathcal{D}}$ in $G$. Observe that the randomness of $\mathcal{D}$ only affects the contribution of nodes in $N(\mathcal{D})$. Since we do a network decomposition for the second power of $H$, we have two properties: (a) all nodes of one Baswana-Sen cluster is in one ND-cluster, and (b) for any two ND-clusters $\mathcal{D}$ and $\mathcal{D'}$, the set of affected nodes by the randomness of $\mathcal{D}$ and $\mathcal{D'}$ are disjoint, i.e. $N(\mathcal{D}) \cap N(\mathcal{D'}) = \emptyset$. Here is how we fix the first random bit for $\mathcal{D}$. Other bits can be fixed similarly. The first bit is fixed to $b \in \{0,1\}$ if
\begin{equation*}
\E[\text{contribution of } N(\mathcal{D}) \mid \text{first bit is } b] \leq
\E[\text{contribution of } N(\mathcal{D}) \mid \text{first bit is } 1-b]
\end{equation*}
To decide for which $b$ the conditional expectation is smaller, each node in $N(\mathcal{D})$ computes two values: its contribution to the utility function if the first random bit equals zero/one. Each of those two values can be computed by taking a sum over all assignments of bits that are not yet fixed. So it may take $2^{O(\log n \log \log n)} = n^{O(\log \log n)}$ time. Then for each Baswana-Sen cluster $C$ in $\mathcal{D}$, all nodes in $C$ aggregates the sum of their contributions in the root of the cluster. This takes $O(i)$ rounds\footnote{We trim the computed values into $O(\log n)$ bits. This is safe as it only incurs $1/\poly(n)$ error.}. Each cluster $C$ in $\mathcal{D}$ is indeed represents a node in $H$. Now, these nodes in $H$ aggregates their computed values into the leader of the ND-cluster $\mathcal{D}$. This can be done in $i\cdot \polylog(n)$ rounds via the Steiner trees of $\mathcal{D}$ (as it is mentioned in \cref{thm:networkdecomposition}). The leader then decides which bit should be fixed and broadcast it to all nodes in $N(\mathcal{D})$. Since fixing one bit takes $i\cdot \polylog(n)$ rounds and as there are $O(\log n \log \log n)$ bits, fixing all of them also takes $i \cdot \polylog(n)$ rounds. There are $O(\log n)$ color classes in the network decomposition, so simulating the sampling for iteration $i$ of Baswana-Sen takes $i \cdot \polylog(n)$ rounds. There are $g$ iterations, so the final round complexity is $\sum_{i=1}^{g} i\cdot \polylog(n) = g^2 \cdot \polylog(n)$.
\end{proof}

%%%%%%%%%%%%%%%%%%%%%%%%%%%%%%%%%%%%%%%%%%%%%%%%%%%%%%%%%%%
%%%%%%%%%%%%%%%%%%%%%%%%%%%%%%%%%%%%%%%%%%%%%%%%%%%%%%%%%%%
%%%%%%%%%%%%%%%%%%%%%%%%%%%%%%%%%%%%%%%%%%%%%%%%%%%%%%%%%%%
%%%%%%%%%%%%%%%%%%%%%%%%%%%%%%%%%%%%%%%%%%%%%%%%%%%%%%%%%%%

\section{Linear Size Spanners}
\label{app:linear-size}
\begin{proof}[Proof of \cref{thm:linearsize}]
The algorithm consists of $P = O(\log^* n)$ phases. Phase $i$ is given an input graph $G_i$ with $n_i$ nodes. Using the algorithm of \cref{lem:detbs}, we run the Baswana-Sen algorithm for $g_i = (1 + I_{w})x_i(1 + 2\frac{\log \log x_i}{\log x_i})$ iterations and with sampling probability $\frac{1}{x_i}$. Here, $\mathrm{I}_w$ is 1 if we are in the weighted case and 0 if we are in the unweighted case. This is why there is an extra $2^{\log^* n}$ factor in the stretch bound of the spanner for the weighted case. Parameter $x_i$ is determined later. At the end of phase $i$, we define graph $G_{i+1}$ (input of the next phase) as the $g_i$-cluster-graph that is induced by $g_i$-clustering of running $g_i$ iterations of Baswana-Sen. The input to the first phase is the network graph $G$. We show that the there is a sequence $x_1, \dots, x_{P}$ such that the edges that are added to the spanner in this $P$ phases gives us a spanner with the claimed size and stretch.

The goal of phase $i$ is to construct a spanner for $G_i$ with stretch $s_i = \prod_{j=i}^{P}(2g_j + 1)$. Note that since $G_1$ is $G$, proving this implies that our output is a $s_1$-spanner of $G$. In phase $i$, after running the sampling for $g_i$ iterations, all dead edges of $G_i$ has stretch $2g_i - 1 \leq s_i$ according to \cref{lem:bs-stretch}. So we do not need to worry about them in terms of stretch and we can safely remove them from further consideration. Again, according to \cref{lem:bs-stretch}, observe that the $g_i$-clustering that we have after Baswana-Sen iterations is stretch-friendly. From \cref{obs:stretch-friendly}, if we can find an $s_{i+1}$-spanner for $G_{i+1}$, we get a spanner of stretch $s_i$ for $G_i$. So if we show that the last phase gives a spanner with stretch $s_P$, the stretch of other phases follows automatically by induction. For that, we design the sequence $x_1, \dots, x_P$ such that all nodes of $G_P$ are dead at the end of the last phase and as a result, all edges of $G_p$ are dead which means that their stretch in the spanner is bounded by $2g_P - 1 < s_P$.

Let $P$ be the largest integer such that $\log^{(P)}(n) \geq \alpha_0$. Here, $\log^{(P)} n$ represents the $P$-th iteration of $\log$ and $\alpha_0$ is the constant in \cref{lem:x-seq}. Observe that $P \leq \log^* n$ and that $\log^{(P)} n$ is a constant. We define the sequence of $x_i$ as follows:
\begin{equation*}
    x_1 =  \frac{\log^{(P)} n}{\log^{(P + 1)} n}, x_2 = \frac{\log^{(P-1)} n}{\log^{(P)} n}, \dots, x_P = \frac{\log n}{\log \log n}
\end{equation*}
Note that:
\begin{align*}
    s_1 = \prod_{i=1}^{P}(2g_i + 1)
    &\leq 2^{P} \left(\prod_{i=1}^{P} g_i\right) e^{\sum_{i=1}^{P} \frac{1}{2g_i}}\\
    &\leq (2(1 + I_w))^P \left(\prod_{i=1}^{P} x_i \right)e^{\sum_{i=1}^{P} \frac{2\log \log x_i}{\log x_i} + \frac{1}{2g_i}}\\
    &= (2(1 + I_w))^{P}\cdot (\log n) \cdot O(1).
\end{align*}
We use the inequality $1+x \leq e^x$ and the fact that the summations $\sum_{i=1}^{P}\frac{1}{g_i}$ and $\sum_{i=1}^{P}\frac{2\log \log x_i}{\log x_i}$ are constants. This is because the sequence of $x_i$ (and so $g_i$) are exponentially growing and so the summations are dominated by their first term. So the output spanner has stretch $O(\log n \cdot 2^{\log^* n})$ for unweighted graphs and $O(\log n \cdot 4^{\log^* n})$ for weighted graphs. There is one remaining ingredient though. We have to show that in last phase, all nodes of $G_P$ die. Note that at the end of this phase, we have at most $n_P/x_P^{g_P}$ clusters according to \cref{lem:detbs}. On the other hand, from \cref{lem:x-seq}, we have $x_P^{g_P} \geq n \geq n_P$ (set $\alpha$ to $n$ in the lemma). So there is no cluster at the end of last phase and so all nodes are dead. This concludes the proof for the claimed stretch.

To bound the spanner size, let us first consider the unweighted case. Note that according to \cref{lem:detbs}, we add $O(n_i g_i + n_i x_i \log g_i) = O(n_i x_i \log x_i)$ edges to the spanner in phase $i$. From \cref{lem:x-seq}, $x_i \log x_i \leq x_{i-1}^{g_{i-1}}$ (set $\alpha$ to $\log^{(P-i+1)} n$ in the lemma). On the other hand, recall that $n_i \leq n_{i-1}/x_{i-1}^{g_{i-1}}$. So for any $i > 1$, the contribution of phase $i$ is $O(n_{i-1})$. In the first phase, $x_1$ is a constant so $n_1 x_1 \log g_1 = O(n_1)$. So size of the spanner is bounded by $O(n_1 + \sum_{i=2}^{P} n_{i-1})$. Again, using $n_i \leq n_{i-1}/x_{i-1}^{g_{i-1}}$, we can write $\sum_{i=2}^{t} n_{i-1} = O(n_1)$ as the summation is dominated by its first term. So in total, the size of the spanner is $O(n_1) = O(n)$. For the weighted case, note that in phase $i$ we add $O(n_i g_i x_i) = O(n_i x_i^2)$ edges. So $x_i \log x_i$ in the unweighted case, is replaced by $x_i^2$. On the other hand, we double the number of iterations of each phase. So similar to the unweighted case, we can show that phase $i$ in the weighted case is also adds $O(n_{i-1})$ edges and so the final spanner size is $O(n)$.

To implement the algorithm, the main challenge is that how we can run \cref{lem:detbs} on a cluster-graph. For each node $v$ in $G_i$, there is a tree between nodes of $\inv^{G_{i}\rightarrow G_{i-1}}(v)$ with radius at most $g_{i-1}$. Subsequently, there is a tree among $\inv^{G_{i} \rightarrow G}(v)$ with radius at most $\frac{s_1}{s_i} = \prod_{j=1}^{i-1}(2g_j + 1)$. This rooted tree can be constructed in $O(\frac{s_1}{s_i}) = O(\log n)$ rounds. The root of this tree is responsible for the node $v$ when we run phase $i$. Recall that in phase $i$, we run Baswana-Sen for $g_i$ iterations with sampling probability $\frac{1}{x_i}$. To run that, each node of $G_i$ should know its adjacent clusters. If a node has $d$ adjacent clusters, then the root can discover this $d$ neighbors in $O(d \cdot \frac{s_1}{s_i})$ rounds. This can be problematic if $d$ is large. However, note that according to \cref{lem:detbs} all nodes with $\Omega(x_i \log n)$ adjacent clusters remain alive in each iteration. So a node only adds at most $\min(\Theta(x_i \log n), d) = O(\log^2 n)$ edges in each iteration. So it should only know the first $O(\log^2 n)$ clusters (when we order the $d$ adjacent clusters of node $v$ according to the smallest weight of an edge between them and $v$). So $v$ can find these clusters in $\polylog(n)$ rounds. The other ingredient of derandomization in \cref{lem:detbs} is running the algorithm of \cref{thm:networkdecomposition} for network decomposition on cluster-graphs. But recall that we already discussed do this in the proof of \cref{lem:detbs}. This concludes that the round complexity of each phase and as a result the whole algorithm (since there are only $O(\log^* n)$ phases) is $\polylog(n)$.
\end{proof}

\begin{lemma}
\label{lem:x-seq}
There is a sufficiently large constant $\alpha_0$, such that for any $\alpha > \alpha_0$, we have 
\begin{equation}
\label{eq:x-seq-claim}
x \log x \leq \alpha \leq y^z
\end{equation}
where $x = \frac{\alpha}{\log \alpha}$, $y = \frac{\log \alpha}{\log \log \alpha}$, and $z = y(1 + \frac{2\log \log y}{\log y})$.
\end{lemma}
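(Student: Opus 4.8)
The plan is to handle the two inequalities in the claimed chain separately, each by an elementary manipulation of iterated logarithms; neither direction needs more than standard estimates, and the constant $\alpha_0$ is spent purely on guaranteeing that all the nested logarithms appearing in $x$, $y$, $z$ are defined and positive (so that, e.g., $y>e$, $\log y>1$, and the monotonicity steps below are legal). For the left inequality, I would write $\log x = \log\alpha - \log\log\alpha$, so that
\[
x\log x = \frac{\alpha}{\log\alpha}\bigl(\log\alpha - \log\log\alpha\bigr) = \alpha\left(1 - \frac{\log\log\alpha}{\log\alpha}\right) \le \alpha,
\]
the last step holding as soon as $\log\log\alpha \ge 0$, i.e. for $\alpha$ above an absolute constant.

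For the right inequality I would take logarithms. Since $z\log y = y\bigl(\log y + 2\log\log y\bigr)$, it suffices to show $\log\alpha \le y(\log y + 2\log\log y)$. Using the defining identity $\log\alpha = y\log\log\alpha$ together with $\log y = \log\log\alpha - \log\log\log\alpha$, and cancelling the common factor $y\log\log\alpha$, this reduces to
\[
\log\log\log\alpha \le 2\log\bigl(\log\log\alpha - \log\log\log\alpha\bigr).
\]
Writing $M = \log\log\alpha$, the claim becomes $\log M \le 2\log(M - \log M)$. The key estimate is the elementary fact that $M - \log M \ge M/2$ for every $M>0$ (the function $M/2 - \log M$ attains its minimum at $M=2$, where it is positive); hence the right-hand side is at least $2\log(M/2) = 2\log M - 2\log 2$, and the inequality holds whenever $\log M \ge 2\log 2$, i.e. whenever $M = \log\log\alpha \ge 4$, i.e. $\alpha \ge \exp(\exp(4))$.

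Taking $\alpha_0 = \exp(\exp(4))$ (or any larger constant, which also subsumes the mild requirement from the left inequality and the positivity of $\log y$ and $\log\log y$) then completes the proof. I do not anticipate a genuine obstacle: the whole argument hinges on the single clean bound $M - \log M \ge M/2$, and the only thing requiring care is the bookkeeping of the triple and quadruple logarithms — checking that each is positive so that the monotonicity of $\log$ may be applied — which becomes automatic once $\alpha_0$ is chosen large enough.
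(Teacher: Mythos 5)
Your proof is correct and follows essentially the same route as the paper's: after taking logarithms, the right-hand inequality reduces in both arguments to $\log\log\log\alpha \le 2\log\bigl(\log\log\alpha - \log\log\log\alpha\bigr)$, settled by the same key estimate (your $M - \log M \ge M/2$ is exactly the paper's $\log\log\log\alpha \le \tfrac{1}{2}\log\log\alpha$). Your treatment of the left inequality is a touch more direct (computing $x\log x = \alpha\bigl(1 - \tfrac{\log\log\alpha}{\log\alpha}\bigr)$ rather than comparing logarithms), but this is an inessential variation.
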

\begin{proof}
For the ease of notation, we first take a log from \eqref{eq:x-seq-claim}:
\begin{equation}
\label{eq:x-seq-1}
    \log x + \log \log x \leq \log \alpha \leq z\log y
\end{equation}
Since $\log x = \log \alpha - \log \log \alpha$ and $\log \log x \leq \log \log \alpha$, we have $x \log x \leq \log \alpha$. Next, we show that $\log \alpha \leq z\log y$. Note that:
\begin{equation}
\label{eq:x-seq-2}
    z\log y = y\log y + 2y \log \log y
\end{equation}
Replacing $\log y$ with $\frac{\log \alpha}{\log \log \alpha}$ in the first term, we get 
\begin{equation}
\label{eq:x-seq-3}
    y\log y = \log \alpha - \log \alpha \cdot \frac{\log \log \log \alpha}{\log \log \alpha} = \log \alpha - y \log \log \log \alpha
\end{equation}
If we show that $2\log \log y \geq \log \log \log \alpha$, then the second summand in the RHS of \eqref{eq:x-seq-2} is larger than the negative term of \eqref{eq:x-seq-3} and we are done. For this, note that
\begin{equation*}
    \log \log y = \log (\log \log \alpha - \log \log \log \alpha)
\end{equation*}
Since $\alpha$ is sufficiently large, we have $\log \log \log \alpha \leq \frac{\log \log \alpha}{2}$. So $\log \log y \geq \log \log \log \alpha - 1$. This implies:
\begin{equation*}
    2\log \log y \geq 2(\log \log \log \alpha - 1) \geq \log \log \log \alpha
\end{equation*}
and concludes the proof.
\end{proof}

%%%%%%%%%%%%%%%%%%%%%%%%%%%%%%%%%%%%%%%%%%%%%%%%%%%%%%%%%%%
%%%%%%%%%%%%%%%%%%%%%%%%%%%%%%%%%%%%%%%%%%%%%%%%%%%%%%%%%%%
%%%%%%%%%%%%%%%%%%%%%%%%%%%%%%%%%%%%%%%%%%%%%%%%%%%%%%%%%%%
%%%%%%%%%%%%%%%%%%%%%%%%%%%%%%%%%%%%%%%%%%%%%%%%%%%%%%%%%%%

\section{Stretch-Friendly Clustering}
\label{app:stretch-friendly}
\begin{proof}[Proof of \cref{lem:reduction}]
Consider a cluster $C' \in \mathcal{C'}$. We have to prove three properties for $C'$: (a) Size of $C'$ is at least $2^i$, (b) Radius of $C'$ is less than $3\cdot 2^{i}$, and (c) $C'$ is stretch-friendly. Let us first show that $C'$ satisfies (a) and (b). For this, we consider two cases based on how $C'$ is created:
\begin{itemize}[-]
    \item $C'$ is consisting of a big cluster $C$ with a set of small clusters that has an outgoing edge to $C$. The size of $C'$ is at least $2^i$ as it contains $C$. Each small cluster has size at most $2^i - 1$ (so their hop-diameter is bounded by $2^i - 2$). According to the definition of merging two clusters, the root of $C'$ is the same as the root of $C$. Since the radius of $C$ is at most $3\cdot 2^{i-1} - 1$, the final radius of $C'$ is at most $(2^{i} - 2 + 1) + (3\cdot 2^{i-1} - 1) < 3 \cdot 2^i$.

    \item $C'$ is consisting of two matched small clusters $C_1$ and $C_2$ with a set of small clusters that has an outgoing edge to $C_1$ and a set of small clusters that has an outgoing edge to $C_2$. Without loss of generality, suppose the matched edge between $C_1$ and $C_2$ is oriented from $C_1$ to $C_2$. So the root of $C'$ is the root of $C_2$. There are at least two small clusters in $C'$, so its size is at least $2\cdot 2^{i-1} = 2^i$. As we have discussed before, each small cluster has hop-diameter $2^i - 2$. So the radius of $C'$ is at most $2 \cdot (2^i - 2 + 1) + (2^i - 2) < 3 \cdot 2^i$.
\end{itemize}

It only remains to show the property (c), i.e. $C'$ is stretch-friendly. Note that each cluster in $\mathcal{C'}$ is constructed from merging several clusters in $\mathcal{C}$. And we know by induction that $\mathcal{C}$ is a stretch-friendly partition. So if we show that two stretch-friendly clusters are merged to a stretch-friendly cluster, we are done.

Consider two stretch-friendly clusters $C_1$ and $C_2$ are merged to a cluster $C_{12}$. Suppose the merge edge $\{u_1, u_2\}$ where $u_1 \in C_1$ and $u_2 \in C_2$ is oriented from $C_1$ to $C_2$ and has weight $w$. Moreover, let $\mathrm{root}_1$ and $\mathrm{root}_2$ be the roots of $C_1$ and $C_2$, respectively. From the definition of merge, the root of $C_{12}$ is $\mathrm{root}_2$. Consider a boundary edge with weight $w'$ of $C_{12}$ that is incident to a node $v \in C_{12}$. If $v \in C_1$, then $w' \geq w$. Each edge in the path from $v$ to $\mathrm{root}_1$ has weight at most $w'$ as $C_1$ is stretch-friendly. Each edge in the path from $\mathrm{root}_1$ to $u_1$ has weight at most $w$ (again, because $C_1$ is stretch-friendly) and each edge in the path from $u_2$ to $\mathrm{root}_2$ has weight at most $w$ (since $C_2$ is stretch-friendly). So since $w' \geq w$, all the edges in the path from $v$ to the root of $C_{12}$ which is $\mathrm{root}_{2}$ have weight at most $w'$. The other is that $v \in C_2$. This case is trivial as $\mathrm{root}_{2}$ is root of both $C_2$ and $C_{12}$. So for all boundary edges, $C'$ is stretch-friendly. It remains to show the stretch-friendly property for all inside-edges of $C_{12}$. For inside edges of $C_1$ and $C_2$, we already know that the property holds. So suppose an inside-edge with weight $w'$ between $v_1 \in C_1$ and $v_2 \in C_2$. Exact similar argument, shows that all edges in the path from $v_1$ to $\mathrm{root}_2$ has weight at most $w'$ and all edges in the path from $v_2$ to $\mathrm{root}_2$ has weight at most $w'$ as well. So all the edges in the path from $v_1$ to $v_2$ has weight at most $w'$. This completes the proof that $C_{12}$ is stretch-friendly and as a result shows that $\mathcal{C'}$ (the output of iteration $i$) is stretch-friendly.

Since there are $\lceil \log t\rceil$ iterations, the output of the last iteration is a stretch-friendly $O(t)$-partition where each cluster has size at least $t$. So there are at most $n/t$ clusters.  

The only remaining piece of proving \cref{lem:reduction} is proving the claimed round complexity. For that, notice that steps (1), (2), (4), and (5) of the algorithm can be easily run in $O(2^i)$ rounds since the hop-diameter of each cluster is $O(2^i)$. Step (3) needs $O(2^i \log^* n)$ rounds as it is well-known that a graph with out-degree one can be 3-colored in $O(\log^* n)$ rounds~\cite{Linial87}. And in our case, each node of a graph is a cluster with hop-diameter $O(2^i)$. So the total number of rounds of all the $\lceil \log t\rceil$ iterations is $O(\sum_{i=1}^{\log t} 2^i \log^* n) = O(t\log^* n)$.
\end{proof}

\section{Spanners via Low-Diameter Clusterings}
\label{sec:low_diameter_clusterings}

In \cref{sec:strong_implies_ultra_sparse}, we show that an efficient algorithm for computing a well-separated strong-diameter clustering implies an efficient algorithm for unweighted ultrasparse spanners.
In \cref{sec:weak_implies_sparse}, we show that an efficient algorithm for computing a $3$-separated weak-diameter clustering implies an efficient algorithm for unweighted sparse spanners.
Finally, in \cref{sec:putting_everything_together}, we combine the result from the previous section together with (1) the weak-diameter clustering algorithm of \cite{rozhovn2020polylogarithmic}, (2) the folklore reduction from weighted to unweighted spanners, and (3) the reduction from ultra-sparse spanners to sparse spanners to show that we can efficiently compute a weighted ultra-sparse spanner.
\subsection{Unweighted Ultra-Sparse Spanners via Strong-Diameter Clusterings}
\label{sec:strong_implies_ultra_sparse}
In this section, we show that unweighted ultra-sparse spanners can be efficiently computed given an efficient algorithm for computing well-separated strong-diameter clusterings.

\begin{restatable}{theorem}{unweighted}
\label{thm:unweighted_spanner_warmup}
Let $ t > 0$. 
Assume that for any $t'$ one can construct a $t'$-separated strong-diameter clustering with diameter $\Dldc(n, t')$ in $\Tldc(n, t')$ rounds of the deterministic (randomized) \congest model. 

Then, there is a deterministic (randomized) distributed algorithm that, given an unweighted graph $G$, constructs an $\alpha$-spanner $H$ such that
\begin{enumerate}
    \item $|E(H)| \le n + n/ t$,
    \item $\alpha = O(\Dldc(n, 10 t) + t)$ 
\end{enumerate}
in $O(\log n) \cdot \left( \Tldc(n, 10 t) + \Dldc(n,10 t) + t \right)$ \congest rounds. %\todo{$\log t$ replace it with $\log 10 t$ or something like that} 
\end{restatable}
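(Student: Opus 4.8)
The plan is to follow the high-level outline sketched in \cref{sec:unweighted_spanner} and carry it out with the necessary care. The algorithm builds the spanner in $O(\log n)$ iterations. In iteration $j$, let $V_j$ denote the still-unclustered vertices (with $V_0 = V$). We invoke the assumed clustering algorithm on the induced graph $G[V_j]$ with separation parameter $t' = 10t$ to obtain a $10t$-separated strong-diameter clustering $\fC_j$ covering at least $|V_j|/2$ vertices, each cluster having diameter at most $\Dldc(n, 10t)$. Before moving to the next iteration, we perform the cluster-growing step: each cluster is repeatedly expanded by absorbing all vertices within distance one (in $G[V_j]$) of it, and we stop growing a cluster as soon as a growth step increases its size by a factor less than $1 + 1/t$. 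Since each successful growth step multiplies the size by at least $1+1/t$ and $(1+1/t)^{10t\ln n} > n$, each cluster stops after at most $O(t \log n)$ growth rounds, so its diameter is still $O(\Dldc(n,10t) + t\log n)$; and the $10t$-separation guarantees that the grown clusters remain vertex-disjoint and in fact retain separation at least, say, $3$ (we only consumed up to, say, $2t < 10t - 3$ layers — here I'd double-check the constants so that separation stays at least a small constant throughout). We then add to the spanner a BFS tree of each grown cluster, remove all clustered vertices from $V_j$ to form $V_{j+1}$, and iterate.

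After all $O(\log n)$ iterations every vertex is clustered, so the union of cluster trees has at most $n-1$ edges. For the extra edges: at the very end we have a collection of vertex-disjoint, well-separated clusters, and for each unordered pair of clusters that are adjacent in $G$ we add one arbitrary connecting edge. The key accounting point, exactly as in the sketch, is that when we stopped growing a cluster $C$ (in whatever iteration it was finalized), the number of $G$-neighbors of $C$ outside $C$ was less than $|C|/t$ — otherwise one more growth step would have increased $|C|$ by a factor at least $1+1/t$, contradicting the stopping rule. Charging each adjacent cluster pair to one of its two clusters, the total number of extra edges is at most $\sum_C |C|/t \le n/t$. Hence $|E(H)| \le n - 1 + n/t \le n + n/t$.

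For the stretch bound we argue that every edge $\{u,v\} \in E(G)$ is spanned with stretch $O(\Dldc(n,10t) + t)$; by the triangle inequality this suffices for all pairs. If $u$ and $v$ are in the same final cluster $C$, then $d_H(u,v) \le \diam(T_C) = O(\Dldc(n,10t) + t\log n)$ — wait, this is where care is needed: the naive growth argument gives diameter $O(\Dldc + t\log n)$, not $O(\Dldc + t)$, so to get the cleaner bound claimed in the theorem one uses the strengthened argument (promised at the end of \cref{sec:unweighted_spanner}) that separation $100t$ rather than $100t\log n$ already suffices, meaning only $O(t)$ growth layers are ever needed in total across all scales; I would import that refinement here. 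If $u$ and $v$ lie in two different final clusters $C_u, C_v$, then since $\{u,v\}$ is an edge, $C_u$ and $C_v$ are adjacent, so we added some edge $\{u',v'\}$ with $u' \in C_u$, $v' \in C_v$; then $d_H(u,v) \le d_{T_{C_u}}(u,u') + 1 + d_{T_{C_v}}(v',v) = O(\Dldc(n,10t)+t)$. Finally, if $u$ is clustered but $v$ is not — this cannot happen at the end since every vertex is clustered; and edges with exactly one endpoint outside a cluster during intermediate iterations are irrelevant because we only reason about the final clustering. The round complexity is $O(\log n)$ iterations, each costing $\Tldc(n,10t)$ for the clustering call, plus $O(\Dldc(n,10t))$ to build BFS trees and $O(t)$ (using the refined bound) for the growth steps, plus the final adjacent-edge selection which is $O(\Dldc(n,10t)+t)$ via the cluster trees; everything runs on $G$ since $G[V_j]$ is a subgraph, giving the stated $O(\log n)\cdot(\Tldc(n,10t) + \Dldc(n,10t) + t)$ bound.

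The main obstacle I anticipate is making the constants in the separation/growth tradeoff work out so that (i) the grown clusters remain well-separated (at least $3$-separated, so the final clustering is genuinely a clustering and the adjacent-edge step is well-defined), (ii) the number of growth layers is only $O(t)$ rather than $O(t\log n)$ so that the stretch is $O(\Dldc + t)$ and not $O(\Dldc + t\log n)$, and (iii) the cluster-size-blowup argument ($|C|/t$ neighbors) is not spoiled by the growth happening in an earlier iteration than when the edge is counted. The cleanest route is probably to prove the strengthened statement directly — tracking, for each finalized cluster, a potential equal to its size and showing the total over all iterations telescopes — rather than reasoning iteration-by-iteration; I would set that up as a separate lemma before assembling the theorem.
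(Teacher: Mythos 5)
There is a genuine gap at the heart of your argument: you correctly diagnose that the naive growth rule (``grow until a layer adds less than a $(1+1/t)$ factor'') forces up to $O(t\log n)$ growth layers, which both destroys the separation of the $10t$-separated clusters and inflates the stretch to $O(\Dldc+t\log n)$ --- but you then defer the fix (``I would import that refinement here'', ``I would set that up as a separate lemma'') without supplying it. The refinement is the actual content of the proof, and your proposed substitute (a telescoping potential over all scales showing ``only $O(t)$ growth layers are ever needed in total'') is not an argument that closes; nothing prevents a single cluster from needing $\Omega(t\log n)$ layers under your stopping rule.

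The missing idea, as the paper executes it in \cref{lem:unweighted_contraction}, is a good/bad dichotomy with a hard cap of $4t$ on the growth radius. For each cluster $C$ of the fresh $10t$-separated clustering, one looks for the smallest $j<4t$ such that $C^{+j}$ (the set of vertices within distance $j$ of $C$ among the currently unclustered vertices) has at most $|C|/t$ neighbors; if such a $j$ exists the cluster is finalized as $C^{+j}$, and otherwise the cluster is declared \emph{bad} and simply discarded, its vertices remaining unclustered for the next iteration. A bad cluster gains at least $|C|/t$ vertices in each of $4t$ layers, so $|C^{+4t}|\ge 5|C|$; since $10t$-separation makes the sets $C^{+4t}$ pairwise disjoint, bad clusters can account for at most a $1/5$ fraction of the unclustered vertices, so the unclustered set still shrinks by a constant factor ($7/10$) per iteration even though some clusters are thrown away. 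This single device simultaneously yields (i) separation $\ge 2t$ among grown clusters, (ii) diameter increase only $+8t$, hence stretch $O(\Dldc(n,10t)+t)$, and (iii) the $|C|/t$ neighbor bound needed for the $n/t$ edge count. Your edge-charging across iterations (charge each adjacent pair to the earlier-finalized cluster) is essentially sound, though you should also note, as the paper does via its incremental $E^{inter}$ construction, that selecting one representative edge per adjacent cluster pair is itself nontrivial in \CONGEST and is best handled by having each still-unclustered vertex record one edge to each newly finalized neighboring cluster.
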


The proof follows along the lines of the proof sketch given in \cref{sec:unweighted_spanner}, with two differences.

\begin{enumerate}
    \item We start with $\Theta(t)$-separated clusters instead of $\Theta(t\log n)$-separated ones. 
    This is because it is enough for our purposes that a large fraction of clusters needs to stop growing after $O(t)$ steps; we do not need to wait for $O(t \log n)$ steps after all of them finish. 
    \item Choosing an arbitrary edge between any two neighboring clusters of a given clustering is a non-trivial task. Therefore, we gradually compute a set of edges $E^{inter}$ such that the size of $E^{inter}$ is sufficiently small, and between any two neighboring clusters there exists at least one edge in $E^{inter}$ connecting these two clusters.
\end{enumerate}

In particular, we obtain the following lemma.

\begin{figure*}
    \centering
    \includegraphics[width = \textwidth]{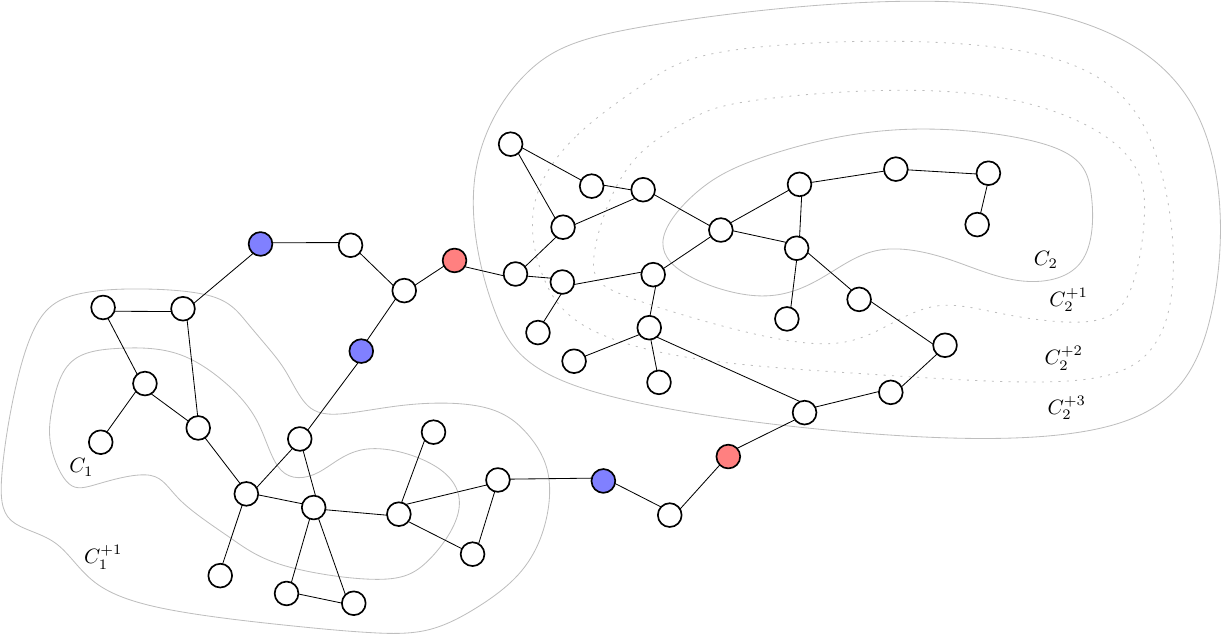}
    \caption{The figure shows two clusters $C_1, C_2$ that are $8$-separated. We can enlarge $C_1$ to $C_1^{+1}$ and $C_2$ to $C_2^{+3}$ so that the new clusters $C_1^{+1}$ and $C_2^{+3}$ have the property that the number of nodes neighboring with them (blue for $C_1^{+1}$ and red for $C_2^{+3}$) is only a small fraction of their size. The clusters remain $4$-separated, and their diameter is only increased additively by at most $+2$ in case of $C_1$ and at most $+6$ in case of $C_2$. }
    \label{fig:ldc}
\end{figure*}

\begin{lemma}
\label{lem:unweighted_contraction}
%\todo{$\log t$ replace it with $\log 10 t$ or something like that} 
Assume that for any $t'$ one can construct a $t'$-separated low diameter clustering of an $n$-vertex graph with diameter $\Dldc(n, t')$ in $\Tldc(n, t')$ rounds of the deterministic (randomized) \congest model. 

Then, for any $t > 0$ there is a deterministic (randomized) distributed algorithm that outputs a complete clustering $\fC$ together with a set of edges $E^{inter} \subseteq E(G)$ such that

\begin{enumerate}
    \item $\diam(\fC) \le \Dldc(n,10t) + 10t$, 
    \item $|E^{inter}| \le n /t $,
    \item for any two neighboring clusters $C_1,C_2 \in \fC$, there exists at least one edge in $E^{inter}$ with one endpoint in $C_1$ and one endpoint in $C_2$. 
\end{enumerate}
The round complexity of the algorithm in the \congest model is
\[
O(\log n) \cdot  \left( T(n,10t) + D(n,10t) + t \right).
\]
\end{lemma}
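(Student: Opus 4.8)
The plan is to invoke the assumed clustering routine $A$ a total of $O(\log n)$ times, each time with separation parameter $10t$, peeling off a constant fraction of the yet-unclustered vertices, and after each invocation to run a short ``cluster-growing'' phase that shrinks cluster boundaries down to a $1/t$ fraction of the cluster size. Formally, maintain the set $U$ of currently unclustered vertices, with $U=V(G)$ initially. In iteration $\ell$: run $A$ on $G[U]$ with separation $10t$ (simulated on $G$ restricted to $U$), obtaining a $10t$-separated clustering whose clusters --- the \emph{cores} --- have strong diameter at most $\Dldc(n,10t)$ and together cover at least $|U|/2$ vertices. Then grow: for $j=1,\dots,5t$ each current cluster annexes the vertices of $U$ adjacent to it that lie in no core, lower identifier winning ties. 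For a cluster with core $C_0$, let $g_C$ be the smallest $j\le 5t$ such that annexing the $(j{+}1)$-st layer would grow the cluster by a factor below $1+1/t$, or $5t$ if there is no such $j$; retain only $C:=C_0^{+g_C}$. Call $C$ \emph{done} if $g_C<5t$ and \emph{bad} otherwise. Dissolve each bad cluster, returning all its vertices (core included) to $U$, and remove from $U$ permanently every vertex of every done cluster. The output $\fC$ is the union over all iterations of the done clusters.

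Each guarantee is a short counting argument. \emph{Diameter:} a done cluster is a core of strong diameter $\le\Dldc(n,10t)$ plus at most $5t$ extra hops on either end of an internal path, so $\diam(\fC)\le\Dldc(n,10t)+10t$. \emph{Iteration count:} assuming $t\ge 1$ we have $(1+1/t)^{5t}\ge 4$, so a bad cluster's core is at most a quarter of the cluster; as the bad clusters of iteration $\ell$ are disjoint, their cores cover at most $|U|/4$ vertices, hence the done cores of iteration $\ell$ cover at least $|U|/2-|U|/4=|U|/4$ vertices, all leaving $U$ forever, so $|U|$ drops geometrically and $\fC$ is complete after $O(\log n)$ iterations. \emph{Edge set and coverage:} this is where the $10t$-separation is used. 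Since the cores of iteration $\ell$ are $10t$-separated in $G[U]$ and each grows by at most $5t$ hops (choosing the precise constants so that twice the growth does not exceed the separation), no \emph{done} cluster ever competes with another for a vertex; hence a done cluster $C=C_0^{+g_C}$ contains every vertex of $U$ then within $g_C$ hops of $C_0$, and the set $B_C$ of vertices of $U$ at distance exactly $g_C{+}1$ from $C_0$ satisfies $|B_C|<|C|/t$ by the stopping rule, and no two done clusters of the same iteration are adjacent. Therefore every neighboring pair $C,C'\in\fC$ comes from two different iterations; if $C$ is the earlier, the vertex of $C'$ witnessing the adjacency was still in $U$ right after $C$'s iteration, so it lies in $B_C$. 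So it suffices to add to $E^{inter}$, for each done cluster $C$ and each $v\in B_C$, one edge from $v$ into $C$: this covers every neighboring pair, and $|E^{inter}|\le\sum_{C\in\fC}|B_C|<\tfrac1t\sum_{C\in\fC}|C|\le n/t$.

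For the round complexity, iteration $\ell$ spends $\Tldc(n,10t)$ rounds on $A$, $O(t)$ rounds on the $5t$ annexation steps, and $O(\Dldc(n,10t)+t)$ rounds to choose all the $g_C$: grow tentatively for $5t$ steps, then by one pipelined convergecast over the cluster trees (of depth $O(\Dldc(n,10t)+t)$) each cluster root learns $|C_0^{+j}|$ for all $j\le 5t$ and broadcasts $g_C$; restoring the bad clusters' vertices and adding the boundary edges costs another $O(\Dldc(n,10t)+t)$ rounds. Summing over $O(\log n)$ iterations yields $O(\log n)\,(\Tldc(n,10t)+\Dldc(n,10t)+t)$. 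The construction uses no randomness of its own, so $A'$ is deterministic exactly when $A$ is.

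The heart of the matter --- and the reason separation $10t$ rather than $\Theta(t\log n)$ suffices --- is juggling these estimates simultaneously: we grow for only $O(t)$ steps, which is what bounds the diameter, keeps same-iteration clusters non-adjacent, and keeps the growth interference-free, yet we must still commit a constant fraction of $U$ each iteration. That is exactly the multiplicative dichotomy $(1+1/t)^{5t}=\Theta(1)> 1$: a cluster still expanding fast after $5t$ steps has necessarily swallowed a constant fraction of its own final mass as freshly annexed vertices, so there is not enough unclustered mass for the bad cores to eat more than a constant fraction of $U$. The one genuinely fiddly point, as the paper notes, is that $E^{inter}$ cannot be built by picking a canonical edge between each neighboring pair at the very end; one must fix the edges out of each done cluster at the instant it is finalized and verify --- through the ``neighboring pairs are cross-iteration'' observation --- that this still hits every pair.
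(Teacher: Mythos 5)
Your proof follows essentially the same route as the paper's: iteratively cluster the unclustered vertices with a $10t$-separated clustering, grow each cluster layer by layer until its boundary drops below a $1/t$ fraction, discard (and re-release) the clusters that never stop growing, and charge each inter-cluster edge of $E^{inter}$ to the boundary of the earlier-formed cluster of the pair. The only slip is the growth cap of $5t$: with $10t$-separated cores, two balls grown by $5t$ can touch or overlap, which would break your claims that done clusters of the same iteration never compete for a vertex and are never adjacent, as well as the disjointness used in the bad-cluster counting; capping the growth at $4t$ (as the paper does) leaves the needed $2t$ slack and every estimate still goes through.
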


\begin{proof}

Let $T = \lceil 1+ \log_{10/7} n \rceil$. 
The algorithm computes a sequence of clusterings of $G$ $\fC_0 \subseteq \fC_1 \subseteq \dots \subseteq \fC_T$ and a sequence of edges $E^{inter}_0 \subseteq E^{inter}_1 \subseteq \ldots \subseteq E^{inter}_T \subseteq E$. We denote by $V^{unclustered}_i$ the set of unclustered nodes in $\fC_i$. 
The clustering $\fC_i$ and the set of edges $E^{inter}_i$ will satisfy

\begin{enumerate}
    \item $\diam(\fC_i)\leq D(n,10t) + 10t$,
    \item $|V^{unclustered}_i| \leq (7/10)^i n$,
    \item $|E^{inter}_i| \leq |\bigcup_{C \in \fC_i} C|/t$,
    \item for any two neighboring clusters $C_1,C_2 \in \fC_i$, there exists at least one edge in $E^{inter}_i$ with one endpoint in $C_1$ and one endpoint in $C_2$,
    \item for  any node $v \in V^{unclustered}_i$ and neighboring cluster $C \in \fC_i$, there exists a node $u \in C$ with $\{v,u\} \in E^{inter}_i$.
\end{enumerate}

Note that the invariants for $i = 0$ are satisfied by letting $\fC_0$ be the empty clustering and $E^{inter}_0$ be the empty set.

Moreover, $|V^{unclustered}_T| \leq (7/10)^T n < 1$ and therefore $\fC_T$ is a complete clustering of $G$.
Hence, we can output $\fC = \fC_T$ and $E^{inter} = E^{inter}_T$.

Thus, it remains to show how to compute $\fC_{i+1}$ and $E^{inter}_{i+1}$ given $\fC_i$ and $E^{inter}_i$ while preserving the invariants.

Let $G_i = G[V^{unclustered}_i]$. First, we compute a $10t$-separated strong-diameter clustering $\fC'_i$ of $G_i$ with diameter $D(n,10t)$ that clusters at least $|V(G_i)|/2$ nodes in $T(n,10t)$ rounds 
of the deterministic (randomized) \congest model.

To explain the next part of the phase, let us introduce a new notation (see also \cref{fig:ldc}). For a cluster $C \in \fC'_i$, we denote by $C^{+j}$ its superset such that $u \in C^{+j}$ if and only if $d_{G_i}(u, C) \le j$.

For $j \in \{0,1,\ldots,4t-1\}$, we say that distance $j$ is a good cutting distance for a cluster $C \in \fC'_i$ if $C^{+j}$ has at most $\frac{|C|}{t}$ neighboring nodes in $G_i$.

We say that a cluster $C \in \fC'_i$ is good if there exists a good cutting distance for $C$, and otherwise, we refer to $C$ as bad.
For a good cluster $C \in \fC'_i$, let $j_C$ denote the smallest good cutting distance for $C$. 
Then, we obtain the clustering $\fC_{i+1}$ from $\fC_i$ by adding for each good cluster $C \in \fC'_i$ the cluster $C^{+j_C}$ to the clustering $\fC_{i+1}$.

Moreover, we obtain $E^{inter}_{i+1}$ from $E^{inter}_i$ by adding for each node $v$ in $G_i$ and newly added neighboring cluster $C^{+j_C}$ an arbitrary edge connecting $v$ with $C^{+j_C}$.

We next show that all the invariants are preserved, and afterward discuss the distributed implementation of the procedure.

We start with the first invariant, namely that $\diam(\fC_{i+1})\leq D(n,10t) + 10t$. Each cluster $C \in \fC'_i$ has a diameter of at most $D(n,10t)$. 
Therefore, $C^{+j}$ has a diameter of at most $D(n,10t) + 2j$ and as for each good cluster $C \in \fC'_i$, $j_C < 4k$, it directly follows that $\diam(\fC_{i+1})\leq D(n,10t) + 10t$, as needed.

Next, we have to show that $|V^{unclustered}_{i+1}| \leq (7/10)^{i+1} n$. As $|V^{unclustered}_{i}| \leq (7/10)^{i} n$, 
it therefore suffices to show that $|V^{unclustered}_{i+1}| \leq (7/10)|V^{unclustered}_{i}|$.
Note that $\fC'_i$ clusters at least $\frac{|V^{unclustered}_{i}|}{2}$ nodes. Hence, it suffices to show that at most one-fifth  of the nodes in $G_i$
are contained in bad clusters. 
As $\fC'_i$ is $10t$-separated, any two clusters $C_1 \neq C_2 \in \fC'_i$ satisfy $d_{G_i}(C_1^{+4t}, C_2^{+4t}) \ge 2t$. 
In particular, $C_1^{+4t}$ and $C_2^{+4t}$ are disjoint and therefore $|\bigcup_{C \in \fC'_i} C^{+4t}| \leq |V^{unclustered}_{i}|$.

Moreover, for each bad cluster $C$ we have $|C^{+4t}| \geq |C| + 4t \cdot \frac{|C|}{t} = 5|C|$. Therefore,

\[|\bigcup_{C \in \fC'_i \colon \text{$C$ is bad}} C| \leq \frac{1}{5} |\bigcup_{C \in \fC'_i} C^{+4t}| \leq \frac{1}{5}|V^{unclustered}_{i}|,\]

as needed. 

For the third invariant, we have to show that $|E^{inter}_{i+1}| \leq |\bigcup_{C \in \fC_{i+1}} C|/t$. As $|E^{inter}_i| \leq |\bigcup_{C \in \fC_{i}} C|/t$,
it suffices to show that $|E^{inter}_{i+1} \setminus E^{inter}_i| \leq |\bigcup_{C \in \fC_{i+1} \setminus \fC_i} C|/t$. 
Note that each newly added cluster $C$ is neighboring with at most $|C|/t$ nodes. As $E^{inter}_{i+1} \setminus E^{inter}_i$ contains at most one edge
with one endpoint in $C$ for each neighboring vertex $v$, this implies that $E^{inter}_{i+1} \setminus E^{inter}_i$ contains at most $|C|/t$ edges with one 
endpoint in $C$, which shows that the third invariant is preserved.

Next, we verify that the fourth invariant is preserved. Consider any two neighboring clusters $C_1,C_2 \in \fC_{i+1}$. We have to show that there exists
at least one edge in $E^{inter}_{i+1}$ with one endpoint in $C_1$ and one endpoint in $C_2$. If both $C_1$ and $C_2$ were already contained in $\fC_i$, then this
property directly follows from induction. Note that $C_1$ and $C_2$ cannot both be newly added clusters, as this would imply that $d_{G_i}(C_1,C_2) > 2t$ and therefore
they could not be neighboring in $G$. Thus, it remains to consider the case that exactly one of the two clusters was newly added to $\fC_{i+1}$, let's say $C_1$.
As $C_1$ and $C_2$ are neighbors, there exists by definition a vertex $v$ in $C_1$ that is neighboring with $C_2$. As $C_1$ is newly added, $v$ was unclustered before,
i.e., $v \in V^{unclustered}_i$. In particular, the fifth invariant implies that $E^{inter}_i$ contains an edge incident to $v$ with the other endpoint being contained in
$C_2$, as desired.

We now check the last invariant. Consider an arbitrary $v \in V^{unclustered}_{i+1}$ and cluster $C \in \fC_{i+1}$ such that $v$ is neighboring with $C$. 
We have to show that there exists a node $u \in C$ with $\{v,u\} \in E^{inter}_{i+1}$. If $C \in \fC_i$, then this follows from induction.
Otherwise, it directly follows from the description of how we obtain $E^{inter}_{i+1}$ from $E^{inter}_i$.

It remains to discuss the distributed implementation. 

Computing the clustering $\fC'_i$ takes $\Tldc(n, 10t)$ \congest rounds.
Next, we have to decide for each cluster $C \in \fC'_i$ whether it is good or bad, and if $C$ is a good cluster, we additionally have to compute the smallest good cutting distance of $C$.
For each $C \in \fC'_i$, let $F_C$ be a BFS forest in $G_i$ of depth $4t$ with $C$ being the set of roots. As $\fC'_i$ is $10t$-separated, it follows that $F_C$ is disjoint from $F_{C'}$ 
for any other cluster $C' \in \fC'_i$. Hence, we can compute $F_C$ for all the clusters $C \in \fC'_i$ in $O(t)$ \congest rounds.
Moreover, each node $u$ in $F_C$ gets to know its distance $d_{G_i}(u,C)$ to $C$ in $G_i$.

A standard pipelining idea therefore implies that each root of $F_C$ can learn for every $j \in \{0,1,2,\ldots,4t\}$ how many nodes of depth $j$ its tree contains in $O(t)$ \congest rounds.
As $\diam(C) \leq D(n,10t)$, with another standard pipelining idea, the root of the cluster $C$ can learn in $O(t + D(n,10t))$ additional \congest rounds for each $j \in \{0,1,2,\ldots,4t\}$
how many vertices $u \in V(G_i)$ with $d_{G_i}(u,C) = j$ exists. This also allows the root to decide whether $C$ is good or bad, and if it is good, to compute the smallest good cutting distance of $C$.

Finally, the root can broadcast this information to each node having a distance of at most $4t$ to $C$ in $O(t + D(n,10t))$ \congest rounds.

Moreover, after having computed the clustering $\fC_{i+1}$, it is easy to see that we can compute $E^{inter}_{i+1}$ in $O(1)$ additional \congest rounds.

Hence, the overall round complexity of the algorithm is $O(\log n) \cdot (T(n,10t) + D(n,10t) + t)$, as desired.

\end{proof}

\paragraph{The Spanner Construction}
We are now ready to prove \cref{thm:unweighted_spanner_warmup}.

\begin{proof}
We apply \cref{lem:unweighted_contraction} to get a complete clustering $\fC$ and a set of edges $E^{inter}$. For each cluster $C \in \fC$, let $T_C$ be a spanning tree of $C$ with diameter $O(\Dldc(n,10t) + t)$. Such a spanning tree can be computed by a simple BFS from an arbitrary node in the cluster, as the diameter of each cluster is $O(\Dldc(n,10t) + t)$. Now, let $F$ be the union of trees $T_C$ for $C \in \fC$. 

The spanner $H$ now includes all the edges of $F$ and all the edges of $E^{inter}$.

As $F$ is a forest, it contains at most $n-1$ edges and as $|E^{inter}| \leq \frac{n}{t}$, we get $|E(H)| \leq n + n/t$. 
\begin{claim}
\label{cl:unweighted_diam}
$H$ is an $O(\Dldc(n,10t) + t)$-spanner. 
\end{claim}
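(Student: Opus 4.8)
The plan is to reduce the spanner guarantee to a per-edge stretch bound and then do a two-case analysis using the structure produced by \cref{lem:unweighted_contraction}. First I would observe that it suffices to show $d_H(u,v) \le O(\Dldc(n,10t)+t)$ for every edge $\{u,v\} \in E(G)$: given this, for arbitrary $x,y \in V(G)$ take a shortest path $x = w_0, w_1, \dots, w_\ell = y$ in $G$ with $\ell = d_G(x,y)$ and apply the triangle inequality, $d_H(x,y) \le \sum_{i=1}^{\ell} d_H(w_{i-1},w_i) \le O(\Dldc(n,10t)+t)\cdot d_G(x,y)$, which is exactly the $\alpha$-spanner condition with $\alpha = O(\Dldc(n,10t)+t)$.

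So fix an edge $\{u,v\} \in E(G)$. Since $\fC = \fC_T$ is a complete clustering of $G$, both $u$ and $v$ lie in clusters, say $u \in C_u$ and $v \in C_v$. In the first case $C_u = C_v =: C$; then $T_C \subseteq H$ is a spanning tree of $C$ of diameter $O(\Dldc(n,10t)+t)$ (as recorded in the construction of $H$), so $d_H(u,v) \le d_{T_C}(u,v) \le O(\Dldc(n,10t)+t)$. In the second case $C_u \ne C_v$; since $\{u,v\}\in E(G)$ with $u \in C_u$, $v\in C_v$, the clusters $C_u$ and $C_v$ are neighboring, so by invariant~4 of \cref{lem:unweighted_contraction} there is an edge $\{u',v'\} \in E^{inter} \subseteq E(H)$ with $u' \in C_u$ and $v' \in C_v$. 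Routing $u \leadsto u'$ inside $T_{C_u}$, then the edge $\{u',v'\}$, then $v' \leadsto v$ inside $T_{C_v}$ gives
\[
d_H(u,v) \;\le\; d_{T_{C_u}}(u,u') + 1 + d_{T_{C_v}}(v',v) \;\le\; O(\Dldc(n,10t)+t),
\]
using again that each $T_C$ has diameter $O(\Dldc(n,10t)+t)$. This proves the per-edge bound and hence the claim.

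I do not expect a genuine obstacle here; the statement is essentially immediate from the invariants of \cref{lem:unweighted_contraction} together with the definition of $H$ as $F \cup E^{inter}$. The only points requiring a little care are (i) invoking completeness of $\fC$ to guarantee that \emph{both} endpoints of every edge are clustered (so the two-case split is exhaustive), and (ii) making sure the spanning trees $T_C$ used in $H$ are exactly the ones with diameter $O(\Dldc(n,10t)+t)$ — which is legitimate because each cluster $C \in \fC$ has (strong) diameter at most $\Dldc(n,10t)+10t$, so a BFS tree from any node of $C$ within $G[C]$ has depth $O(\Dldc(n,10t)+t)$ and hence diameter $O(\Dldc(n,10t)+t)$.
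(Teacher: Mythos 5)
Your proposal is correct and matches the paper's proof essentially step for step: both reduce to bounding the stretch of each edge of $G$, then split into the same two cases (endpoints in the same cluster, handled via the low-diameter tree $T_C$; endpoints in neighboring clusters, handled via an $E^{inter}$ edge guaranteed by invariant~4 of \cref{lem:unweighted_contraction} plus the two trees). The only difference is that you spell out the standard edge-to-path reduction explicitly, which the paper leaves implicit.
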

\begin{proof}
Consider any edge $\{u,v\} \in E(G)$. We have to show that $d_H(u,v) = O(\Dldc(n,10t) + t)$.  

First, consider the case that $u$ and $v$ belong to the same cluster. 
Hence, there exists a path of length $O(\Dldc(n,10t) + t)$ between $u$ and $v$ in $F$ and therefore $d_H(u,v) = O(\Dldc(n,10t) + t)$. 

It remains to consider the case that $u \in C_1$ while $v \in C_2 \not= C_1$. 
As $C_1$ and $C_2$ are neighbors, there exits at least one edge $\{x,y\}\in E^{inter}$ with $x \in C_1$ and $y \in C_2$.
Now, $u$ and $x$ are connected by a path of length $O(\Dldc(n,10t) + t)$ in $F$ and $v$ and $y$ are connected by a path of length $O(\Dldc(n,10t) + t)$ in $F$.
Thud, $u$ and $v$ are connected by a path of length at most $2 \cdot O(\Dldc(n,10t) + t) + 1 = O(\Dldc(n,10t) + t)$ and therefore $d_H(u,v) = O(\Dldc(n,10t) + t)$, as desired.

\end{proof}

It remains to analyze the complexity.
The algorithm invokes \cref{lem:unweighted_contraction} with round complexity $O(\log n) \cdot \left( \Tldc(n, 10t) + \Dldc(n,10t) + t \right)$.

After that, we need $O(\diam(\fC)) = O(\Dldc(n, 10t) + t)$ additional rounds to compute the forest $F$ and therefore the construction of the whole spanner takes 
$O(\log n) \cdot \left( \Tldc(n, 10t) + \Dldc(n,10t) + t \right)$ \congest rounds.
\end{proof}

\subsection{Unweighted Sparse Spanners via Weak-Diameter Clusterings}
\label{sec:weak_implies_sparse}

In this section, we prove \cref{thm:final-result-weak-clustering-to-utrasparse-unweighted}. It shows that an efficient algorithm for computing a $3$-separated weak-diameter clustering, defined below, implies an efficient algorithm for computing a sparse unweighted spanner. 

\begin{definition}[$t$-separated Weak Diameter Clustering]
\label{def:weak_diameter_clustering}
Assume we are given an unweighted graph $G$ and parameters $t, \xiavg, D$. A $t$-separated weak-diameter clustering with weak-diameter $D$ is a clustering $\fC$, with each cluster $C$ coming with a tree $T_C, V(T_C) \supseteq C$ such that

\begin{enumerate}
    \item For each $C \in \fC$ we have $\diam(T_C) \le D$. 
    \item For each $C_1 \not = C_2$ we have $d(C_1, C_2) \ge t$.
    \item We have $|\bigcup_{C \in \fC} C| \ge n/2$. 
\end{enumerate}

Moreover, if $\xi(v)$ is the number of trees $T_C, C \in \fC$ that a node $v \in G$ is contained in, then  $\sum_{v \in V(G)} \xi(v) \le n \cdot \xiavg$. 
\end{definition}

Note that each strong-diameter clustering is a weak-diameter clustering with the same parameters and $\xiavg \leq 1$. Therefore, \cref{thm:final-result-weak-clustering-to-utrasparse-unweighted} can be seen as a stronger result compared to \cref{thm:unweighted_spanner_warmup} in the sense that it only assumes an efficient algorithm for computing a weak-diameter clustering instead of a strong one. On the other hand, \cref{thm:final-result-weak-clustering-to-utrasparse-unweighted} cannot be used to directly obtain ultra-sparse spanners, which is however not a problem due to \cref{thm:p1-sparse-ultrasparse}.

\thmDeterministic*

\begin{proof}
The proof is very similar to the proof of \cref{thm:unweighted_spanner_warmup}. 
We start with an empty clustering. In each step, we add additional clusters to the clustering until we have a complete clustering.
In particular, consider a fixed step and let $V^{unclustered}$ denote the set of unclustered vertices with respect to the current clustering.
We now use algorithm $\fA$ to compute a $3$-separated clustering $\fC$ in $G[V^{unclustered}]$ with weak-diameter $D(n)$ and average overlap at most $\xiavg$ clustering at least half of the nodes.
For each cluster $C \in \fC$, we add the cluster $C$ to the current clustering and we add all edges in the tree $T_C$ to the final spanner. This way, we add at most
$\xiavg(n) \cdot |V^{unclustered}|$ many edges to the spanner. Moreover, for each unclustered
node $v$ neighboring with a cluster $C \in \fC$ (due to the $3$-separation, each node neighbors with at most one cluster), we add an arbitrary edge from $v$ to $C$ to the final spanner, for a total of at most $|V^{unclustered}|$ edges.  As the number of unclustered nodes decreases by a factor of $2$ in each step, all the nodes are clustered after $O(\log n)$ steps
and the total number of edges of the final spanner is $O(\xiavg(n)n)$.
With the exact same argumentation as in the proof of \cref{cl:unweighted_diam}, it follows that $H$ is a $\beta$-spanner with $\beta = O(D(n))$.
The runtime bounds readily follow.
\end{proof}

\subsection{Weighted Work-Efficient Ultra-Sparse Spanners}
\label{sec:putting_everything_together}

In this section, we show how to efficiently compute weighted ultra-sparse spanners. 

\thmWeightedUltra*
\begin{proof}
The starting point is the following theorem.

\begin{theorem}[Theorem 1.12 in \cite{rozhovn2020polylogarithmic}]
\label{thm:weak_diam}
Given an undirected graph, we can build a weak-diameter clustering $\fC$ such that
\begin{enumerate}
    \item Each $T_C$ for a $C \in \fC$ satisfies $\diam(T_C) = O(\log^3 n)$. 
    \item Every two different clusters $C_1,C_2 \in \fC$ are at least $10$-separated. 
    \item Every node of $V(G)$ is in at most $O(\log n)$ trees $T_C$. 
    \item $|\bigcup \fC| \ge n/2$. 
\end{enumerate}
The algorithm needs $O(\log^6 n)$ deterministic \congest rounds.  \footnote{The version of the theorem in~\cite{rozhovn2020polylogarithmic} has an additional log factor as it builds a network decomposition instead of a single clustering. }
\end{theorem}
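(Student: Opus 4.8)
The plan is to reconstruct, step for step, the deterministic ball-carving clustering of Rozho\v{n} and Ghaffari: the theorem is exactly one ``stratum'' of their weak-diameter network decomposition, so it suffices to describe and analyse a single run of their clustering subroutine. We work on the cluster graph whose vertices are clusters: initially every vertex of $G$ is its own cluster, carrying the trivial Steiner tree $T_{\{v\}}=\{v\}$ and a distinct $b$-bit identifier with $b=O(\log n)$, and a set $S$ of ``alive'' vertices starts as $V(G)$. To obtain the required $10$-separation rather than the mere non-adjacency that the basic algorithm produces, we run the whole construction on the constant power $G^{9}$ of $G$ (simulable in \congest with $O(1)$ overhead per step): a pair of clusters at distance $\ge 2$ in $G^{9}$ lies at distance $\ge 10$ in $G$, and unrolling each $G^{9}$-edge of a Steiner tree into a $\le 9$-hop $G$-path inflates diameters and overlaps by only a constant factor.

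\textbf{The construction.} Process the identifier bits $i=1,\dots,b$ one at a time, maintaining after bit $i$ the invariant that two clusters whose identifiers differ within their first $i$ bits are at distance $\ge 2$ (hence non-adjacent), and that at most $\tfrac{i}{2b}\,n$ vertices have been removed from $S$. In phase $i$, partition the alive clusters into groups sharing the first $i-1$ identifier bits; distinct groups are non-adjacent by the invariant, so the phase runs in parallel across groups. Within a group, call a cluster \emph{blue} if its $i$-th bit is $0$ and \emph{red} otherwise, and grow the blue region into the red region layer by layer: in step $j$ every alive vertex at $G$-distance $j$ from the blue region joins the (unique, tie-broken by identifier) blue cluster it touches and adopts that cluster's identifier. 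Stop at the first step $j^\star$ at which the newly added layer contains fewer than a $\tfrac1{2b}$ fraction of the vertices currently in the blue region; \emph{delete} that thin layer from $S$ and freeze all surviving red clusters with bit $i=1$. After phase $b$, output the surviving clusters together with the BFS trees accumulated by their growth.

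\textbf{Correctness.} \emph{Separation:} after phase $i$ no surviving red vertex is adjacent to the blue region, since all such vertices sat in the discarded thin layer; hence distinct first-$i$-bit prefixes stay $\ge 2$ apart, and after $b$ bits distinct identifiers are $\ge 2$ apart, which on $G^{9}$ yields $10$-separation. \emph{Coverage:} the blue clusters of a single phase have pairwise-disjoint vertex sets (by tie-breaking), so the thin layers deleted in phase $i$ total fewer than $\tfrac{n}{2b}$ vertices; over $b$ phases fewer than $n/2$ vertices leave $S$, so $|\bigcup\fC|\ge n/2$. \emph{Weak diameter:} had steps $1,\dots,j$ of a phase all been ``fat'' the blue region would have grown by a factor $\ge 1+\tfrac1{2b}$ per step and exceeded $n$ once $j=\Theta(b\log n)$, so $j^\star=O(\log^2 n)$; each phase therefore adds $O(\log^2 n)$ to a cluster's BFS radius and after $b=O(\log n)$ phases the radius is $O(\log^3 n)$. \emph{Overlap:} when a vertex $v$ is pulled in some phase out of the cluster $C$ that then contains it, $C$ may have to retain $v$ as a Steiner relay in order to stay connected --- this is precisely why the diameter guarantee is only \emph{weak} --- and since $v$ is pulled out at most once per phase, after $O(\log n)$ phases $v$ relays for $O(\log n)$ clusters (plus is a genuine member of one more), giving $\sum_v \xi(v)=O(n\log n)$, i.e.\ $\xiavg=O(\log n)$.

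\textbf{Efficiency and the main obstacle.} Each layer step must, for every blue cluster, compare the cluster's current size with the size of its next would-be layer; this is an aggregation along that cluster's Steiner tree, which has depth $O(\log^3 n)$ and, by the overlap bound, loads each edge of $G$ only $O(\log n)$ times, so one step costs $\polylog(n)$ \congest rounds. Multiplying by $O(\log^2 n)$ steps per phase, $O(\log n)$ phases, and the congestion/pipelining accounting of \cite{rozhovn2020polylogarithmic} yields the stated $O(\log^6 n)$ deterministic rounds. The hard part is exactly this efficient \emph{deterministic} implementation: the combinatorics above (separation, $n/2$-coverage) is routine, but one must keep the Steiner trees simultaneously shallow ($O(\log^3 n)$) \emph{and} low-congestion ($O(\log n)$) while clusters repeatedly merge and split across the graph, and while every ``has this cluster stopped growing?'' test is a global count inside a cluster whose members are scattered over $G$; threading both the radius bound and the per-edge congestion bound through all $b$ phases is the technical heart of the Rozho\v{n}--Ghaffari argument and is where essentially all the effort goes.
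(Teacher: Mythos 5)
This theorem is not proved in the paper at all --- it is imported verbatim from Rozho\v{n}--Ghaffari (with the footnoted remark that building a single clustering rather than a full network decomposition saves one log factor) --- and your sketch is essentially a reconstruction of that cited clustering procedure: bitwise phases over $O(\log n)$-bit labels, per-cluster grow-or-stop with the $\tfrac{1}{2b}$ threshold, thin-layer deletions giving the $n/2$ coverage bound, the $O(\log^2 n)$-steps-per-phase radius argument yielding weak diameter $O(\log^3 n)$ and overlap $O(\log n)$, all run on a constant power of $G$ to turn non-adjacency into $10$-separation. Two imprecisions to fix if you write it out: the stopping rule must be applied by each blue cluster individually (as your coverage and efficiency paragraphs in fact assume), not to the whole blue region of a group, since a global layer-versus-region count would require $\Theta(D)$ rounds rather than aggregation inside polylog-depth Steiner trees; and simulating the construction on $G^{9}$ is not an ``$O(1)$ overhead per step'' translation in \congest --- controlling the congestion of the join requests and in-tree aggregations on the power graph is precisely part of the Rozho\v{n}--Ghaffari implementation and costs polylogarithmic, not constant, overhead per step.
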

We remark that the algorithm of \cite{rozhovn2020polylogarithmic} also works when each node is actually a cluster of diameter at most $r$, in which case the round complexity is $O(r \log^6 n)$. Therefore, \cref{thm:final-result-weak-clustering-to-utrasparse-unweighted} implies that we can compute an unweighted spanner on such a graph with $O(n \log n)$ edges and stretch $O(\log^3 n)$ in $O(r \log^7 n)$ \congest rounds. Hence, the folklore reduction from weighted to unweighted implies that we can compute a weighted spanner with $O(n \log n \log (U + 1))$ edges and stretch $O(\log^3 n)$ in $O(r \log^8 n)$ \congest rounds (as we are on a cluster graph, we need to compute the $O(\log (U+1))$ unweighted spanners one after the other). This algorithm together with the reduction from ultra-sparse to sparse (\cref{thm:p1-sparse-ultrasparse}) then implies the \congest  part of \cref{thm:result-weighted-ultra-randomized-congest}.
The PRAM bound follows from the fact that (1) the algorithm of \cite{rozhovn2020polylogarithmic} can be implemented in near-linear work and polylogarithmic time, (2) the sparse spanner can therefore be computed in near-linear work and polylogarithmic depth, and (3) the reduction from sparse to ultra-sparse can be implemented in near-linear work and polylogarithmic time. 

\end{proof}

%%%%%%%%%%%%%%%%%%%%%%%%%%%%%%%%%%%%%%%%%%%%%%%%%%%%%%%%%%%
%%%%%%%%%%%%%%%%%%%%%%%%%%%%%%%%%%%%%%%%%%%%%%%%%%%%%%%%%%%
%%%%%%%%%%%%%%%%%%%%%%%%%%%%%%%%%%%%%%%%%%%%%%%%%%%%%%%%%%%
%%%%%%%%%%%%%%%%%%%%%%%%%%%%%%%%%%%%%%%%%%%%%%%%%%%%%%%%%%%

\section{Connectivity Certificates}
\label{sec:conn-certificate}
A simple and common way to find a $k$-connectivity certificate of a graph is to repeatedly extract a skeleton from $G$ for $k$ times. More concretely, in the $i$-step, we remove a skeleton $H_i$ from $G \setminus \cup_{j=1}^{i-1} H_j$. At the end, $H = \cup_{i=1}^{k} H_i$ should be a $k$-connectivity certificate of $G$. Indeed, the output provides a stronger guarantee: It contains either all or at least $k$ edges from each cut in $G$. For the proof of this, consider an arbitrary cut. If each $H_i$ has an edge in the cut, then $H$ has at least $k$ such edges as $H_i$s are disjoint. So suppose there is an $i$ such that $H_i$ has no edge in the cut. Since $H_i$ is a skeleton, there should be no such edge in $G\setminus \cup_{j=1}^{i-1} H_j$, implying that all edges in this particular cut should be in $\cup_{j=1}^{i-1} H_j$.

This algorithm gives us a certificate with size at most $(n-1)k$ if we extract a forest at each step. Finding a forest in the distributed setting needs $\Omega(D + \sqrt{n})$ many rounds, where $D$ denotes the diameter of network $G$. Instead of extracting a forest per step, we can extract an ultra-sparse spanner at each step, as any spanner is also a skeleton by definition. Using \Cref{thm:ultrasparse}, we can find a spanner with size $n(1 + \eps)$ in $\frac{\polylog(n)}{\eps}$ rounds.

\begin{theorem}
\label{thm:certificate-smallk}
There is a deterministic distributed algorithm that computes a $k$-connectivity certificate of $G$ with at most $nk(1 + \eps)$ edges in $\frac{k\polylog(n)}{\eps}$ rounds. Moreover, the output has either all or at least $k$ edges of every cut in $G$.
\end{theorem}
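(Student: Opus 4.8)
The plan is to follow the generic peeling strategy sketched just before the statement, instantiating the ``skeleton'' at each step with the deterministic ultra-sparse spanner of \Cref{thm:ultrasparse}; note that the stretch of that spanner plays no role here, only the fact that it is a connectivity-preserving spanning subgraph. Set $t = \max(1, \lceil 1/\eps \rceil)$, let $G_1 = G$, and for $i = 1, \dots, k$ run the algorithm of \Cref{thm:ultrasparse} on $G_i$ with parameter $t$ to obtain a spanning subgraph $H_i$ with at most $n + n/t \le n(1+\eps)$ edges, then set $G_{i+1} = G_i \setminus E(H_i)$; output $H = \bigcup_{i=1}^k H_i$. The only structural fact I will use about each $H_i$ is that it is a \emph{skeleton} of $G_i$: for every vertex set $S$, if some edge of $G_i$ crosses the cut $(S, V \setminus S)$, then some edge of $H_i$ does as well. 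This follows from the spanner property, since a finite stretch bound keeps every connected component of $G_i$ connected in $H_i$, so the two endpoints of a crossing edge of $G_i$ are joined by a path in $H_i$, and any such path must use an edge crossing the cut.

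Next I would verify the ``all or at least $k$'' guarantee. Fix a cut $(S, V \setminus S)$ and let $c$ be its number of edges in $G$. If every $H_i$ contains a crossing edge, then, as $E(H_1), \dots, E(H_k)$ are pairwise disjoint by construction (each $H_i$ is a subgraph of $G_i = G \setminus \bigcup_{j<i} E(H_j)$), the union $H$ contains at least $k$ crossing edges. Otherwise some $H_i$ has no crossing edge; since $H_i$ is a skeleton of $G_i$, this forces $G_i$ to have no crossing edge either, and because $G_i = G \setminus \bigcup_{j<i} E(H_j)$, every crossing edge of $G$ already lies in $\bigcup_{j<i} H_j \subseteq H$, so $H$ retains all $c$ of them. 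Thus $H$ keeps $\min(c,k)$ edges of every cut. Specializing: if $G$ is $k$-edge-connected then every cut of $G$ has $c \ge k$ edges, hence every cut of $H$ has at least $k$ edges and $H$ is $k$-edge-connected; combined with the trivial reverse direction ($H \subseteq G$), $H$ is a $k$-connectivity certificate.

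For the resource bounds, each of the $k$ phases adds at most $n(1+\eps)$ edges, so $|E(H)| \le nk(1+\eps)$, and each phase runs \Cref{thm:ultrasparse} in $t \cdot \polylog(n) = \polylog(n)/\eps$ rounds, for a total of $\frac{k\,\polylog(n)}{\eps}$ rounds; determinism is inherited from \Cref{thm:ultrasparse}. Two routine implementation points remain: the residual graph $G_i$ may be disconnected, but \Cref{thm:ultrasparse} still outputs a valid spanner (applied per connected component) with at most $n+n/t$ edges in total; and running the spanner algorithm on $G_i$ only requires each node to disregard its edges already assigned to $H_1, \dots, H_{i-1}$, which it knows locally, while communication continues over $G$. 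There is no genuinely hard step; the one thing worth stating carefully is the skeleton property of a spanner and its use in the ``otherwise'' branch of the cut argument, since that is exactly what turns a union of $k$ connectivity-preserving subgraphs into an \emph{exact} $k$-connectivity certificate, with no loss in the connectivity parameter.
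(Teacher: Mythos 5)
Your proposal is correct and follows essentially the same route as the paper: peel off $k$ ultra-sparse spanners from successive residual graphs via \Cref{thm:ultrasparse} with $t = \Theta(1/\eps)$, observe that each spanner is a skeleton of its residual graph, and run the same two-case cut argument (every $H_i$ crosses the cut, or some $H_i$ does not and then all crossing edges already lie in earlier layers). The extra implementation remarks you add (disconnected residual graphs, communicating over $G$ while ignoring already-extracted edges) are consistent with the paper's intent and do not change the argument.
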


As the $k$-connectivity certificate of each $k$-edge-connected graph has at least $k n/2$ edges, the theorem above gives us a $2(1 + \eps)$ approximation. The problem though is that the number of rounds is linear in $k$. To overcome this issue, we use random sampling of Karger~\cite{karger1999random} and give a randomized algorithm that runs in $\polylog(n)$ rounds for any $k$.

According to~\cite[Theorem 2.1]{karger1999random}, if we sample each edge of a $k$-edge-connected graph with probability $p = O(\frac{\log n}{k \eps^2})$, then with high probability, the number of sampled edges in every cut is around its expectation up to a factor $\eps$. More concretely, for a cut with $c$ edges, at least $pc(1-\eps)$ and at most $pc(1 + \eps)$ edges are sampled.

Suppose $G$ is $k$-edge-connected. We split its edges randomly into $Q = \Theta(\frac{k \eps^2}{\log n})$ partitions. So with high probability, every cut in $G$ with size $c$ has size $\frac{c}{Q}(1 \pm \eps)$ edges in every partition (Since there are $Q = O(n)$ partitions, we can take union bound on $Q$ applications of the sampling theorem). 

We compute a $\left(k' = \frac{k(1 + \eps)}{Q(1 - \eps)}\right)$-connectivity certificate for each partition simultaneously using \cref{thm:certificate-smallk}. This takes $\frac{k'\polylog(n)}{\eps} = \frac{\polylog(n)}{\eps^3}$ many rounds. We show that the union of these certificates is a $k$-connectivity certificate for $G$. For this, consider an arbitrary cut $C$ with $|C|$ edges. Recall that $G$ is $k$-edge-connected, so $|C| \geq k$. To show that at least $k$ edges of $C$ is in the union of these $Q$ certificates, we consider two cases based on the size of $C$:
\begin{itemize}[-]
    \item If $|C|$ is at most $\frac{k}{1 - \eps}$, then each partition contains at most $\frac{|C|(1 + \eps)}{Q} \leq \frac{k(1 + \eps)}{Q(1 - \eps)} \leq k'$ edges of $C$. From \cref{thm:certificate-smallk}, all of them should appear in the certificate of that partition and so, all the edges of $C$ is in our output. 
    \item If $|C|$ is more than $\frac{k}{1 - \eps}$, then each partition has at least $\frac{|C|(1 - \eps)}{Q} \geq \frac{k}{Q}$ edges of $C$. Since $k' \geq \frac{k}{Q}$, the $k'$-connectivity certificate of each partition contains at least $\frac{k}{Q}$ from $C$ implying that there is at least $k = \frac{k}{Q} \cdot Q$ edges of $C$ in our output.
\end{itemize}
This concludes that if $G$ is $k$-edge-connected, then the union of these $Q$ certificates is also $k$-edge-connected. If $G$ is not $k$-edge-connected, then obviously any subgraph of it is not $k$-edge-connected. So our output is always a $k$-connectivity certificate of $G$. The number of edges in our certificate is $Q\cdot(nk'(1 + \eps)) = nk\frac{(1 + \eps)^2}{1-\eps}$. Assuming $\eps < \frac{1}{2}$, we have $\frac{1}{1 - \eps} \leq (1 + 2\eps)$ and $(1 + \eps)^2 \leq (1 + 3\eps)$. So the total number of edges in the output is $nk(1 + 8\eps)$. Replacing $\eps$ with $\frac{\eps}{8}$ proves \cref{thm:certificate-largek}.
}

\end{document}